\definecolor{light-gray}{gray}{0.8}
\def\nbx{{\mathbf{x}}}
\def\nb0{{\mathbf{0}}}
\def\nb1{{\mathbf{1}}}
\def\ncalB{{\mathcal{B}}}
\def\ncalK{{\mathcal{K}}}
\def\ncalL{{\mathcal{L}}}
\def\ncalN{{\mathcal{N}}}
\def\ncalS{{\mathcal{S}}}
\def\nbbE{{\mathbb{E}}}
\def\nbbP{{\mathbb{P}}}
\def\nbbR{{\mathbb{R}}}
\def\nrmc{{\rm c}}
\def\nrmd{{\rm d}}
\def\nrmm{{\rm m}}
\newtheorem{lemma}{Lemma}
\newtheorem{thm}{Theorem}
\newtheorem{ndef}{Definition}
\newtheorem{prop}{Proposition}
\newtheorem{cor}{Corollary}
\newtheorem{remark}{Remark}
\newtheorem{assumption}{Assumption}
\def\figref#1{Fig.\,\ref{#1}}%
\def\E{\mathbb{E}}
\def\P{\mathbb{P}}
\def\pc{\mathtt{P_c}}
\def\R{\mathbb{R}}
\def\T{\beta}							
\def\sir{\mathtt{SIR}}
\def\ase{\mathtt{ASE}}
\def\calL{\mathcal{L}}
\def\Bx{{\mathcal{B}}^x}
\def\Bxx{{\mathcal{B}}^{x_0}}
\def\jx{y}
\def\Nx{{\mathcal{N}}^x}
\def\Nxo{{\mathcal{N}}^{x_0}}
 \def\yj{y}
 \def\yjx{y}
 \def \tx {y_0}
 \def \htx {h_0}
\def\rx{z_{1}}
\def\ry{z_{2}}
\def \hyxx {h_{y_{x_0}}}
\def \hyx {h_{y_x}}
\begin{document}
\title{Modeling and Performance Analysis of Clustered Device-to-Device Networks}
\author{Mehrnaz Afshang,~\IEEEmembership{Student Member,~IEEE,},
{Harpreet S. Dhillon,~\IEEEmembership{Member,~IEEE}}
 and
{Peter Han Joo Chong,~\IEEEmembership{Member,~IEEE}}
\thanks{M.~Afshang is with Wireless@VT, Department of ECE, Virgina Tech, Blacksburg, VA, USA and INFINITUS@NTU, Department of EEE, Nanyang Technological University, Singapore.
Email: mehrnaz@vt.edu. H. S.~Dhillon  is with  Wireless@VT, Department of ECE, Virgina Tech, Blacksburg, VA, USA. Email: hdhillon@vt.edu. P. H. J.~Chong is with INFINITUS@NTU, Department of EEE, Nanyang Technological University, Singapore. Email: Ehjchong@ntu.edu.sg.}
\thanks{This paper will be presented in part at the IEEE Globecom, San Diego, CA, 2015~\cite{AfsDhiC2015}. \hfill Last updated: \today.}}
\maketitle
\vspace*{-10pt}
\begin{abstract}
Device-to-device (D2D) communication enables direct communication between proximate devices thereby improving the overall spectrum utilization and offloading traffic from cellular networks. This paper develops a new spatial model for D2D networks in which the device locations are modeled as a Poisson cluster process. Using this model, we study the performance of a typical D2D receiver in terms of coverage probability under two realistic content availability setups: (i) content of interest for a typical device is available at a device chosen uniformly at random from the same cluster, which we term \emph{uniform content availability}, and (ii) content of interest is available at the $k^{th}$ closest device from the typical device inside the same cluster, which we term \emph{$k$-closest content availability}. Using these coverage probability results, we also characterize the area spectral efficiency ($\ase$) of the whole network for the two setups. A key intermediate step in this analysis is the derivation of the distributions of distances from a typical device to both the intra- and inter-cluster devices. Our analysis reveals that an  optimum number of D2D transmitters must be simultaneously activated per cluster in order to maximize $\ase$. This can be interpreted as the classical tradeoff between more aggressive frequency reuse and higher interference power. The optimum number of simultaneously transmitting devices and the resulting $\ase$ increase as the content is made available closer to the receivers. Our analysis also quantifies the best and worst case performance of clustered D2D networks both in terms of coverage and $\ase$. 
 \end{abstract}
\IEEEpeerreviewmaketitle
\begin{IEEEkeywords}
Device-to-device (D2D) communication, clustered D2D network, Poisson cluster process, Thomas cluster process, stochastic geometry.
\end{IEEEkeywords}
\section{Introduction}
\IEEEPARstart{E}{nabling} direct communication between devices located in close proximity, termed device-to-device (D2D) communications, has several benefits compared to the conventional approach of communicating through a base station in a cellular network \cite{feng2014device,asadi2013survey,tehrani2014device,doppler2009device}. First, the spectral efficiency of the direct link is typically much higher due to a smaller link distance. Second, this circumvents the need to establish an end-to-end link through a base station, thereby offloading traffic from cellular networks. Third, while the D2D network can be visualized as an {\em ad hoc} network, it incurs  a much lower protocol overhead due to the assistance it gets from the existing cellular network. All these benefits  make it an attractive component for both the current 4G and the future 5G networks \cite{doppler2009device,lin2014overview,song2015wireless}. Clearly, {\em content centric} nature of D2D communication opens up several exciting possibilities that were not quite possible with traditional cellular architecture. This is primarily driven by the spatiotemporal correlation in the content demand \cite{cha2007tube,cheng2008statistics,richier2014modelling}. In particular, when a device downloads a {\em popular} file, it can deliver it locally to its proximate devices whenever they need them~\cite{zhang2014social,6787081,ji2013wireless}. We term each such set of proximate devices as a {\em cluster}. The performance of a typical D2D link within each cluster will mainly depend upon where the content of interest is available with respect to the typical receiver and the number of other D2D links active in the network. Comprehensive modeling and performance analysis of this clustered D2D network using tools from stochastic geometry is the main goal of this paper.%
  \subsection{Motivation and Related Work} 
Modeling and analysis of D2D communication has taken two main directions in the literature. The first one focuses on characterizing the scaling of per-device throughput as a function of the network size; see \cite{6787081,ji2013wireless,ji2014fundamental} for a small subset. To maintain analytical tractability, {\em protocol model} is typically assumed under which the transmission between two devices is successful only if (i) the distance between them is smaller than a certain predefined value, and (ii) there are no other active transmitters in the immediate neighborhood of the receiver. While these assumptions are somewhat restrictive, more general results are typically obtained by means of extensive simulations. The second direction, which is also more relevant to our work, focuses on characterizing metrics, such as the distribution of per-device throughput and coverage probability, using tools from stochastic geometry under more general physical layer models in which the metrics are defined in terms of the actual received powers from the desired and interfering devices, as opposed to Euclidean distances that appear in the protocol model discussed above. While stochastic geometry has successfully been applied to study various aspects of {\em ad hoc} and cellular networks over the past decade; see~\cite{haenggi2012stochastic,baccelli2009stochastic,dhillon2012modeling,mukherjee2012distribution,novlan2013analytical,elsawy2014stochastic} for a small subset, it has also been used more recently to study D2D networks~\cite{lin2013comprehensive,elsawy2014analytical,andreev2014analyzing,feng2014tractable,sun2014d2d,george2014analytical,sakr2014cognitive,mungara2014spatial,
lin2013modeling}. We discuss these works in more detail next.
 

Mode selection in D2D-enabled uplink cellular networks where traffic can be offloaded from the cellular network to the D2D network is studied in \cite{lin2013comprehensive,elsawy2014analytical,andreev2014analyzing}. Interference management between the cellular and D2D networks is investigated in~\cite{feng2014tractable,sun2014d2d,george2014analytical,sakr2014cognitive,mungara2014spatial}. The performance of the multicast transmission wherein each D2D transmitter (D2D-Tx) has a common massage for all the intended D2D receivers (D2D-Rxs) inside the cluster is analyzed in~\cite{lin2013modeling}.  
The common approach in all these works is to model the locations of the the D2D-Txs as a Poisson Point Process (PPP) while two approaches are considered for modeling the locations of the D2D-Rxs. In the first approach, to lend analytical tractability, the network is modeled using a Poisson Dipole Process (PDP) where the D2D-Rxs are located at a fixed distance from the D2D-Txs~\cite{feng2014tractable,sun2014d2d,george2014analytical,sakr2014cognitive,mungara2014spatial,lin2013comprehensive}. Although this is a good first order model, the assumption of fixed link distance is quite restrictive. This assumption is relaxed 
 by assuming that the intended D2D-Rx is uniformly distributed within a circle around its serving D2D-Tx \cite{elsawy2014analytical ,andreev2014analyzing,lin2013modeling}. However, neither of these stochastic geometry-based approaches captures the possibility of having multiple proximate devices any of which can act as a serving device  for a given device, which is quite fundamental to D2D networks \cite{6787081,zhang2014social,ji2013wireless}. In this paper, we address these shortcomings by developing a new and more realistic spatial model for D2D networks in which the devices form clusters. We consider {\em out-band} D2D in which D2D and cellular transmissions do not interfere with each other. 
 More details of the model along with the other main contributions of this paper are provided next.
 %

%

\subsection{Contributions and Outcomes}
%

{\em Realistic tractable model for D2D networks.} We develop a new and more realistic way of modeling D2D networks in which the device locations are modeled as a Poisson cluster process, in particular a variant of a Thomas cluster process~\cite{haenggi2012stochastic}. This is unlike the popular approaches where the device locations are assumed to be uniform over the plane, such as in the PPP and PDP models discussed above. The proposed model captures the fact that a given device typically has multiple proximate devices any of which can potentially act as a serving device. Using tools from stochastic geometry, we characterize the performance of this D2D network for two content availability scenarios: (i) {\em uniform content availability}, where the content of interest for a typical device is available at a device chosen uniformly at random from the same cluster, and (ii) {\em $k$-closest content availability}, where the content of interest is available at the $k^{th}$ closest device to a typical device inside the same cluster. It should be noted that while Poisson cluster process has been used in the literature to model wireless networks, in particular see \cite{ganti2009interference}, the performance analysis is usually performed at a point that may not necessarily be a part of the point process. Unlike this scenario, we perform analysis at a {\em typical device}, which by definition is a part of the cluster process. This setup brings forth new technical challenges, e.g., the need to characterize the distribution of distances from the typical device to other devices, as discussed below.
 
   \begin{figure}[t!] 
\centering{
        \includegraphics[width=.50\textwidth]{./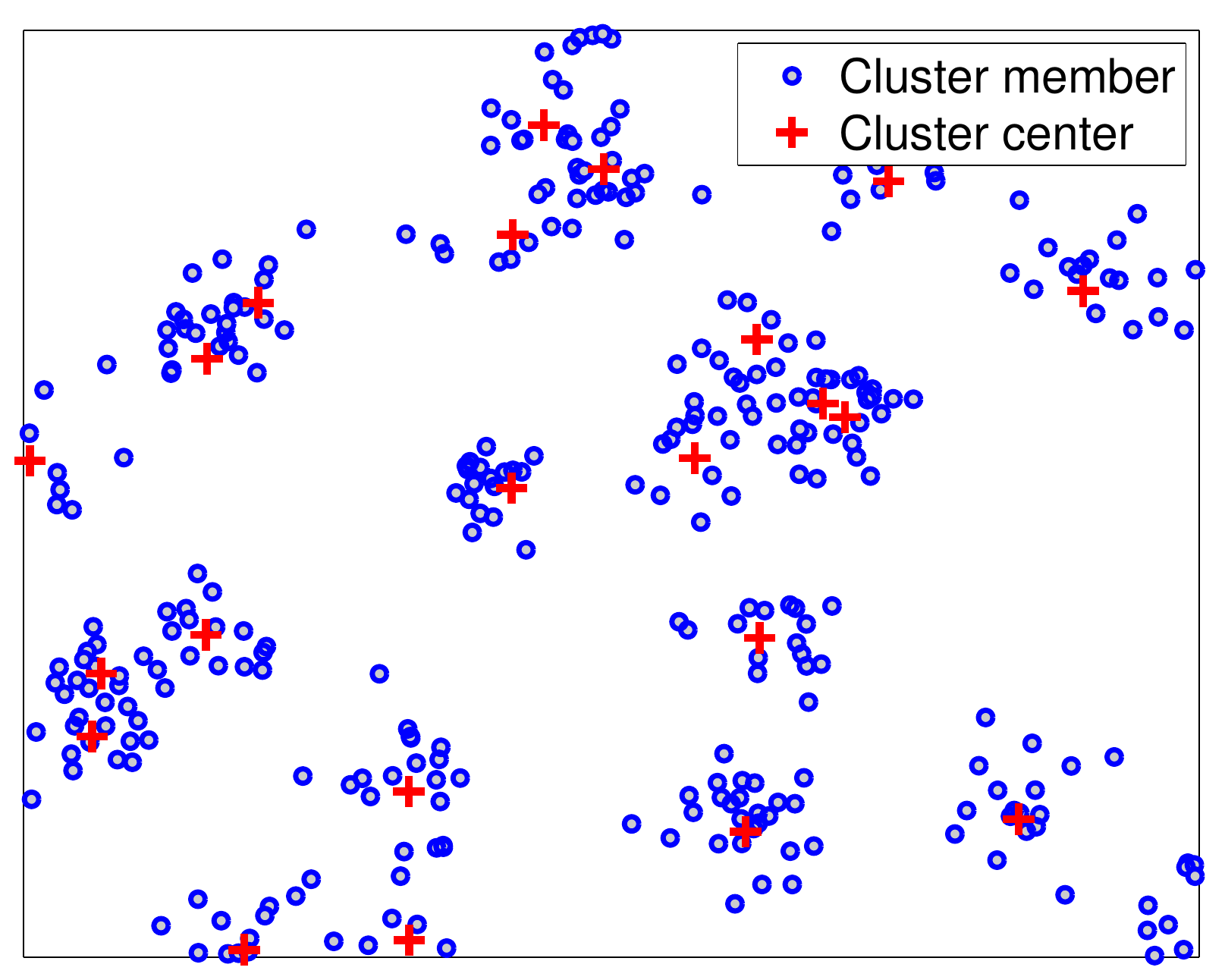}
              \caption{Proposed D2D cluster model where devices are normally distributed around each cluster center.}
                \label{Fig: Network Topo}
                }
\end{figure}

{\em New distance distributions enabling $\ase$ and coverage probability analysis.} 
We derive exact expressions for coverage probability of a typical device and $\ase$ of the whole network for the two scenarios described above. Several closed-form bounds and approximations are also derived. As intermediate results, we characterize the distributions of the distances of the typical device to its serving device and intra- and inter-cluster interfering devices in the two cases. The key enabler of our analysis is the observation that conditioned on the appropriate random variables, the distances from a typical device to any other randomly chosen device in a Thomas cluster process are independent and Rician distributed. The $k$-closest content availability case additionally requires the use of order statistics in order to characterize the serving distance distribution and a more careful treatment of the intra-cluster interference field. Using these distributions, we derive Laplace transforms of both intra- and inter-cluster interference powers, using which the coverage probability and $\ase$ are characterized. %

{\em System design insights.} Our analysis leads to several system design guidelines. First, it reveals the existence of the optimal number of links that must be activated per cluster in order to maximize $\ase$. This can be interpreted as the classical tradeoff between more aggressive frequency reuse and higher interference power. The optimum number of simultaneously active links per cluster and the resulting $\ase$ increase as the content is made available closer to the receivers. For typical operational regimes of interest for D2D networks, our results reveal that significant gains can be achieved by activating optimum number of links compared to strictly orthogonal strategy in which only one link per cluster is active. The $k^{th}$ closest content availability strategy also allows to characterize the best and worst case performances of a clustered D2D network in terms of coverage probability and $\ase$ by tuning the value of $k$, where the best and worst cases correspond to the when the content is available at the closest and farthest device, respectively. 

 
\section{System Model} \label{Sec: System Model}

We assume that each device has a certain content that can be requested by the other devices in the {\em same} cluster and the devices across clusters need not communicate. This can be justified in a practical setup from multiple perspectives, two of which are: (i) the inter-cluster distances will be typically much larger than the intra-cluster distances between devices, thus making it easier to communicate within a given cluster, and (ii) the devices in one cluster may not have information that is of interest to the devices in the other clusters. For example, devices forming a cluster in a sports bar are more likely to be interested in sports-related content as opposed to devices in an academic setting, such as a library, where the nature of {\em popular} content may be entirely different. While the tools developed in this paper can be extended to handle the case where communication across clusters is allowed, it is not in the scope of this paper and is left as a promising future work. We now discuss the key modeling details for this setup.

\subsection{Spatial Setup and Key Assumptions}
The locations of the devices are modeled by a {\em Poisson cluster process}, where the parent point process is modeled by a PPP $\Phi_{\nrmc}$ with density $\lambda_{\nrmc}$, and the offspring point processes (one per parent) are conditionally independent~\cite{DalVerB2003}. The union of all the offspring points constitutes a Poisson cluster process. The parent and offspring points will be henceforth referred to as the {\em cluster centers} and the {\em cluster members} (or simply {\em devices}), respectively. The cluster members are assumed to be independent and identically distributed (i.i.d.) according to a symmetric normal distribution with variance $\sigma^2$  around each cluster center $x \in \Phi_{\nrmc}$. Therefore, the density function of the device location $y \in \nbbR^2$ relative to a cluster center is
\begin{equation}\label{Eq: notmaldist}
   f_Y(\yj)= \frac{1}{2 \pi \sigma^2}\exp \left(-\frac{\|\yj\|^2}{2 \sigma^2}\right).
\end{equation}
If the number of devices in each offspring process, i.e., cluster, were Poisson distributed, this process is simply a {\em Thomas cluster process}~\cite{haenggi2012stochastic}. However, to simplify certain {\em order statistics} arguments in the sequel, we assume that the total number of devices per cluster is fixed and equal to $N$. As will be evident from the discussion below, the number of simultaneously active devices will still be different across clusters, thereby providing sufficient generality to the model.%



The proposed model is illustrated in \figref{Fig: Network Topo}. The set of all devices in a cluster $x \in \Phi_{\nrmc}$, denoted by $\Nx$, is partitioned randomly into two subsets: (i) set of possible {\em transmitting devices} denoted by $\Nx_{\rm t}$, and (ii) set of possible {\em receiving devices} denoted by $\Nx_{\rm r}$. The set of simultaneously transmitting devices in this cluster is denoted by $\Bx \subseteq \Nx_{\rm t}$, where $|\Bx|$ is assumed to be Poisson distributed with mean $\bar{m}$ conditioned on $|\Bx| \leq |\Nx_{\rm t}|$. Note that in the limiting case, about half of the devices in each cluster will transmit to the other half. Therefore, for notational convenience, we assume that the total number of transmitting devices per cluster is limited to $M=N/2$. This can be easily relaxed in case one wants to allow all $N$ devices to be transmitting in a certain application. Without loss of generality, we perform analysis for a {\em typical device}, which is a randomly chosen device in a randomly chosen cluster, termed {\em representative cluster}, inside the network. Assuming the cluster center of the representative cluster to be located at $x_0 \in \Phi_c$, the typical device by definition is in $\Nxo_{\rm r}$. Due to the stationarity of this process, we assume that the typical device is located at the origin. Since the performance of the D2D link to this receiver depends upon where in the cluster is the data available, we consider following two setups to model content availability in the representative cluster:
\begin{enumerate}
\item \emph{Uniform content availability}. The content of interest for the typical device is available at a device that is chosen uniformly at random from the set $\Nxo_{\rm t}$ in the same cluster.
\item \emph{$k$-closest content availability.} The content of interest to the typical device is located at its  $k^{th}$ closest device from the set $\Nxo_{\rm t}$ in the same cluster. By tuning the value of $k$, the content can be biased to lie closer (small $k$) or farther (large $k$) from the typical device.
\end{enumerate}
After fixing the location of the serving device as per one of the strategies above, the intra-cluster interfering devices are sampled uniformly at random from the remaining $M-1$ devices in $\Nxo_{\rm t}$ in the representative cluster. Since a representative cluster has a serving device by definition, for concreteness we assume that the number of interfering devices is Poisson distributed with mean $\bar{m}-1$, which means that a total of $\bar{m}$ devices are active on average in this cluster. Similarly, the inter-cluster interfering devices are sampled uniformly at random from the set of transmitting devices of each cluster, such that the number of active devices in each cluster is Poisson distributed with mean $\bar{m}$ conditioned on the total being less than $M$.%

\subsection{Channel Model}
Recall that the cluster center of the representative cluster is assumed to be located at $x_0 \in \Phi_{\nrmc}$. Assume that D2D-Tx of interest ({\em serving} transmitter) is located at $\tx$ inside that cluster. The distance of this D2D-Tx from the typical device (D2D-Rx of interest) is denoted by $r=\|\tx+x_0\|$, where $r$ is a realization of random variable $R$ whose distribution depends upon the content availability strategy as discussed in detail in the sequel. Assuming transmit power of each device to be $P_{\rm d}$, the received power at D2D-Rx of interest is
\begin{equation}\label{Eq: typical received power}
    P= P_\nrmd \htx \|x_0+\tx\|^{-\alpha},
\end{equation}
where $\htx\sim \exp(1)$  is i.i.d. exponential random variable which models Rayleigh fading and $\alpha$ is path loss exponent. Incorporating shadowing is left as a promising direction of future work. To define interference field, it is useful to define the set of all simultaneously active D2D-Txs as: 
\begin{equation}\label{Eq: complete process }
    \Psi_\nrmm= \cup_{x \in \Phi_\nrmc} \Bx,
\end{equation}
where recall that $\Bx$ is the set of simultaneously active D2D-Txs inside a cluster $x \in \Phi_{\nrmc}$. In this network, the total interference  caused at the D2D-Rx of interest can be written as the sum of two independent terms: (i)  intra-cluster interference caused by the interfering D2D-Txs inside the representative cluster, and (ii)  inter-cluster interference caused by simultaneously active D2D-Txs outside the representative cluster. Recalling that the D2D-Tx of interest is located at $\tx$  with respect to the cluster center $x_0$, the intra-cluster interference power can be expressed as: %
\begin{align}\label{Eq: Intra_cluster_Int_typical}
        I_\mathrm{Tx-cluster}=\sum_{\jx\in \Bxx \setminus \tx} P_\nrmd \hyxx\|x_0+\yjx\|^{-\alpha}.
\end{align}
Similarly, the  interference  from the simultaneously active D2D-Txs outside the representative cluster, $x_0$, at the  D2D-Rx of interest can be expressed as:
\begin{align}\label{Eq: Intera_cluster_Int_typical}
        I_\mathrm{\Psi_\nrmm\setminus Tx-cluster}=\sum_{x\in \Phi_{\nrmc}\setminus x_0}\sum_{\jx \in \Bx} P_{\nrmd}\hyx\|x+\yj\|^{-\alpha}.
\end{align}
Denoting the total interference power experienced by the typical device by $ I_{\Psi_\nrmm}$ and recalling the serving distance to be $r=\|\tx+x_0\|$, the $\sir$ experienced by the typical device is
\begin{align}\label{Eq: SINR}
\sir(r)= \frac{P_\nrmd \htx r^{-\alpha}}{I_{\Psi_\nrmm}} = \frac{P_\nrmd \htx r^{-\alpha}}{I_{\Psi_\nrmm \setminus \mathrm{Tx-cluster}}+I_\mathrm{Tx-cluster}}.
\end{align}
For notational simplicity, we assume that the system operates in the interference limited regime, i.e., the background noise is negligible compared to the  interference and is hence ignored. This means that the transmit power term cancels in the $\sir$ expression above and can hence be ignored, i.e., we can set $P_{\nrmd} = 1$ without any loss of generality. For a quick reference, the notation used in this paper is summarized in Table~\ref{table:notation}.



\section{Distribution of the Distances}
This is the first main technical section of the paper where we characterize the distributions of the distances from the typical device to various intra- and inter-cluster devices. These distance distributions will be used in the analysis of coverage probability and $\ase$ in the next section. We first focus on the {\em unordered case} in which we characterize the distribution of the distances from a typical device to a device chosen uniformly at random in a given cluster $x\in \Phi_\nrmc$. Using this, we characterize the inter-cluster distance distributions for both the content availability cases, and serving and intra-cluster distance distributions for the {\em uniform content availability} case. We then analyze the {\em ordered case} in which the distances from a typical device to the devices of a given cluster $x\in \Phi_\nrmc$ are ordered in the increasing order. Using this, we characterize the serving and intra-cluster distances to the typical device in the {\em $k$-closest content availability} case.

Before going into more technical details, we define the functional forms of the probability density functions (PDFs) of the Rayleigh and Rician distributed random variables, which will significantly simplify the notation in the rest of this section. 
\begin{ndef}[Rayleigh distribution]
The PDF  of the Rayleigh distributed random variable is
\begin{equation}
\mathtt{Raypdf}(a;\sigma^2)= f_A(a)= \frac{a}{ \sigma ^2}\exp\left(-\frac{a^2}{2 \sigma^2}\right), \quad a>0,
\end{equation}
 where $\sigma$ is the scale parameter of the distribution.
\end{ndef}
\begin{ndef}[Rician distribution]
The PDF of the Rician distributed random variable is
\begin{equation}
\mathtt{Ricepdf}(a,b;\sigma^2)= f_A(a|b)=\frac{a}{ \sigma^2} \exp\left(-\frac{a^2+b^2}{2 \sigma^2}\right) I_0\left(\frac{a b}{\sigma^2}\right), \quad a>0,
\end{equation}
where $I_0(.)$ is  the modified Bessel function with order zero and  $\sigma$ is the scale parameter.
\end{ndef}

\begin{table}
\centering{
\caption{Summary of notation}
\scalebox{.8}{%
\begin{tabular}{c|c}
  \hline
   \hline
  \textbf{Notation} & \textbf{Description}  \\
     \hline
  $\Phi_\nrmc; \lambda_\nrmc$ & Independent PPP modeling the locations of D2D cluster centers; density of $\Phi_\nrmc$\\
  \hline
 $\Nx$& Set of devices inside the cluster centered at $x \in \Phi_\nrmc$\\
    \hline
  $N=|\Nx|$& Total number of devices per cluster (assumed same for each cluster)\\
  \hline   
 $\Nx_{\rm t}; \Nx_{\rm r}$& Subsets of $\Nx$ denoting the set of possible transmitting and receiving devices\\
    \hline
  $\Bx \subseteq \Nx_{\rm t}$; $\bar{m}$ &Set of simultaneously active devices inside the cluster with mean $\bar{m}$\\
  \hline
      $\ncalS^{x}_{\rm t}$ &Set of distances from the typical device to the inter-cluster devices\\
    \hline
  $\Bxx \subseteq \Nxo_{\rm t}$; $\bar{m}$ &Set of simultaneously active devices inside the cluster with mean $\bar{m}$ in the cluster $x_0 \in \Phi_\nrmc$\\
  \hline
      $\ncalS^{x_0}_{\rm t}$ &Set of distances from the typical device to the intra-cluster devices\\
  \hline
    $\Psi_\nrmm= \cup_{x \in \phi_\nrmc} \Bx$ &Set of all simultaneously active devices in the D2D cluster network\\
  \hline
   $M = |\Nx_{\rm t}|$& Number of maximum possible transmitting devices per cluster, $M=N/2$\\
  \hline
 $\sigma^2$  & Scattering variance of the cluster member locations around each cluster center\\
 \hline
  $P_\nrmd$  & Transmit power of each device engaged in D2D communications\\
 \hline
 $\alpha$  & Path loss exponent for all wireless links; $\alpha>2$\\
 \hline
 $h_{y_x}$  & Channel power gain under Rayleigh fading where $h_{y_x} \sim \exp (1)$ \\
 \hline
 $\pc;\T;\mathtt{ASE}$   & Coverage probability; target $\sir$; area spectral efficiency\\
 \hline
  \hline
\end{tabular} \label{table:notation}
}
} 
\end{table}

\subsection{Distance Distributions in the Unordered Case}
Lets start our discussion with the intra-cluster distances by focusing on the representative cluster. Denote by $\ncalS^{x_0}_{\rm t}$, the set $\{S_i\}_{i=1:M}$ of distances from the typical device to the set of possible transmitting devices $\Nxo_{\rm t}$ in the cluster $x_0 \in \Phi_\nrmc$, where $s_i=\|x_0+y\|$  is the realization of $S_i$. The ordering in this case is arbitrary, which means $S_i \in \ncalS^{x_0}_{\rm t}$ will be interpreted as the distance from the typical device to a device chosen uniformly at random from $\Nxo_{\rm t}$. Whenever this interpretation is clear, we will drop index $i$ from $s_i$ and $S_i$. Characterizing the marginal distribution of $S$ is quite straightforward. Since $x_0$ and $y$ are i.i.d.  Gaussian random variables with variance $\sigma^2$, $x_0 + y$ is also Gaussian with variance $2\sigma^2$. Therefore, $S$ is Rayleigh distributed with probability density function (PDF) 
$f_{S}(s)= \mathtt{Raypdf}(s;2\sigma^2)$.  
However, this does not completely characterize $\ncalS^{x_0}_{\rm t}$ because it doesn't capture  the fact that the distances from intra-cluster devices to the typical device $\{\|x_0+y\|\}$ are correlated due to the common factor $x_0$. That being said, if we condition on the location of the cluster center $x_0$ relative to the typical device, the distances in the set $\ncalS^{x_0}_{\rm t}$ are i.i.d. since the device locations are i.i.d. around the cluster center by assumption. This conditional distribution is characterized in the following Lemma. In the proof, we show that instead of conditioning on the location $x_0$, a ``weaker'' conditioning on the distance $\nu_0 = \|x_0\|$, suffices. Therefore, the statement of the Lemma is presented in terms of $\nu_0$.

 \begin{lemma}[Distribution of i.i.d. sequence $\ncalS^{x_0}_{\rm t}$] \label{lem:Intra_cluster location _typical_general}
Conditioned on the distance $\nu_0 = \|x_0\|$, the PDF of an element $S$ chosen uniformly at random from the i.i.d. sequence $\ncalS^{x_0}_{\rm t}$ is
\begin{align}\label{Eq: Intra_cluster location _typical}
f_{S}(s| x_0) = f_{S}(s| \nu_0)= \mathtt{Ricepdf}(s,\nu_0;\sigma).
\end{align}
\end{lemma}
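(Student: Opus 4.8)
The plan is to identify $S=\|x_0+y\|$ as the magnitude of a two-dimensional Gaussian vector with a \emph{nonzero} mean and then read off the Rician law directly. Recall from \eqref{Eq: notmaldist} that $y$ has the isotropic density $f_Y(y)=\frac{1}{2\pi\sigma^2}\exp(-\|y\|^2/(2\sigma^2))$, i.e.\ its two Cartesian components are i.i.d.\ $\mathcal{N}(0,\sigma^2)$. Since the cluster members are placed i.i.d.\ around $x_0$, fixing the vector $x_0$ makes the elements of $\ncalS^{x_0}_{\rm t}$ i.i.d., so it suffices to derive the marginal law of a single $S=\|x_0+y\|$. The first thing I would establish is that this conditional law depends on $x_0$ only through $\nu_0=\|x_0\|$: because the law of $y$ is rotationally invariant, I may apply the rotation carrying $x_0$ to $(\nu_0,0)$ without altering the distribution of $y$, and hence without altering the distribution of $\|x_0+y\|$. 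This is precisely the ``weaker conditioning'' asserted in the statement, and it is what lets the lemma be phrased in terms of $\nu_0$ rather than the full vector $x_0$.

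Having reduced to $x_0=(\nu_0,0)$, I would write $y=(y_1,y_2)$ with $y_1,y_2\sim\mathcal{N}(0,\sigma^2)$ independent, so that
\begin{equation}
S=\sqrt{(\nu_0+y_1)^2+y_2^2},
\end{equation}
the Euclidean norm of a bivariate Gaussian with mean $(\nu_0,0)$ and covariance $\sigma^2\nbI$. The next step is a change of variables: express the joint density of $(\nu_0+y_1,\,y_2)$ in Cartesian coordinates, pass to polar coordinates $(s,\theta)$ through $\nu_0+y_1=s\cos\theta$, $y_2=s\sin\theta$ (Jacobian $s$), and integrate the angle out over $[0,2\pi)$ to obtain the marginal density of $S$.

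The one genuinely nontrivial step is this angular integral. After expanding the exponent, the cross term produces a factor $\exp\!\big(\tfrac{s\nu_0}{\sigma^2}\cos\theta\big)$, while the remaining factors $\exp\!\big(-\tfrac{s^2+\nu_0^2}{2\sigma^2}\big)$ and $s/\sigma^2$ are independent of $\theta$. I would then invoke the integral representation $I_0(z)=\frac{1}{2\pi}\int_0^{2\pi}e^{z\cos\theta}\,d\theta$ of the zeroth-order modified Bessel function to collapse the angular integral to $2\pi I_0\!\big(\tfrac{s\nu_0}{\sigma^2}\big)$; collecting the prefactors then yields
\begin{equation}
f_S(s\mid\nu_0)=\frac{s}{\sigma^2}\exp\!\left(-\frac{s^2+\nu_0^2}{2\sigma^2}\right)I_0\!\left(\frac{s\nu_0}{\sigma^2}\right)=\mathtt{Ricepdf}(s,\nu_0;\sigma),
\end{equation}
which is exactly \eqref{Eq: Intra_cluster location _typical}. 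The polar change of variables and the recognition of the angular integral as $I_0$ are the standard derivation of the Rician law and are routine; the conceptual crux, and the step I expect to require the most care in exposition, is the rotational-invariance reduction, since it is what justifies conditioning only on the scalar $\nu_0$ and thereby underlies the i.i.d.\ structure used throughout the subsequent interference analysis.
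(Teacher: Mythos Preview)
Your proposal is correct and follows essentially the same route as the paper: condition on $x_0$, pass to polar coordinates (Jacobian $s$), and identify the angular integral as $I_0$. The only cosmetic difference is ordering: you invoke rotational invariance of $y$ up front to set $x_0=(\nu_0,0)$ and obtain the simple cross term $\exp\!\big(\tfrac{s\nu_0}{\sigma^2}\cos\theta\big)$, whereas the paper keeps $x_0=(x_1,x_2)$ general, arrives at $\exp\!\big(\tfrac{s x_1\cos\theta+s x_2\sin\theta}{\sigma^2}\big)$, and only after integrating notes that the result depends on $x_0$ solely through $\nu_0=\sqrt{x_1^2+x_2^2}$ (implicitly using $\frac{1}{2\pi}\int_0^{2\pi}e^{a\cos\theta+b\sin\theta}\,d\theta=I_0(\sqrt{a^2+b^2})$); the substance is identical.
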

\begin{proof}
See Appendix \ref{App: proof of Ricain}.  
\end{proof}

\begin{remark}[Serving and intra-cluster distances for uniform content availability]
As discussed above, the $M$ elements of $\ncalS^{x_0}_{\rm t}$ are i.i.d. with the distribution characterized by Lemma~\ref{lem:Intra_cluster location _typical_general}. For the uniform content availability case, one of the elements of $\ncalS^{x_0}_{\rm t}$ is chosen uniformly at random as the serving distance, and the rest correspond to the distances to the possible intra-cluster interfering devices. Since the elements of $\ncalS^{x_0}_{\rm t}$ were ``unordered'' and the selection of serving device was done uniformly at random, all these distances are i.i.d. and follow Rician distribution given by \eqref{Eq: Intra_cluster location _typical}. The results are stated formally as Corollaries of Lemma~\ref{lem:Intra_cluster location _typical_general} below. 
\end{remark}

\begin{cor}[Uniform content availability: serving distance] \label{lem:Intra_cluster location _typical}
For the uniform content availability case, the conditional PDF of the serving distance $r=\|x_0+\tx\|$, conditioned on the distance $\nu_0 = \|x_0\|$ between the cluster center and the typical device, is $f_R(r|\nu_0)= \mathtt{Ricepdf}(r,\nu_0;\sigma)$.  
  \end{cor}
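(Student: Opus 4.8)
The plan is to deduce the result directly from Lemma~\ref{lem:Intra_cluster location _typical_general}, since the uniform content availability rule is precisely the mechanism that Lemma~\ref{lem:Intra_cluster location _typical_general} was designed to handle. First I would observe that, by definition of the uniform content availability setup, the serving device is the one chosen uniformly at random from the set $\Nxo_{\rm t}$ of possible transmitting devices in the representative cluster. Consequently the serving distance $R=\|x_0+\tx\|$ is exactly the distance from the typical device to that uniformly chosen device; in other words, $R$ is an element $S$ selected uniformly at random from the distance set $\ncalS^{x_0}_{\rm t}=\{S_i\}_{i=1:M}$.

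Next I would invoke Lemma~\ref{lem:Intra_cluster location _typical_general}, which asserts that, conditioned on $\nu_0=\|x_0\|$, the elements of $\ncalS^{x_0}_{\rm t}$ are i.i.d. with common PDF $\mathtt{Ricepdf}(s,\nu_0;\sigma)$. Because these elements are exchangeable (indeed i.i.d.) given $\nu_0$, the marginal law of any fixed index $S_i$ and the law of an index drawn uniformly at random coincide, and both equal $\mathtt{Ricepdf}(\cdot,\nu_0;\sigma)$. Formally, by the law of total probability, $f_R(r|\nu_0)=\sum_{i=1}^{M}\tfrac{1}{M}\,f_{S_i}(r|\nu_0)=\mathtt{Ricepdf}(r,\nu_0;\sigma)$. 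Identifying $R$ with such a uniformly chosen $S$ then yields the claimed expression for $f_R(r|\nu_0)$.

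There is essentially no analytical obstacle here: all the probabilistic content, namely the Rician functional form together with the crucial observation that conditional independence holds after conditioning on the scalar distance $\nu_0$ rather than on the full vector $x_0$, has already been established in Lemma~\ref{lem:Intra_cluster location _typical_general}. The only point requiring a word of care is the bookkeeping step that selecting an element uniformly at random from an i.i.d. collection leaves its marginal distribution unchanged, which is the elementary averaging argument displayed above and uses nothing beyond the i.i.d. property furnished by the Lemma.
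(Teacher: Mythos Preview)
Your proposal is correct and follows essentially the same approach as the paper: the paper likewise observes (in the remark preceding the corollary) that the serving distance is a uniformly random element of the conditionally i.i.d.\ set $\ncalS^{x_0}_{\rm t}$ characterized by Lemma~\ref{lem:Intra_cluster location _typical_general}, and hence inherits the Rician marginal. Your explicit averaging argument $f_R(r|\nu_0)=\sum_{i=1}^{M}\tfrac{1}{M}\,f_{S_i}(r|\nu_0)$ makes the step slightly more formal than the paper's treatment, but the idea is identical.
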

  
 \begin{cor}[Uniform content availability: intra-cluster interferer distance]
  \label{lem:intra_cluster_uniform_interference}
For the uniform content availability case, the distances from the intra-cluster interfering devices to the typical device $\{w=\|x_0+\yjx\|, \forall \jx\in \Bxx \setminus \tx\}$ are conditionally i.i.d., conditioned on $\nu_0 = \|x_0\|$, with each distance following the PDF given by $f_{W}(w| \nu_0)= \mathtt{Ricepdf}(w,\nu_0;\sigma)$.
 \end{cor}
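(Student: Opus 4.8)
The plan is to obtain this corollary directly from Lemma~\ref{lem:Intra_cluster location _typical_general} by exploiting the exchangeability of the conditionally i.i.d. sequence $\ncalS^{x_0}_{\rm t}$. Lemma~\ref{lem:Intra_cluster location _typical_general} already establishes that, conditioned on $\nu_0 = \|x_0\|$, the $M$ distances $\{S_i\}_{i=1:M}$ from the typical device to the devices in $\Nxo_{\rm t}$ are i.i.d. with common marginal $\mathtt{Ricepdf}(\,\cdot\,,\nu_0;\sigma)$. All that remains is to verify that the subcollection of these distances associated with the intra-cluster interferers $\Bxx \setminus \tx$ inherits the same conditional i.i.d. Rician law.

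The crucial observation I would stress is that the partition into serving and interfering devices, as specified in Section~\ref{Sec: System Model}, operates purely on device labels and is therefore independent of the realized distance values $\{S_i\}$. Concretely, the serving device is drawn uniformly at random from the $M$ labels of $\Nxo_{\rm t}$, the interferers $\Bxx \setminus \tx$ are then sampled uniformly at random from the remaining $M-1$ labels, and their number $|\Bxx|-1$ is Poisson with mean $\bar{m}-1$ (conditioned on not exceeding $M-1$). None of these selection rules reference $\{S_i\}$, so conditioned on $\nu_0$ the label-selection mechanism is probabilistically independent of the distance sequence.

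With that independence in hand, I would first fix the interferer count $|\Bxx \setminus \tx| = n$ and the specific labels selected: since $\{S_i\}$ is conditionally i.i.d. given $\nu_0$, any fixed-size subset of its entries is again i.i.d. Rician, and by exchangeability the particular labels are irrelevant. Because selection is independent of the values, marginalizing over the uniform label choice and over the Poisson count $n$ preserves this property, so the interferer distances $\{w=\|x_0+\yjx\| : \jx \in \Bxx \setminus \tx\}$ are conditionally i.i.d. with marginal $\mathtt{Ricepdf}(w,\nu_0;\sigma)$, which is the claim. The main obstacle is precisely this last bookkeeping step: one must argue cleanly that uniform sampling without replacement, of a random (Poisson) number of elements, from a finite conditionally i.i.d. sequence yields a collection that is itself conditionally i.i.d. with the same marginal. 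The independence between the label-based selection and the distance values is indispensable here, since without it, conditioning on which device serves could in principle bias the remaining distances; once that independence is combined with exchangeability, the conclusion follows with no additional computation, the Rician marginal being exactly the one already furnished by Lemma~\ref{lem:Intra_cluster location _typical_general}.
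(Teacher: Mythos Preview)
Your proposal is correct and follows essentially the same approach as the paper: the paper presents this corollary as an immediate consequence of Lemma~\ref{lem:Intra_cluster location _typical_general}, with the justification given in the preceding remark that since the elements of $\ncalS^{x_0}_{\rm t}$ are conditionally i.i.d.\ and the serving device is chosen uniformly at random (independently of the distances), the remaining distances retain the same i.i.d.\ Rician law. Your write-up is simply a more careful unpacking of that one-line argument, making explicit the independence between label-based selection and distance values and the role of exchangeability.
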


We now look at the distribution of the distances from inter-cluster devices to the typical device. Recall that in both the content availability strategies, the inter-cluster interfering devices are chosen uniformly at random from the set of transmitting devices $\Nx_{\rm t}$ in  each cluster $x \in \Phi_{\nrmc}$. Therefore this discussion is applicable to both the strategies. Denoting the distances from inter-cluster interfering devices of the cluster $x \in \Phi_{\nrmc}$ to the typical device by $\ncalS^{x}_{\rm t}$, it can be shown that the elements of $\ncalS^{x}_{\rm t}$ are conditionally i.i.d., conditioned on the distance $\nu = \|x\|$ form the typical device to the cluster center $x \in \Phi_{\nrmc}$. It follows on the same lines as Lemma~\ref{lem:Intra_cluster location _typical_general}, except that conditioning here is on $\nu = \|x\|$ and not $\nu_0 = \|x_0\|$. The result is formally stated below. 

\begin{lemma} [Inter-cluster interferer distance distribution] Conditioned on the distance $\nu = \|x\|$ between the cluster center $x \in \Phi_{\nrmc}$ and the typical device, the distances from the inter-cluster interfering devices to the typical device $\{u=\|x+\yj\|, \forall \yj \in \Bx\}$ are i.i.d. with each element following the PDF given by $f_{U}(u| \nu)= \mathtt{Ricepdf}(u,\nu;\sigma)$.
\label{Lem: Inter cluster distance}
\end{lemma}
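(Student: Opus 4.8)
The plan is to follow the route of Lemma~\ref{lem:Intra_cluster location _typical_general} almost verbatim, transferring the conditioning from the representative cluster center $x_0$ to an arbitrary interfering cluster center $x \in \Phi_{\nrmc}$. First I would condition on the full location of $x$. By construction of the cluster process, every interfering device of $\Bx$ sits at $x + \yj$, where the offsets $\{\yj\}$ are i.i.d.\ draws from the symmetric normal law \eqref{Eq: notmaldist} with per-coordinate variance $\sigma^2$. Hence, conditioned on $x$, each displaced position $x + \yj$ is a planar Gaussian vector with mean $x$ and isotropic variance $\sigma^2$, and because the offsets are mutually independent, the distances $\{u = \|x + \yj\|\}$ are conditionally independent given $x$.

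Next I would obtain the conditional law of a single distance $U = \|x + \yj\|$. Writing $x + \yj$ in Cartesian coordinates as a vector with independent $\mathcal{N}(x_i, \sigma^2)$ components and passing to polar coordinates $(u, \theta)$ with $x = \nu(\cos\phi, \sin\phi)$, the joint density becomes
\begin{equation}
f_{U, \Theta}(u, \theta \mid x) = \frac{u}{2\pi\sigma^2}\exp\!\left(-\frac{u^2 - 2 u\nu\cos(\theta - \phi) + \nu^2}{2\sigma^2}\right),
\end{equation}
after expanding $\|x + \yj\|^2$ and using $\langle x+\yj,\, x\rangle = u\nu\cos(\theta-\phi)$. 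Integrating out the angle and invoking $\int_0^{2\pi}\exp\!\left(\tfrac{u\nu}{\sigma^2}\cos(\theta-\phi)\right)\mathrm{d}\theta = 2\pi I_0\!\left(\tfrac{u\nu}{\sigma^2}\right)$ collapses the expression to $f_U(u \mid x) = \mathtt{Ricepdf}(u, \nu; \sigma)$.

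The last step is to note that this conditional PDF depends on $x$ only through its magnitude $\nu = \|x\|$ --- a direct consequence of the rotational invariance of the isotropic Gaussian offset --- so the ``weaker'' conditioning on $\nu$ alone already fixes the distribution, exactly as for $\nu_0$ in Lemma~\ref{lem:Intra_cluster location _typical_general}. Combined with the conditional independence from the first step, this upgrades to conditional i.i.d.-ness given $\nu$, which is the claim. The only point demanding any real care is the angular integral producing the modified Bessel function $I_0$, together with the symmetry argument that reduces the conditioning from the full vector $x$ to the scalar $\nu$; every other step is a mechanical relabeling $x_0 \mapsto x$, $\nu_0 \mapsto \nu$. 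I would also flag the one genuine distinction from the intra-cluster setting: there both endpoints are Gaussian, so marginalizing over $\nu_0$ would yield a Rayleigh law, whereas here $x$ is a fixed atom of $\Phi_{\nrmc}$ and no such marginalization occurs, leaving the Rician form as the final conditional answer.
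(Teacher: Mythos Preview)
Your proposal is correct and follows essentially the same approach as the paper: the paper's proof of this lemma simply observes that the offsets $y\in\Bx$ are i.i.d.\ zero-mean Gaussians and then defers to the derivation of Lemma~\ref{lem:Intra_cluster location _typical_general} with the relabeling $x_0\mapsto x$, $\nu_0\mapsto\nu$. You have written out that deferred argument explicitly (the polar change of variables and the angular integral producing $I_0$), which is exactly what the paper intends, and your added remark about why no marginalization over $\nu$ occurs here is a nice clarification though not required for the proof.
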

\begin{proof}
Recall that every $\yj \in \Bx$ is an independent zero-mean Gaussian random variable. Hence conditional on the common distance $\nu = \|x\|$ between the cluster center $x \in \Phi_{\nrmc}$ and the typical device, the distances $\{u=\|x+\yj\|, \forall \yj \in \Bx\} \equiv \ncalS^{x}_{\rm t}$ are i.i.d. and the conditional PDF of each element of $\ncalS^{x}_{\rm t}$ can be derived on the same lines as Lemma \ref{lem:Intra_cluster location _typical_general}.
\end{proof}


\subsection{Distance Distributions in the Ordered Case} \label{subsection:OrderedDistance}
While the unordered case was sufficient to handle the uniform content availability strategy and the inter-cluster interfering distances in both the strategies, we need to consider the ``ordered case'' to handle the serving and intra-cluster interfering distances in the $k$-closest content availability strategy. For this analysis, consider the set of distances $\{S_i\}_{i=1:M}$, denoted by $\ncalS^{x_0}_{\rm t}$, from the previous subsection, that denotes the ``unordered'' distances from the typical device to the set of possible transmitting devices $\Nxo_{\rm t}$ in the cluster $x_0 \in \Phi_\nrmc$ with sampling distribution $f_S(s|\nu_0)$ given by Lemma~\ref{lem:Intra_cluster location _typical_general}. Different from the previous subsection, we order the elements of $\ncalS^{x_0}_{\rm t}$ in terms of the increasing distance from the typical device and denote them by $\{S_{(i)}\}_{i=1:M}$, where $S_{(1)}\leq ... \leq S_{(k)} ... \leq S_{(M)}$. 
In this case the serving distance $R$ corresponds to $S_{(k)}$ whose distribution can be derived using order statistics. The result is presented in the following Lemma.

%
\begin{lemma}[$k$-closest content availability: serving distance] \label{lem:k-closest_serving}
For the $k$-closest content availability strategy, the conditional distribution of serving distance, conditioned on $\nu_0 = \|x_0\|$, is
\begin{equation} \label{eq: k-closet dis to typical}
f_R(r| \nu_0)=\frac{M!}{(k-1)!(M-k)!}{F_S(r|\nu_0)}^{k-1}(1-F_S(r|\nu_0))^{M-k}  f_S(r| \nu_0),
\end{equation}
\normalsize
with $ f_S(r| \nu_0)=\mathtt{Ricepdf}(r,\nu_0;\sigma)$ derived in Lemma~\ref{lem:Intra_cluster location _typical_general} being the conditional PDF of $S$, and $F_S(r|\nu_0)=1-Q_1(\frac{\nu_0}{\sigma}, \frac{r}{\sigma})$ being the conditional cumulative distribution function (CDF), where $Q_1(a,b)$ is the Marcum Q-function defined as $Q_1(a,b) = \int_b^{\infty} t e^{-\frac{t^2 + a^2}{2}} I_0(at) {\rm d} t$.
\end{lemma}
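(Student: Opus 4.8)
The plan is to treat this as a conditional order-statistics computation layered on top of Lemma~\ref{lem:Intra_cluster location _typical_general}. The essential structural input is already supplied by that Lemma: conditioned on $\nu_0 = \|x_0\|$, the $M$ unordered distances in $\ncalS^{x_0}_{\rm t}$ are i.i.d.\ with common conditional density $f_S(s|\nu_0)=\mathtt{Ricepdf}(s,\nu_0;\sigma)$ and a common conditional CDF $F_S(s|\nu_0)$. In the $k$-closest strategy the serving distance is by definition $R=S_{(k)}$, the $k$-th smallest among these $M$ conditionally i.i.d.\ samples, so the proof decomposes into two independent pieces: (i) invoke the density of the $k$-th order statistic, and (ii) identify $F_S$ in closed form with the Marcum $Q$-function. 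I would emphasize that the whole argument is carried out \emph{after} conditioning on $\nu_0$, which is precisely why the statement is phrased conditionally --- it is the conditional independence from Lemma~\ref{lem:Intra_cluster location _typical_general} that licenses the product form appearing below.

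For step (i) I would recall the elementary order-statistics derivation so the statement is self-contained: for $S_{(k)}$ to lie in $[r,r+\mathrm{d}r]$ exactly $k-1$ of the samples must fall below $r$, one must fall in the infinitesimal interval, and the remaining $M-k$ must exceed $r$; counting the $\tfrac{M!}{(k-1)!\,(M-k)!}$ assignments and using conditional independence gives
\[
f_R(r|\nu_0)=\frac{M!}{(k-1)!(M-k)!}\,F_S(r|\nu_0)^{k-1}\bigl(1-F_S(r|\nu_0)\bigr)^{M-k}\,f_S(r|\nu_0),
\]
which is exactly the claimed expression once $f_S$ and $F_S$ are substituted. This step is routine and relies only on the i.i.d.\ property.

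For step (ii), the one genuine computation is to put the complementary Rician CDF into closed form. Starting from the density with scale $\sigma$ and noncentrality $\nu_0$, I would write $1-F_S(r|\nu_0)=\int_r^{\infty}\frac{a}{\sigma^2}\exp\!\left(-\frac{a^2+\nu_0^2}{2\sigma^2}\right)I_0\!\left(\frac{a\nu_0}{\sigma^2}\right)\mathrm{d}a$ and perform the substitution $t=a/\sigma$. Under this change of variable $\frac{a}{\sigma^2}\,\mathrm{d}a=t\,\mathrm{d}t$, the exponent becomes $-\tfrac12\bigl(t^2+(\nu_0/\sigma)^2\bigr)$, the Bessel argument becomes $(\nu_0/\sigma)\,t$, and the lower limit becomes $r/\sigma$, so the integral matches the definition $Q_1(a,b)=\int_b^{\infty}t\,e^{-(t^2+a^2)/2}I_0(at)\,\mathrm{d}t$ with $a=\nu_0/\sigma$ and $b=r/\sigma$. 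Hence $F_S(r|\nu_0)=1-Q_1(\nu_0/\sigma,\,r/\sigma)$, and substituting this together with $f_S(r|\nu_0)=\mathtt{Ricepdf}(r,\nu_0;\sigma)$ into the order-statistic density completes the proof. I do not expect a real obstacle anywhere: the order-statistics formula is standard and the Rician-to-Marcum-$Q$ identity is a direct substitution; the only point to state carefully is that everything holds conditionally on $\nu_0$, so that the unconditional serving-distance distribution is recovered afterward by averaging over the law of $\nu_0=\|x_0\|$.
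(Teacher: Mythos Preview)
Your proposal is correct and follows essentially the same approach as the paper: invoke the conditional i.i.d.\ property of the $M$ distances from Lemma~\ref{lem:Intra_cluster location _typical_general} and then apply the standard formula for the density of the $k^{th}$ order statistic. The paper's proof simply cites a reference for the order-statistic density, whereas you additionally spell out the combinatorial derivation and the substitution showing $1-F_S(r|\nu_0)=Q_1(\nu_0/\sigma,r/\sigma)$; these extra details are correct and add clarity without changing the argument.
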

\begin{proof} In Lemma~\ref{lem:Intra_cluster location _typical_general}, we showed that the unordered sequence of distances $\{S_i\}_{i=1:M}$ is conditionally i.i.d. with conditional PDF of each element given by $f_S(r| \nu_0)$. The result simply follows from the PDF of the $k^{th}$ order statistic of this sequence of i.i.d. random variables~\cite[eqn.~(3)]{david1970order}. 
\end{proof}

We next characterize the distances from the interfering devices to the typical device in the $k$-closest content availability strategy. In comparison to the uniform content availability strategy, this is more involved since the $k^{th}$ closest device is fixed {\em a priori} as the serving device and hence cannot act as an interferer. To address this issue, we divide  the set of simultaneously active devices into three subsets, $\Bxx\equiv \{{\mathcal{B}}^{x_0}_{\rm in}, y_0,{\mathcal{B}}^{x_0}_{\rm out}\}$, where  the serving device is located at a distance  $s_{(k)}=\|x_0+y_0\|$ from the typical device, and   ${\mathcal{B}}^{x_0}_{\rm in}$  (${\mathcal{B}}^{x_0}_{\rm out}$)  denote the set of devices that are closer (farther) to the typical device compared to the serving device. We show that conditional on the distance $\nu_0 = \|x_0\|$, the distances from the typical device to the devices in ${\mathcal{B}}^{x_0}_{\rm in}$ are i.i.d. and their distribution is characterized in the Lemma below. The same holds for the devices in ${\mathcal{B}}^{x_0}_{\rm out}$. This i.i.d. property will play a major role in the exact analysis of Laplace transform of intra-cluster interference in $k$-closest content availability case in the next section.

 \begin{lemma}[$k$-closest content availability: intra-cluster interferer distance]\label{OK_Last_Lemma}
For the $k$-closest content availability strategy,


a) the distances from the devices in the set ${\mathcal{B}}^{x_0}_{\rm in}$ to the typical device, i.e., $\{w_{\rm in}=\|x_0+\yjx\|\}$, are conditionally i.i.d., conditioned on the serving distance $r$ and the distance $\nu_0 = \|x_0\|$ between the cluster center and the typical device, with each distance following the PDF
\begin{equation} \label{eq: f_w_in}
 f_{W_{\rm in}}(w_{\rm in}|\nu_0, r)=\left\{
 \begin{array}{cc}
 \frac{f_{S}(w_{\rm in}|\nu_0)}{F_{S}(r|\nu_0)}, & w_{\rm in}<r\\
 0, & w_{\rm in}\geq r
 \end{array}\right.,
 \end{equation}
where $ f_S(w_{\rm in}| \nu_0)=\mathtt{Ricepdf}(w_{\rm in},\nu_0;\sigma)$, and $F_S(r|\nu_0)=1-Q_1(\frac{\nu_0}{\sigma},\frac{r}{\sigma})$, and
 
b) the distances from the devices in the set ${\mathcal{B}}^{x_0}_{\rm out}$ to the typical device, i.e., $\{w_{\rm out}=\|x_0+\yjx\|\}$, are conditionally i.i.d., conditioned on the serving distance $r$ and the distance $\nu_0 = \|x_0\|$ between the cluster center and the typical device, with each distance following the PDF
 \begin{equation}\label{eq: f_w_out}
 f_{W_{\rm out}}(w_{\rm out}|\nu_0, r)=\left\{
 \begin{array}{cc}
 \frac{f_{S}(w_{ \tt out}|\nu_0)}{1-F_{S}(r|\nu_0)}, & w_{\rm out}>r \\
0 & w_{\rm out}\leq r
\end{array}\right.,
 \end{equation}
where $ f_S(w_{\rm out}| \nu_0)=\mathtt{Ricepdf}(w_{\rm out},\nu_0;\sigma)$, and $F_S(r|\nu_0)=1-Q_1(\frac{\nu_0}{\sigma},\frac{r}{\sigma})$.
 \end{lemma}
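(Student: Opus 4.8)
The plan is to build directly on Lemma~\ref{lem:Intra_cluster location _typical_general}, which establishes that conditioned on $\nu_0 = \|x_0\|$ the elements of $\ncalS^{x_0}_{\rm t} = \{S_i\}_{i=1:M}$ are i.i.d.\ with common PDF $f_S(\cdot|\nu_0) = \mathtt{Ricepdf}(\cdot,\nu_0;\sigma)$ and CDF $F_S(\cdot|\nu_0)$, and then to invoke the conditional structure of order statistics. After ordering to $S_{(1)} \le \cdots \le S_{(M)}$ and fixing the serving distance $R = S_{(k)} = r$, the remaining $M-1$ distances split into the $k-1$ values below $r$ and the $M-k$ values above $r$. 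The classical fact I would establish is that, conditioned on $S_{(k)} = r$, these two blocks are mutually independent, the lower block $\{S_{(1)}, \ldots, S_{(k-1)}\}$ being distributed as the order statistics of $k-1$ i.i.d.\ draws from the left-truncated density $f_S(s|\nu_0)/F_S(r|\nu_0)$ on $(0,r)$, and the upper block $\{S_{(k+1)}, \ldots, S_{(M)}\}$ as the order statistics of $M-k$ i.i.d.\ draws from the right-truncated density $f_S(s|\nu_0)/(1 - F_S(r|\nu_0))$ on $(r, \infty)$.

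I would prove this conditional structure directly rather than merely cite it, since it makes the truncated forms in \eqref{eq: f_w_in}--\eqref{eq: f_w_out} transparent. Writing the joint density of the order statistics as $M! \prod_{i=1}^M f_S(s_{(i)}|\nu_0)$ on $s_{(1)} \le \cdots \le s_{(M)}$ and dividing by the marginal of $S_{(k)}$ from Lemma~\ref{lem:k-closest_serving}, the conditional joint density factors cleanly into a lower piece $\tfrac{(k-1)!}{F_S(r|\nu_0)^{k-1}} \prod_{i<k} f_S(s_{(i)}|\nu_0)$ and an upper piece $\tfrac{(M-k)!}{(1-F_S(r|\nu_0))^{M-k}} \prod_{i>k} f_S(s_{(i)}|\nu_0)$. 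Each piece is exactly the joint density of ordered i.i.d.\ draws from the corresponding truncated law, which simultaneously proves independence of the two blocks and identifies their sampling distributions. Dropping the ordering, the \emph{unordered} lower distances are i.i.d.\ $\sim f_S/F_S$ on $(0,r)$ and the unordered upper distances are i.i.d.\ $\sim f_S/(1-F_S)$ on $(r,\infty)$.

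The final step accounts for the random activation. The sets ${\mathcal{B}}^{x_0}_{\rm in}$ and ${\mathcal{B}}^{x_0}_{\rm out}$ are obtained by retaining a uniformly random subset of the $M-1$ non-serving devices (those simultaneously active), a selection made independently of the actual distance values. Since uniform thinning of an exchangeable i.i.d.\ collection preserves the i.i.d.\ property and marginal law of the survivors—conditioning only on the cardinalities $|{\mathcal{B}}^{x_0}_{\rm in}|$ and $|{\mathcal{B}}^{x_0}_{\rm out}|$—the retained lower distances $\{w_{\rm in}\}$ remain i.i.d.\ with density $f_S(w_{\rm in}|\nu_0)/F_S(r|\nu_0)$ on $(0,r)$, giving \eqref{eq: f_w_in}, and the retained upper distances $\{w_{\rm out}\}$ remain i.i.d.\ with density $f_S(w_{\rm out}|\nu_0)/(1-F_S(r|\nu_0))$ on $(r,\infty)$, giving \eqref{eq: f_w_out}.

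I expect the main obstacle to be the bookkeeping around the activation step rather than the order-statistics algebra: one must argue carefully that the uniform selection of active interferers is independent of the distances given the ordering, so that thinning does not bias the truncated marginals, and that the independence of the lower and upper blocks survives the thinning. Once this independence-plus-exchangeability argument is in place, the truncated densities fall out immediately from the factorization of the conditional joint order-statistic density.
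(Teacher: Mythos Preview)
Your proposal is correct and follows essentially the same approach as the paper: write the joint density of the order statistics, divide by the marginal of $S_{(k)}$ from Lemma~\ref{lem:k-closest_serving}, observe the resulting product form, and drop the ordering to conclude the truncated i.i.d.\ structure. The paper carries this out only for the lower block and handles the upper block by symmetry, and it does not address the activation/thinning step at all, so your treatment is in fact slightly more complete than the paper's.
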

\begin{proof}
See Appendix \ref{App: proof of inter intra-cluster distance}.
\end{proof}


\section{Coverage Probability and $\ase$ Performance}
Using the distance distributions derived in the previous section, we now derive the coverage probability of the typical device and the $\ase$ of the whole network for the two content availability strategies. We begin our discussion with the uniform content availability strategy. 

\subsection{ Uniform Content Availability}
In this subsection, we focus on the case in which the content of interest for the typical device is available at a device chosen uniformly at random in the representative cluster. As evident in the sequel, we need the Laplace transforms of the intra- and inter-cluster interference powers as the intermediate results for the coverage and $\ase$ analysis. These are derived next.

\subsubsection{Laplace Transform of Interference}
We start by deriving  exact expressions and several bounds and approximations on the Laplace transform of  intra-cluster interference. 
\begin{lemma}[Laplace transform of intra-cluster interference] \label{lem: lap intra typical}
Under uniform content availability, the conditional Laplace transform of the intra-cluster interference power given by~\eqref{Eq: Intra_cluster_Int_typical}, conditioned on the distance $\nu_0=\|x_0\|$ between the cluster center and the typical device is 
\begin{align}\label{Eq: laplace intra typical _sum}
 \calL_{I_{ \mathrm{Tx-cluster}}} (s |\nu_0)=\sum_{k=0}^{M-1} \Big ( \int_0^\infty \frac{1}{1+s w^{-\alpha}}f_{W}(w|\nu_0)\nrmd w\Big)^k
 \frac{(\bar{m}-1)^k e^{-(\bar{m}-1)}}{k! \xi},
\end{align}
with   $\xi=\sum_{j=0}^{M-1}\frac{(\bar{m}-1)^j e^{-(\bar{m}-1)}}{j!}$.  For $M \gg \bar{m}$, the above expression reduces to
\begin{align}\label{Eq: laplace intra typical}
\calL_{I_{ \mathrm{Tx-cluster}}} (s |\nu_0) = \exp\left(-(\bar{m}-1)\int_0^\infty \frac{s w^{-\alpha}}{1+s w^{-\alpha}}f_W(w|\nu_0)\nrmd w\right),
\end{align}
where $f_W(w| \nu_0)$ is given by Corollary \ref{lem:intra_cluster_uniform_interference}.
\end{lemma}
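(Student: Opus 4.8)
The plan is to compute the conditional Laplace transform directly from its definition, $\calL_{I_{\mathrm{Tx-cluster}}}(s|\nu_0) = \E\big[e^{-s I_\mathrm{Tx-cluster}} \mid \nu_0\big]$, by exploiting the conditional independence structure established in Corollary~\ref{lem:intra_cluster_uniform_interference}. The first step is to isolate the randomness in the number of active interferers. Let $K = |\Bxx \setminus \tx|$ denote the number of intra-cluster interfering devices, which by assumption is Poisson distributed with mean $\bar{m}-1$, truncated to $\{0,1,\dots,M-1\}$ because at most $M-1$ of the $M$ possible transmitters can interfere once the serving device $\tx$ is fixed. Conditioning additionally on $K=k$ and invoking Corollary~\ref{lem:intra_cluster_uniform_interference}, the $k$ interferer distances are i.i.d.\ with PDF $f_W(w|\nu_0) = \mathtt{Ricepdf}(w,\nu_0;\sigma)$, while the fading gains $\hyxx \sim \exp(1)$ are i.i.d.\ and independent of everything else. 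The sum defining $I_\mathrm{Tx-cluster}$ therefore factorizes, and the expectation of the product becomes the $k$-th power of a single per-interferer expectation.

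The second step is to evaluate that per-interferer expectation. For a single term I first average over the fading: since $\hyxx\sim\exp(1)$ satisfies $\E[e^{-t\hyxx}] = (1+t)^{-1}$, taking $t = s w^{-\alpha}$ gives $\E_{\hyxx}[e^{-s \hyxx w^{-\alpha}}] = (1 + s w^{-\alpha})^{-1}$. Averaging this over the Rician interferer distance then yields $\int_0^\infty (1 + s w^{-\alpha})^{-1} f_W(w|\nu_0)\,\nrmd w$, which is precisely the quantity raised to the power $k$ inside~\eqref{Eq: laplace intra typical _sum}.

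The third step is to average over $K$. Inserting the truncated Poisson PMF $\P(K=k) = \frac{(\bar{m}-1)^k e^{-(\bar{m}-1)}}{k!\,\xi}$ with normalizing constant $\xi = \sum_{j=0}^{M-1}\frac{(\bar{m}-1)^j e^{-(\bar{m}-1)}}{j!}$ and summing the $k$-th powers over $k = 0,\dots,M-1$ reproduces~\eqref{Eq: laplace intra typical _sum} exactly. For the regime $M \gg \bar{m}$ the truncation is inconsequential: $\xi \to 1$ and the sum extends to $\infty$, so recognizing the exponential series collapses it to $\exp\big(-(\bar{m}-1)(1 - \mathcal{G})\big)$ with $\mathcal{G} = \int_0^\infty (1+s w^{-\alpha})^{-1} f_W(w|\nu_0)\,\nrmd w$; using that $f_W$ integrates to one to rewrite $1 - \mathcal{G} = \int_0^\infty \frac{s w^{-\alpha}}{1+s w^{-\alpha}} f_W(w|\nu_0)\,\nrmd w$ then gives~\eqref{Eq: laplace intra typical}.

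The computation itself is routine; the only real subtlety, and hence the step I would be most careful about, is the justification that, conditioned on $\nu_0$ alone, the interferer distances are i.i.d.\ and jointly independent of the count $K$. This rests on Corollary~\ref{lem:intra_cluster_uniform_interference} together with the exchangeability of the $M$ possible transmitters: because the serving device is chosen uniformly at random and the active interferers form a uniformly random subset of the remaining $M-1$ devices whose cardinality is drawn from an independent truncated Poisson law, the labels carry no information about the distances, so conditioning on $K=k$ leaves the i.i.d.\ Rician structure intact. Once this is secured, the factorization in the first step and the Poisson generating-function identity in the third are immediate.
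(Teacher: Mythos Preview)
Your proposal is correct and follows essentially the same approach as the paper's proof: both condition on the number of interferers, exploit the conditional i.i.d.\ Rician structure of the interferer distances (Corollary~\ref{lem:intra_cluster_uniform_interference}) together with the exponential fading to obtain the per-interferer factor $(1+sw^{-\alpha})^{-1}$, and then average over the truncated Poisson count to produce~\eqref{Eq: laplace intra typical _sum}, with the $M\gg\bar m$ case following by recognizing the exponential series. Your additional remark on why conditioning on $\nu_0$ alone leaves the i.i.d.\ structure and the independence from $K$ intact is a welcome clarification that the paper leaves implicit.
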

\begin{proof} See Appendix \ref{App: Proof of Lemma  lap intra typical}.
\end{proof}
\begin{remark}
Note that the assumption $M\gg \bar{m}$, under which a simplified expression is derived in the above Lemma, is applicable when the number of simultaneously active devices per cluster is much smaller compared to the cluster size. As discussed in the sequel, this is also the regime in which the network performance in terms of $\ase$ will be usually optimized, especially for the uniform content availability case. Therefore, the simpler expression will be treated as a proxy of the exact expression for the derivation of simpler bounds and approximations.
\label{Rem: asymptotic case for the number of devices}
\end{remark}

While Lemma~\ref{lem: lap intra typical} provides an exact expression for the Laplace transform of intra-cluster interference, it is usually desirable to derive simple but tight approximations and bounds whenever possible to draw useful system design insights, which we do next. First we derive an approximation for the Laplace transform of intra-cluster interference under the following assumption. 
\begin{assumption}[Uncorrelated intra-cluster distances]\label{Ass: Identical intra-cluster distances}
Recall that the distances between intra-cluster devices and typical device, denoted by $\ncalS^{x_0}_{\rm t}$, are correlated due to the common factor $x_0$. However, the coverage analysis can be simplified significantly if this correlation is ignored, which we do as a part of this assumption. More formally, we assume that the serving and intra-cluster interferer distances are i.i.d. Rayleigh distributed with marginal distributions $f_R(r)=\mathtt{Raypdf}(r;2\sigma^2)$ and $f_W(w)=\mathtt{Raypdf}(w; 2\sigma^2)$, respectively. In other words, we do not condition on the distance $\nu_0 = \|x_0\|$, which simplifies the coverage analysis by allowing separate deconditioning on the serving and intra-cluster interferer distances as discussed in the sequel.%
\end{assumption}
Under the above assumption, the Laplace transform of intra-cluster interference at the typical device is stated as a the following Corollary of Lemma~\ref{lem: lap intra typical}. 
\begin{cor}[Approximation]
 \label{cor: Intra_cluster distance _typical approximation}
Under Assumption~\ref{Ass: Identical intra-cluster distances}, the Laplace transform of intra-cluster interference at the typical device is
\begin{align}\label{Eq:Lap intra typical app}
 \tilde{\calL}_{I_{ \mathrm{Tx-cluster}}} (s ) = \exp\left(-(\bar{m}-1)\int_0^\infty \frac{s w^{-\alpha}}{1+s w^{-\alpha}}f_{W}(w)\nrmd w\right),
\end{align}
\normalsize
where $f_W(w)=\mathtt{Raypdf}(w; 2\sigma^2)$.
\end{cor}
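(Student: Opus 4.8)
The plan is to obtain this corollary as a direct specialization of the simplified ($M \gg \bar{m}$) Laplace transform in Lemma~\ref{lem: lap intra typical}, with the single modification dictated by Assumption~\ref{Ass: Identical intra-cluster distances}: the conditional Rician interferer distances are replaced by unconditional i.i.d. Rayleigh distances. Recall that the $M \gg \bar{m}$ form of Lemma~\ref{lem: lap intra typical} gives the conditional Laplace transform as $\exp\bigl(-(\bar{m}-1)\int_0^\infty \tfrac{sw^{-\alpha}}{1+sw^{-\alpha}} f_W(w|\nu_0)\,\nrmd w\bigr)$, where $f_W(w|\nu_0) = \mathtt{Ricepdf}(w,\nu_0;\sigma)$ by Corollary~\ref{lem:intra_cluster_uniform_interference}. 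Under Assumption~\ref{Ass: Identical intra-cluster distances} I would drop the conditioning on $\nu_0$ and substitute the marginal $f_W(w) = \mathtt{Raypdf}(w;2\sigma^2)$, which immediately yields the claimed unconditional expression~\eqref{Eq:Lap intra typical app} for $\tilde{\calL}_{I_{\mathrm{Tx-cluster}}}(s)$.

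For completeness I would also give the self-contained derivation, which Assumption~\ref{Ass: Identical intra-cluster distances} renders transparent by decoupling the interferer distances. By~\eqref{Eq: Intra_cluster_Int_typical} with $P_\nrmd=1$, the intra-cluster interference is a sum over a Poisson number of interferers (mean $\bar{m}-1$, untruncated, consistent with the $M \gg \bar{m}$ regime) of i.i.d. terms $\hyxx \, w^{-\alpha}$, where the distances $W$ are now i.i.d. $\mathtt{Raypdf}(w;2\sigma^2)$ and the fading powers $\hyxx \sim \exp(1)$. First I would average over the fading of a single interferer: since $\hyxx$ is unit-mean exponential, $\E_{\hyxx}[e^{-s \hyxx w^{-\alpha}}] = (1+sw^{-\alpha})^{-1}$. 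Then, using the probability generating functional of a Poisson number of i.i.d. marks, the Laplace transform factorizes as $\exp\bigl(-(\bar{m}-1)(1 - \E_W[(1+sW^{-\alpha})^{-1}])\bigr)$. Writing $1 - (1+sw^{-\alpha})^{-1} = sw^{-\alpha}/(1+sw^{-\alpha})$ and averaging over $W$ against $f_W(w)$ reproduces~\eqref{Eq:Lap intra typical app}.

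The entire content of the corollary resides in the role of Assumption~\ref{Ass: Identical intra-cluster distances}, so the only point needing care is the deconditioning, which is precisely where the approximation enters. In the exact setting the serving distance $r$ and every interferer distance $w$ share the common random factor $\nu_0 = \|x_0\|$, so the conditional quantities in Lemma~\ref{lem: lap intra typical} cannot be deconditioned by integrating the interference Laplace transform against the distribution of $\nu_0$ independently of the serving-distance term. Assumption~\ref{Ass: Identical intra-cluster distances} removes exactly this coupling by positing that the distances are unconditionally i.i.d. Rayleigh, which is what licenses dropping the conditioning on $\nu_0$ and using the marginal $f_W(w)$ in place of $f_W(w|\nu_0)$. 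There is therefore no genuine analytic obstacle beyond the single-interferer fading average; the substantive step is recognizing that the approximation \emph{is} the license to decouple the serving and interfering distances, which is what enables the separate deconditioning used in the subsequent coverage analysis.
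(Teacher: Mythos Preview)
Your proposal is correct and matches the paper's approach: the paper presents this result as an immediate corollary of Lemma~\ref{lem: lap intra typical} under Assumption~\ref{Ass: Identical intra-cluster distances}, obtained by replacing the conditional Rician density $f_W(w|\nu_0)$ in the $M\gg\bar{m}$ expression~\eqref{Eq: laplace intra typical} with the unconditional Rayleigh marginal $f_W(w)=\mathtt{Raypdf}(w;2\sigma^2)$. Your additional self-contained derivation and discussion of the role of deconditioning are accurate elaborations of exactly this step.
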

 By applying Jensen's inequality on the result of Lemma~\ref{lem: lap intra typical},  we also provide a closed form lower bound on the Laplace transform of intra-cluster interference in the next Corollary.
\begin{cor}[Lower bound]
\label{Cor: Lap intra typical low bound}
 The lower bound on Laplace transform of intra-cluster interference at the typical device is
\begin{align}\label{Eq: Lap intra typical low bound}
    \calL_{I_{ \mathrm{Tx-cluster}}} (s )\ge \exp\left(-  \frac{(\bar{m}-1)}{4  \sigma^2} s^{\frac{2}{\alpha}}  \frac{2 \pi / \alpha}{\sin (2 \pi / \alpha )} \right).
\end{align}
\end{cor}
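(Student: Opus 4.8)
The plan is to produce~\eqref{Eq: Lap intra typical low bound} as a genuine lower bound by chaining Jensen's inequality with an elementary bound on the Gaussian tail, starting from the single-exponential form~\eqref{Eq: laplace intra typical} of Lemma~\ref{lem: lap intra typical} (valid in the regime $M\gg\bar{m}$, which Remark~\ref{Rem: asymptotic case for the number of devices} designates as the proxy for bound derivations). First I would decondition on $\nu_0=\|x_0\|$. Since the typical device is a zero-mean Gaussian offset of per-coordinate variance $\sigma^2$ from its own cluster center, $\nu_0$ is Rayleigh with $f_{\nu_0}(\nu_0)=\mathtt{Raypdf}(\nu_0;\sigma^2)$, so that $\calL_{I_{\mathrm{Tx-cluster}}}(s)=\E_{\nu_0}[\calL_{I_{\mathrm{Tx-cluster}}}(s\mid\nu_0)]$. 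Writing $g(\nu_0)=\int_0^\infty \frac{s w^{-\alpha}}{1+s w^{-\alpha}}f_{W}(w\mid\nu_0)\,\nrmd w$, equation~\eqref{Eq: laplace intra typical} reads $\calL_{I_{\mathrm{Tx-cluster}}}(s\mid\nu_0)=\exp(-(\bar{m}-1)g(\nu_0))$.

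Because $x\mapsto e^{-x}$ is convex, Jensen's inequality gives $\E_{\nu_0}[e^{-(\bar{m}-1)g(\nu_0)}]\ge \exp(-(\bar{m}-1)\E_{\nu_0}[g(\nu_0)])$, which is the first lower bound. I would then evaluate $\E_{\nu_0}[g(\nu_0)]$ by interchanging the expectation over $\nu_0$ with the integral over $w$ (Tonelli, nonnegative integrand), so that the conditional density $f_{W}(w\mid\nu_0)=\mathtt{Ricepdf}(w,\nu_0;\sigma)$ is replaced by its mixture against the Rayleigh law of $\nu_0$. This Rician-over-Rayleigh mixture collapses to the marginal $f_{W}(w)=\mathtt{Raypdf}(w;2\sigma^2)$ (the unconditioned counterpart of Corollary~\ref{lem:intra_cluster_uniform_interference}) via the standard identity $\int_0^\infty t\,e^{-at^2}I_0(bt)\,\nrmd t=\tfrac{1}{2a}e^{b^2/4a}$, yielding $\E_{\nu_0}[g(\nu_0)]=\int_0^\infty \frac{s w^{-\alpha}}{1+s w^{-\alpha}}\,\frac{w}{2\sigma^2}e^{-w^2/4\sigma^2}\,\nrmd w$.

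Finally I would bound $e^{-w^2/4\sigma^2}\le 1$ inside this integral. Since this only enlarges $\E_{\nu_0}[g]$, the exponent becomes more negative and the resulting expression remains a valid lower bound, so both bounding steps push in the same direction. The remaining integral becomes closed-form under the single substitution $t=w^\alpha/s$, which turns it into $\frac{s^{2/\alpha}}{2\alpha\sigma^2}\int_0^\infty \frac{t^{2/\alpha-1}}{1+t}\,\nrmd t$; the Euler/Beta integral $\int_0^\infty \frac{t^{2/\alpha-1}}{1+t}\,\nrmd t=\frac{\pi}{\sin(2\pi/\alpha)}$ converges precisely because $\alpha>2$ forces $0<2/\alpha<1$. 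Collecting constants gives $\E_{\nu_0}[g]\le \frac{s^{2/\alpha}}{4\sigma^2}\frac{2\pi/\alpha}{\sin(2\pi/\alpha)}$, and substituting into the Jensen bound yields~\eqref{Eq: Lap intra typical low bound}.

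I expect the main obstacle to be bookkeeping rather than depth: one must verify that both inequalities (Jensen and dropping the Gaussian factor) genuinely point toward a lower bound, confirm that the Rician-over-Rayleigh mixture really is $\mathtt{Raypdf}(\cdot;2\sigma^2)$, and carefully track the powers of $\sigma$ and $s$ through the substitution so that the scale constant lands as $(2\sigma)^{2}=4\sigma^2$. The only genuinely analytic ingredient is the reflection-formula evaluation of the Beta-type integral together with its convergence condition $\alpha>2$.
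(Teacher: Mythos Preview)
Your proposal is correct and follows essentially the same route as the paper: Jensen's inequality to pull the expectation over $x_0$ (equivalently $\nu_0$) inside the exponential, then bounding the resulting $2\sigma^2$-variance Gaussian/Rayleigh factor by its maximum, then the standard Beta-integral evaluation. The only cosmetic difference is that the paper works in Cartesian coordinates, recognizes the $x_0$-average as the convolution $(f_Y*f_Y)(z)$, and invokes Young's inequality to bound it by $\sup_z (f_Y*f_Y)(z)=\tfrac{1}{4\pi\sigma^2}$, which is exactly your $e^{-w^2/4\sigma^2}\le 1$ step in polar form; your elementary bound avoids the (slight) overkill of citing Young's inequality.
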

\begin{proof}
See Appendix \ref{App: proof of intra typical low bound}.
\end{proof}
 We will use this result along with the lower bound on the Laplace transform of inter-cluster interference  (will be derived later in this section) to derive a  closed form approximation for coverage probability and $\ase$. The results  will provide insights into several system design guidelines. We now state the exact result for the Laplace transform of inter-cluster interference.
\begin{lemma}[Laplace transform of inter-cluster interference]\label{Lem: Lap_Inter}
The Laplace transform of inter-cluster interference experienced at the typical device, given by~\eqref{Eq: Intera_cluster_Int_typical}, is
\begin{align}\label{Eq: Lap_Inter_sum}
\calL_{I_{\Psi_\nrmm \setminus \mathrm{Tx-cluster}}} (s)=\exp\Big(-2 \pi \lambda_\nrmc\int_0^\infty \Big(1-\sum_{k=0}^{M} \Big ( \int_0^\infty \frac{1}{1+s u^{-\alpha}}f_{U}(u|\nu)\nrmd u\Big)^k  
 \frac{\bar{m}^k e^{-\bar{m}}}{k! \eta} \Big)\nu \nrmd \nu\Big),
\end{align}
with   $\eta=\sum_{j=0}^{M}\frac{\bar{m}^je^{-\bar{m}}}{j!}$.  For $M \gg \bar{m}$, the above expression reduces to
\begin{align}\label{Eq: Lap_Inter}
\calL_{I_{\Psi_\nrmm \setminus \mathrm{Tx-cluster}}} (s) &= \exp\Big(-2 \pi \lambda_\nrmc\int_0^\infty \Big(1-\exp\Big(-\bar{m}\int_0^\infty \frac{s u^{-\alpha}}{1+s u^{-\alpha}}
  f_U(u|\nu)\nrmd u \Big)\Big)\nu \nrmd \nu\Big),
\end{align}
where $f_{U}(u| \nu)$  given by Lemma \ref{Lem: Inter cluster distance}. 
\end{lemma}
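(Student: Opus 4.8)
\emph{Proof proposal.} The plan is to compute $\calL_{I_{\Psi_\nrmm \setminus \mathrm{Tx-cluster}}}(s)=\E[e^{-sI}]$ directly from the definition~\eqref{Eq: Intera_cluster_Int_typical} (setting $P_\nrmd=1$) by peeling off the expectations from the inside out: first over the Rayleigh fading, then over the interferer distances supplied by Lemma~\ref{Lem: Inter cluster distance}, then over the number of active devices per cluster, and finally over the parent PPP $\Phi_\nrmc$. First I would exploit that, conditioned on $\Phi_\nrmc$, the offspring processes are independent across clusters and the fading gains and active-set selections are drawn independently. Conditioning on the typical device (equivalently on its cluster center $x_0$) and invoking Slivnyak's theorem for the parent PPP, the remaining cluster centers $\Phi_\nrmc\setminus x_0$ again form a PPP of density $\lambda_\nrmc$. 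Hence the Laplace transform factorizes over clusters,
\begin{equation}
\calL_{I_{\Psi_\nrmm \setminus \mathrm{Tx-cluster}}}(s)=\E_{\Phi_\nrmc}\Big[\prod_{x\in\Phi_\nrmc\setminus x_0} g(x)\Big],
\end{equation}
where $g(x)=\E[\exp(-s\sum_{\yj\in\Bx}\hyx\|x+\yj\|^{-\alpha})]$ is the per-cluster contribution.

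Next I would evaluate $g(x)$. Conditioning on $\nu=\|x\|$ and on $|\Bx|=k$, Lemma~\ref{Lem: Inter cluster distance} guarantees that the distances $u=\|x+\yj\|$ are i.i.d. with PDF $f_U(u|\nu)$, while the gains $\hyx\sim\exp(1)$ are i.i.d. Averaging a single summand over the fading gives $\E_h[e^{-s h u^{-\alpha}}]=(1+s u^{-\alpha})^{-1}$, and averaging over the distance then yields $\psi(\nu)\nbydef\int_0^\infty (1+s u^{-\alpha})^{-1}f_U(u|\nu)\nrmd u$. Because the $k$ summands are conditionally i.i.d., the product of expectations over them is $\psi(\nu)^k$; de-conditioning over the truncated-Poisson law $\P(|\Bx|=k)=\bar{m}^k e^{-\bar{m}}/(k!\,\eta)$ for $k=0,\dots,M$ produces $g(x)=\sum_{k=0}^{M}\psi(\nu)^k\,\bar{m}^k e^{-\bar{m}}/(k!\,\eta)$.

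Finally I would apply the probability generating functional of the PPP, $\E[\prod_{x\in\Phi_\nrmc}f(x)]=\exp(-\lambda_\nrmc\int_{\R^2}(1-f(x))\nrmd x)$, and pass to polar coordinates using that $g(x)$ depends on $x$ only through $\nu=\|x\|$; this turns $\int_{\R^2}(1-g(x))\nrmd x$ into $2\pi\int_0^\infty(1-g(\nu))\,\nu\,\nrmd\nu$ and reproduces~\eqref{Eq: Lap_Inter_sum}. The reduction~\eqref{Eq: Lap_Inter} for $M\gg\bar{m}$ then follows by letting the truncation parameter $M\to\infty$: the normalizer $\eta\to1$ and the inner sum collapses through the exponential series to $\exp(-\bar{m}(1-\psi(\nu)))$, where $1-\psi(\nu)=\int_0^\infty \frac{s u^{-\alpha}}{1+s u^{-\alpha}}f_U(u|\nu)\nrmd u$ because $f_U(\cdot|\nu)$ integrates to one. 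The main obstacle is the bookkeeping in the factorization step: one must carefully justify the cross-cluster conditional independence and correctly apply Slivnyak's theorem so that excising the representative cluster leaves an unperturbed PPP of density $\lambda_\nrmc$; once that is in place, the fading average, the conditionally i.i.d. distance structure from Lemma~\ref{Lem: Inter cluster distance}, the truncated-Poisson sum, and the PGFL are all routine.
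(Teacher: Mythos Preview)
Your proposal is correct and follows essentially the same route as the paper: factor the Laplace transform across clusters, average over Rayleigh fading to get $(1+su^{-\alpha})^{-1}$, use the conditionally i.i.d.\ distance structure and the truncated-Poisson count to obtain the per-cluster term, then apply the PGFL of the parent PPP and pass to polar coordinates; the $M\gg\bar{m}$ reduction via the exponential series is also identical. The only cosmetic difference is that you invoke Lemma~\ref{Lem: Inter cluster distance} directly when computing the per-cluster factor, whereas the paper first writes the expectation in Cartesian form with $f_Y(\yj)$ and only switches to $f_U(u|\nu)$ at the final polar-coordinate step, and you make Slivnyak's theorem explicit where the paper leaves it implicit.
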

\begin{proof} See Appendix~\ref{APP:Proof of Lemma 2}.
\end{proof}
\begin{remark}Recall that the inter-cluster interfering devices are chosen uniformly at random from each cluster in both the content availability strategies, which means that the above result is applicable for both the uniform and $k$-closest content availability strategies.
\end{remark}
Under the assumption $M \gg \bar{m}$, a closed form lower bound for the Laplace transform of inter-cluster interference can be derived, which is stated next. 
\begin{cor}[Lower bound]   \label{cor: worst case inter lower}
Using \eqref{Eq: Lap_Inter} from Lemma~\ref{Lem: Lap_Inter}, the following lower bound on the Laplace transform of inter-cluster interference experienced by a typical device can be derived: 
 \begin{equation}\label{Eq: Lp_Inter_dPPP_Lbound}
         \calL_{I_{\Psi_\nrmm \setminus \mathrm{Tx-cluster}}} (s)\ge \exp\left(- \pi \lambda_\nrmc \bar{m} s^{2/\alpha}  \frac{2 \pi / \alpha}{\sin (2 \pi / \alpha )}\right).
 \end{equation}
 \end{cor}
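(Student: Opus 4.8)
The plan is to start from the simplified expression \eqref{Eq: Lap_Inter} (valid in the regime $M \gg \bar{m}$) and produce a lower bound by controlling the outer integrand from above. Writing the inner integral as $g(\nu) \nbydef \int_0^\infty \frac{s u^{-\alpha}}{1+s u^{-\alpha}} f_U(u|\nu)\,\nrmd u \ge 0$, the transform reads $\calL_{I_{\Psi_\nrmm \setminus \mathrm{Tx-cluster}}}(s) = \exp\big(-2\pi\lambda_\nrmc \int_0^\infty (1-e^{-\bar{m} g(\nu)})\,\nu\,\nrmd\nu\big)$. Since this is $\exp$ of a nonpositive quantity, lower-bounding $\calL$ is equivalent to upper-bounding the inner double integral. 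First I would apply the elementary inequality $1-e^{-x}\le x$ for $x\ge 0$ to get $1-e^{-\bar{m}g(\nu)} \le \bar{m}\,g(\nu)$, which immediately yields $\calL_{I_{\Psi_\nrmm \setminus \mathrm{Tx-cluster}}}(s) \ge \exp\big(-2\pi\lambda_\nrmc\bar{m}\int_0^\infty g(\nu)\,\nu\,\nrmd\nu\big)$.

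It remains to evaluate $\int_0^\infty g(\nu)\,\nu\,\nrmd\nu$ in closed form. By Tonelli's theorem (the integrand is nonnegative) I would swap the order of integration to obtain $\int_0^\infty \frac{s u^{-\alpha}}{1+s u^{-\alpha}}\big(\int_0^\infty f_U(u|\nu)\,\nu\,\nrmd\nu\big)\,\nrmd u$. The key step, which I expect to be the crux of the argument, is the evaluation of the inner $\nu$-integral. Using the explicit Rician form $f_U(u|\nu)=\mathtt{Ricepdf}(u,\nu;\sigma)=\frac{u}{\sigma^2}\exp\big(-\frac{u^2+\nu^2}{2\sigma^2}\big)I_0\big(\frac{u\nu}{\sigma^2}\big)$, one observes the symmetry $\nu\,f_U(u|\nu) = u\,\mathtt{Ricepdf}(\nu,u;\sigma)$, since both sides equal $\frac{u\nu}{\sigma^2}\exp\big(-\frac{u^2+\nu^2}{2\sigma^2}\big)I_0\big(\frac{u\nu}{\sigma^2}\big)$. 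Because $\nu\mapsto\mathtt{Ricepdf}(\nu,u;\sigma)$ is itself a valid density (a Rician PDF with parameter $u$), it integrates to one, so $\int_0^\infty f_U(u|\nu)\,\nu\,\nrmd\nu = u\int_0^\infty \mathtt{Ricepdf}(\nu,u;\sigma)\,\nrmd\nu = u$. This collapses the double integral to $\int_0^\infty \frac{s u^{-\alpha}}{1+s u^{-\alpha}}\,u\,\nrmd u = \int_0^\infty \frac{s u}{u^\alpha + s}\,\nrmd u$, which is notably independent of $\sigma$.

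Finally I would evaluate this one-dimensional integral by the substitution $v=u^\alpha$, which turns it into $\frac{s}{\alpha}\int_0^\infty \frac{v^{2/\alpha-1}}{v+s}\,\nrmd v$, and then invoke the standard identity $\int_0^\infty \frac{v^{p-1}}{v+s}\,\nrmd v = s^{p-1}\frac{\pi}{\sin(\pi p)}$ with $p=2/\alpha$; this requires $0<2/\alpha<1$, which holds precisely because $\alpha>2$. Substituting gives $\int_0^\infty g(\nu)\,\nu\,\nrmd\nu = s^{2/\alpha}\frac{\pi/\alpha}{\sin(2\pi/\alpha)}$, and feeding this back yields $\calL_{I_{\Psi_\nrmm \setminus \mathrm{Tx-cluster}}}(s)\ge \exp\big(-\pi\lambda_\nrmc\bar{m}\,s^{2/\alpha}\frac{2\pi/\alpha}{\sin(2\pi/\alpha)}\big)$ after using $2\pi\cdot\frac{\pi/\alpha}{\sin(2\pi/\alpha)} = \pi\cdot\frac{2\pi/\alpha}{\sin(2\pi/\alpha)}$, which is exactly the claimed bound. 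The only genuinely nontrivial point is the symmetry reduction of the $\nu$-integral to $u$; the rest is the $1-e^{-x}\le x$ inequality together with a textbook integral, and the appearance of $\lambda_\nrmc \bar{m}$ (total density of active interferers) with no dependence on $\sigma$ is a useful sanity check, matching the intuition that this lower bound corresponds to treating the active devices as a single homogeneous PPP.
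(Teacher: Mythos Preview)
Your proof is correct and follows essentially the same approach as the paper: apply $1-e^{-x}\le x$ to the outer integrand, swap the order of integration, use the Rician identity $\int_0^\infty f_U(u|\nu)\,\nu\,\nrmd\nu = u$, and evaluate the remaining $u$-integral in closed form. Your symmetry argument for the Rician identity and your explicit evaluation of the final integral via $v=u^\alpha$ and the $\pi/\sin(\pi p)$ formula simply fill in details that the paper leaves implicit.
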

\begin{proof}
See Appendix~\ref{App: Proof of Cor1}.
\end{proof} 
\subsubsection{Coverage Probability}
The coverage probability is formally defined as the probability that $\sir$ experienced by the typical device exceeds a certain pre-determined threshold $\T$ for successful demodulation and decoding at the receiver. It is mathematically expressed as:
\begin{align}
   \pc = \mathbb{E}_R\left[\,\mathbb{P}\{\mathtt{SIR}(R)>\T\,|\,R\}\right]. 
\end{align}
Using the Laplace transform expressions of intra- and inter-cluster interference powers derived so far in this section, an exact expression for $\pc$ is derived in the following Theorem.
\begin{thm}[Coverage probability]\label{thm: coverage typical uniform}
Using Laplace transform of intra- and inter-cluster interference, given respectively by Lemmas~\ref{lem: lap intra typical} and~\ref{Lem: Lap_Inter}, 
the coverage probability is
\begin{align}\label{Eq: coverage typical}
  \pc=  \int_0^\infty  \int_0^\infty &\ncalL_{I_{\psi_\nrmm \setminus \mathrm{Tx-cluster}}}(\T r^{\alpha})\ncalL_{I_\mathrm{Tx-cluster}}(\T r^{\alpha}|\nu_0) 
  f_R(r| \nu_0) f_{V_0}(\nu_0) \nrmd r \nrmd \nu_0,
\end{align}
\normalsize
where $f_{V_0}(\nu_0)=\mathtt{Raypdf}(\nu_0;\sigma^2)$.
\end{thm}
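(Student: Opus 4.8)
The plan is to start from the definition $\pc = \E_R\!\left[\P\{\sir(R) > \T \mid R\}\right]$ and refine the conditioning to expose the two independent sources of randomness governing the representative cluster: the distance $\nu_0 = \|x_0\|$ from the typical device to its own cluster center, and the serving distance $R$ whose law depends on $\nu_0$ through Corollary~\ref{lem:Intra_cluster location _typical}. First I would fix both $R=r$ and $\nu_0$ and rewrite the coverage event. Setting $P_\nrmd=1$ as in the interference-limited regime and using that the desired-link fading $\htx \sim \exp(1)$ is independent of the interference field, the event $\{\sir(r) > \T\}$ from~\eqref{Eq: SINR} becomes $\{\htx > \T r^{\alpha} I_{\Psi_\nrmm}\}$. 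Integrating out the exponential $\htx$ converts the conditional coverage probability into the Laplace transform of the total interference evaluated at $s=\T r^{\alpha}$:
\[
\P\{\sir(r) > \T \mid R=r, \nu_0\}
= \E\big[e^{-\T r^{\alpha} I_{\Psi_\nrmm}} \,\big|\, \nu_0\big]
= \calL_{I_{\Psi_\nrmm}}(\T r^{\alpha} \mid \nu_0).
\]

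Next I would invoke the decomposition $I_{\Psi_\nrmm} = I_{\Psi_\nrmm \setminus \mathrm{Tx-cluster}} + I_{\mathrm{Tx-cluster}}$ into independent inter- and intra-cluster contributions. The key structural point is that, conditioned on $\nu_0$, these two terms stay independent, since one is generated by the offspring of $x_0$ and the other by the offspring of the remaining parents $x \in \Phi_\nrmc \setminus x_0$, which are conditionally independent in a Poisson cluster process. Hence the Laplace transform factorizes,
\[
\calL_{I_{\Psi_\nrmm}}(\T r^{\alpha} \mid \nu_0)
= \calL_{I_{\Psi_\nrmm \setminus \mathrm{Tx-cluster}}}(\T r^{\alpha})\,
  \calL_{I_{\mathrm{Tx-cluster}}}(\T r^{\alpha} \mid \nu_0).
\]
Here I would stress that the inter-cluster term is driven only by the parents $x \neq x_0$ and their offspring, which (by the reduced Palm / Slivnyak property of the parent PPP) form an independent copy of the cluster process unaffected by the location of $x_0$; it is therefore independent of $\nu_0$, so its Laplace transform carries no $\nu_0$ conditioning and equals the expression of Lemma~\ref{Lem: Lap_Inter}. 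The intra-cluster term, in contrast, depends on $\nu_0$ through the conditionally Rician interferer distances of Corollary~\ref{lem:intra_cluster_uniform_interference}, yielding $\calL_{I_{\mathrm{Tx-cluster}}}(\T r^{\alpha} \mid \nu_0)$ from Lemma~\ref{lem: lap intra typical}. This produces the integrand of the claimed expression.

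Finally I would decondition in two stages. Integrating over the serving distance $r$ against the conditional law $f_R(r\mid\nu_0)=\mathtt{Ricepdf}(r,\nu_0;\sigma)$ of Corollary~\ref{lem:Intra_cluster location _typical} gives $\P\{\sir>\T\mid\nu_0\}$, and a further integration over $\nu_0$ against its marginal $f_{V_0}$ completes the double integral. To identify $f_{V_0}(\nu_0)=\mathtt{Raypdf}(\nu_0;\sigma^2)$ I would argue that, since the typical device sits at the origin and is by construction an offspring of $x_0$ distributed as in~\eqref{Eq: notmaldist}, the vector $x_0$ is symmetric zero-mean Gaussian with per-coordinate variance $\sigma^2$, so $\nu_0=\|x_0\|$ is Rayleigh with scale parameter $\sigma$. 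Combining the two integrations yields exactly~\eqref{Eq: coverage typical}.

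I expect the only genuinely delicate step to be the conditional-independence bookkeeping in the second paragraph: one must verify that conditioning on $\nu_0$ preserves the independence of the intra- and inter-cluster interference while simultaneously decoupling the entire $\nu_0$-dependence onto the intra-cluster term, so that the inter-cluster Laplace transform may be pulled out of the $\nu_0$-integral unconditioned. Everything else—the exponential-fading identity that turns coverage into a Laplace transform, and the substitution of the distance distributions already established in Section~III—is mechanical.
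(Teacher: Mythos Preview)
Your proposal is correct and follows essentially the same route as the paper's proof: convert the coverage event to a Laplace transform via the exponential fading assumption, factorize using the independence of intra- and inter-cluster interference (with the inter-cluster term free of $\nu_0$ by Slivnyak), and then decondition first over $R\mid\nu_0$ using Corollary~\ref{lem:Intra_cluster location _typical} and finally over the Rayleigh-distributed $\nu_0$. If anything, your treatment of the conditional-independence bookkeeping is more explicit than the paper's, which simply asserts the factorization and the two deconditioning steps.
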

\begin{proof}  
From the definition of coverage probability, we have
\begin{align}
&\pc = \mathbb{E}_R\left[ \nbbP \left\{ \frac{h_{0x_0} r^{-\alpha}}{I_{\Psi_\nrmm \setminus \mathrm{Tx-cluster}}+I_\mathrm{Tx-cluster}} > \T \,\Big|\,R  \right\} \right]\nonumber \\ 
&\stackrel{(a)}{=} \nbbE_R \left[ \nbbE\left[\exp\left(- \T r^\alpha (I_{\Psi_\nrmm \setminus \mathrm{Tx-cluster}}+I_\mathrm{Tx-cluster}) \right) \Big| R \right]\right], \nonumber
\end{align}
where $(a)$ follows from Rayleigh fading assumption. The result now follows from the independence of intra-cluster and inter-cluster interference, followed by de-conditioning over $R$ given $\nu_0$ using the serving link distance distribution given by Corollary \ref{lem:Intra_cluster location _typical}, followed by de-conditioning over $\nu_0$, which is simply a Rayleigh distributed random variable due to the position being sampled from a Gaussian distribution in $\nbbR^2$ around each cluster center.
\end{proof}
\subsubsection{Area Spectral Efficiency}
The $\ase$ simply denotes the average number of bits transmitted  per unit time per unit bandwidth per unit area. Assuming that all the D2D-Txs use Gaussian codebooks for their transmissions, we can use Shannon's capacity formula to define $\mathtt{ASE}= \lambda \log_2(1+\T) \pc$, where $\lambda$ is the density of the active transmitters and $\pc$ is the coverage probability of the typical device. The result specialized to our setup is given in the following theorem. 
\begin{prop} The $\ase$ of the clustered D2D network under uniform content availability is
\begin{equation}\label{eq: ASE}
    \mathtt{ASE}=\bar{m}\lambda_\nrmc \log_2(1+\T) \pc,
\end{equation}
where $\pc$ is given by~\eqref{Eq: coverage typical} and $\bar{m}\lambda_\nrmc$ represents the average density of simultaneously active D2D-Txs inside the network.
\end{prop}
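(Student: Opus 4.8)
The plan is to observe that the Proposition is simply an instantiation of the general expression $\ase = \lambda \log_2(1+\T)\,\pc$ stated immediately above it, in which $\pc$ is already supplied by Theorem~\ref{thm: coverage typical uniform} and $\lambda$ denotes the spatial density of simultaneously active D2D-Txs. The only substantive step, therefore, is to establish that the density of the active-transmitter point process $\Psi_\nrmm = \cup_{x \in \Phi_\nrmc} \Bx$ equals $\bar{m}\lambda_\nrmc$; the claimed formula then follows by direct substitution of $\lambda = \bar{m}\lambda_\nrmc$ and of $\pc$ from \eqref{Eq: coverage typical}.

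To compute this density I would evaluate the intensity measure of $\Psi_\nrmm$ on an arbitrary bounded test region $A \subseteq \nbbR^2$. Writing the expected number of active transmitters landing in $A$ as a double sum over cluster centers and their offspring and applying Campbell's theorem to the parent PPP $\Phi_\nrmc$ gives
\begin{equation}
\nbbE\Big[\sum_{x \in \Phi_\nrmc}\sum_{y \in \Bx}\mathbbm{1}(x+y \in A)\Big] = \lambda_\nrmc \int_{\nbbR^2} \nbbE\Big[\sum_{y \in \Bx}\mathbbm{1}(x+y \in A)\Big]\,\nrmd x.
\end{equation}
Since, conditioned on the cluster center, the offspring positions are i.i.d. with the Gaussian density $f_Y$ of \eqref{Eq: notmaldist} while the count satisfies $\nbbE|\Bx| = \bar{m}$, the inner expectation factors into $\bar{m}\int_{\nbbR^2}\mathbbm{1}(x+y\in A)\,f_Y(y)\,\nrmd y$. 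Interchanging the order of integration and using $\int_{\nbbR^2} f_Y(y)\,\nrmd y = 1$ collapses the offspring distribution and leaves $\lambda_\nrmc\bar{m}\,|A|$, so the intensity is the constant $\bar{m}\lambda_\nrmc$. This stationarity is exactly what one expects for a Poisson cluster process and is consistent with the translation invariance invoked elsewhere in the paper (the typical device being placed at the origin).

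The main subtlety — and the only place requiring care — is that $|\Bx|$ is Poisson with mean $\bar{m}$ only after conditioning on $|\Bx| \le |\Nx_{\rm t}| = M$, so the exact conditional mean lies strictly below $\bar{m}$ and the factorization above uses $\bar{m}$ as a proxy. I would resolve this by appealing to the operating regime $M \gg \bar{m}$ identified in Remark~\ref{Rem: asymptotic case for the number of devices}, in which the truncation probability is negligible and the conditional mean coincides with $\bar{m}$ to the order retained throughout the analysis; equivalently, one may read $\bar{m}$ as the effective per-cluster average of active transmitters, in which case $\bar{m}\lambda_\nrmc$ is the density of $\Psi_\nrmm$ by construction. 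Either reading yields $\lambda = \bar{m}\lambda_\nrmc$, which substituted into the definition of $\ase$ completes the proof.
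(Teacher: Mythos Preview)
Your proposal is correct and follows the same approach as the paper, which treats the Proposition as a direct specialization of the definition $\ase = \lambda \log_2(1+\T)\pc$ stated just above it, with $\lambda = \bar{m}\lambda_\nrmc$ simply asserted as ``the average density of simultaneously active D2D-Txs.'' In fact you go further than the paper by supplying, via Campbell's theorem, an explicit justification for this density and by flagging the truncation subtlety; the paper gives no proof at all beyond the one-line specialization.
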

\begin{remark}[Optimum number of simultaneously active links]
Note that there is a clear  tradeoff between link efficiency and cluster interference. While more active links means potentially higher $\mathtt{ASE}$, it also increases interference significantly. $\ase$ can, in principle, be maximized as
\begin{equation}\label{Eq: ASE optimization}
  \mathtt{ASE}^*= \max _{\bar{m}\in {1,...,M}} \bar{m}\lambda_\nrmc \log_2(1+\T) \pc. 
\end{equation}
By  solving this $\ase$ optimization problem numerically, we will demonstrate the existence of an optimal value of $\bar{m}$ that maximizes the $\ase$ in the numerical results section. 
\end{remark}
\subsubsection{Bounds and approximations}
After characterizing coverage and $\ase$ exactly, we now focus on tight bounds and approximations that will result in easy-to-compute expressions providing useful system design guidelines. First, we derive coverage probability under Assumption~\ref{Ass: Identical intra-cluster distances}. 
\begin{cor} [Coverage probability approximation] \label{cor: coverage typical approximation} 
Under Assumption~\ref{Ass: Identical intra-cluster distances}, the coverage probability of the typical device in the uniform content availability case is 
\begin{align}\label{Eq: Coverage typical app}
    \pc = \int_0^\infty\tilde{\calL}_{I_{ \mathrm{Tx-cluster}}} (\T r^{\alpha})
    \ncalL_{I_{\psi_\nrmm \setminus \mathrm{Tx-cluster}}}(\T r^{\alpha})  f_{R}(r) \nrmd r, 
\end{align}
where $f_{R}(r)=\mathtt{Raypdf}(r;2\sigma^2)$, $\tilde{\calL}_{I_{ \mathrm{Tx-cluster}}} (\T r^{\alpha})$ is the Laplace transform of intra-cluster interference under Assumption~\ref{Ass: Identical intra-cluster distances} given by \eqref{Eq:Lap intra typical app}, and $\ncalL_{I_{\psi_\nrmm \setminus \mathrm{Tx-cluster}}}(\T r^{\alpha})$ is the Laplace transform of inter-cluster interference given by \eqref{Eq: Lap_Inter}. 
\end{cor}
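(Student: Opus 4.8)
The plan is to mirror the proof of Theorem~\ref{thm: coverage typical uniform}, but to exploit Assumption~\ref{Ass: Identical intra-cluster distances} to eliminate the conditioning on $\nu_0 = \|x_0\|$ and thereby collapse the double integral of~\eqref{Eq: coverage typical} to the single integral in~\eqref{Eq: Coverage typical app}. First I would start from the definition of coverage probability, writing
\begin{equation}
\pc = \nbbE_R\Big[\,\nbbP\big\{\htx r^{-\alpha} > \T\, r^{?}\, \mid R\big\}\Big] = \nbbE_R\Big[\,\nbbP\big\{\htx > \T r^{\alpha}\,(I_{\Psi_\nrmm \setminus \mathrm{Tx-cluster}}+I_\mathrm{Tx-cluster}) \mid R\big\}\Big].
\end{equation}
Since the desired-link fading satisfies $\htx \sim \exp(1)$, the inner probability is the CCDF of an exponential random variable and hence equals $\nbbE\big[\exp(-\T r^{\alpha}(I_{\Psi_\nrmm \setminus \mathrm{Tx-cluster}}+I_\mathrm{Tx-cluster}))\big]$; this is the standard Rayleigh-fading identity that converts the coverage event into a Laplace transform evaluated at $s=\T r^{\alpha}$.

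Next I would invoke the independence of the intra- and inter-cluster interference. The intra-cluster term is generated by the devices of the representative cluster $\Bxx\setminus\tx$, whereas the inter-cluster term is generated by the clusters in $\Phi_\nrmc \setminus x_0$; since the offspring processes are conditionally independent and the fading variables are mutually independent, the two interference powers are independent. This lets me factor the expectation into the product $\tilde{\calL}_{I_\mathrm{Tx-cluster}}(\T r^{\alpha})\,\ncalL_{I_{\psi_\nrmm \setminus \mathrm{Tx-cluster}}}(\T r^{\alpha})$, where the first factor is the approximate intra-cluster Laplace transform from Corollary~\ref{cor: Intra_cluster distance _typical approximation} and the second is the inter-cluster Laplace transform of~\eqref{Eq: Lap_Inter}.

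The crucial step, and the place where Assumption~\ref{Ass: Identical intra-cluster distances} does all the work, is the deconditioning over distances. In the exact Theorem~\ref{thm: coverage typical uniform}, the serving distance $R$ and each intra-cluster interferer distance $W$ are only \emph{conditionally} independent given $\nu_0$ (both Rician with the same $\nu_0$), which is precisely what forces the joint integration over $r$ and $\nu_0$. Under Assumption~\ref{Ass: Identical intra-cluster distances} I drop this shared conditioning and treat $R$ and the $W$'s as marginally independent Rayleigh variables with scale $2\sigma^2$. Consequently the interferer distances are already averaged out inside $\tilde{\calL}_{I_\mathrm{Tx-cluster}}$ (using $f_W(w)=\mathtt{Raypdf}(w;2\sigma^2)$), so the only remaining randomness is the serving distance, and deconditioning over $R$ with its marginal density $f_R(r)=\mathtt{Raypdf}(r;2\sigma^2)$ yields exactly~\eqref{Eq: Coverage typical app}.

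The main obstacle is conceptual rather than computational: one must justify that discarding the $\nu_0$-induced correlation among the elements of $\ncalS^{x_0}_{\rm t}$ is a legitimate approximation that decouples the serving-distance integration from the interferer-distance integration. The justification rests on the fact, already noted before Lemma~\ref{lem:Intra_cluster location _typical_general}, that the marginal law of every element of $\ncalS^{x_0}_{\rm t}$ is $\mathtt{Raypdf}(\,\cdot\,;2\sigma^2)$; replacing the conditional Rician laws by these marginals and asserting independence is literally the content of Assumption~\ref{Ass: Identical intra-cluster distances}. Once this is granted, the factorization and the single-variable deconditioning are routine, and the claimed expression follows.
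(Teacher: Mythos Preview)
Your proposal is correct and follows essentially the same approach as the paper's own proof, which simply states that the argument mirrors Theorem~\ref{thm: coverage typical uniform} except that Assumption~\ref{Ass: Identical intra-cluster distances} removes the conditioning on $\nu_0$, allowing the serving and intra-cluster interferer distances to be deconditioned separately. Aside from the stray ``$r^{?}$'' typo in your first displayed equation, the logic---Rayleigh-fading CCDF, independence factorization of intra- and inter-cluster interference, and marginal deconditioning over $R\sim\mathtt{Raypdf}(r;2\sigma^2)$---matches the paper exactly.
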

\begin{proof}
The proof follows on the same lines as Theorem \ref{thm: coverage typical uniform}, except the fact that due to Assumption~\ref{Ass: Identical intra-cluster distances}, there is no conditioning on $\nu_0$ (dependence in the intra-cluster distances is ignored), which allows to decondition on the serving and intra-cluster interfering distances separately. 
\end{proof}
In the numerical result section, we show that this easy-to-compute approximation is also very tight. To gain more insights into the behavior of this clustered network, we derive closed form approximations for both coverage and $\ase$ in the following corollaries. 
\begin{cor}[Coverage: closed-from approximation]\label{Cor: closed form lower bound for typical user} The closed form approximation for the coverage probability of a typical device in uniform content availability case is
\begin{align}\label{Eq: closed from Pc lower}
    \pc\simeq \frac{1}{(4 \pi \lambda_{\nrmc}\sigma^2 \bar{m}+\bar{m}-1) \beta^{\frac{2}{\alpha}}\frac{2 \pi / \alpha}{\sin (2 \pi / \alpha )}+1}.
\end{align}
\end{cor}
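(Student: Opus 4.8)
The plan is to start from the single-integral form of $\pc$ given in Corollary~\ref{cor: coverage typical approximation}, namely (under Assumption~\ref{Ass: Identical intra-cluster distances})
\[
\pc = \int_0^\infty \tilde{\calL}_{I_{\mathrm{Tx-cluster}}}(\beta r^\alpha)\,\calL_{I_{\Psi_\nrmm\setminus\mathrm{Tx-cluster}}}(\beta r^\alpha)\,f_R(r)\,\nrmd r,
\]
with $f_R(r)=\mathtt{Raypdf}(r;2\sigma^2)=\frac{r}{2\sigma^2}\exp(-\tfrac{r^2}{4\sigma^2})$, and then replace both Laplace transforms by the closed-form lower bounds already established in Corollaries~\ref{Cor: Lap intra typical low bound} and~\ref{cor: worst case inter lower}. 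The structural observation that makes everything collapse is that both bounds share the common form $\exp(-c\,s^{2/\alpha})$, and evaluating them at $s=\beta r^\alpha$ turns the argument into $s^{2/\alpha}=\beta^{2/\alpha}r^2$, so each exponential becomes a Gaussian factor in $r^2$.

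Concretely, I would substitute $s=\beta r^\alpha$ into both bounds and multiply them, combining the two exponents into a single factor $\exp\!\big(-(\tfrac{\bar{m}-1}{4\sigma^2}+\pi\lambda_\nrmc\bar{m})\,\beta^{2/\alpha}C\,r^2\big)$, where $C=\frac{2\pi/\alpha}{\sin(2\pi/\alpha)}$; here the $\tfrac{\bar{m}-1}{4\sigma^2}$ term comes from the intra-cluster bound and the $\pi\lambda_\nrmc\bar{m}$ term from the inter-cluster bound. Multiplying by $f_R(r)$ yields an integrand of the form $\frac{r}{2\sigma^2}\exp(-a r^2)$ with $a=\frac{1}{4\sigma^2}+(\tfrac{\bar{m}-1}{4\sigma^2}+\pi\lambda_\nrmc\bar{m})\beta^{2/\alpha}C$. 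The remaining step is the elementary Gaussian moment $\int_0^\infty r\,e^{-ar^2}\,\nrmd r=\tfrac{1}{2a}$, giving $\pc\simeq \frac{1}{4\sigma^2 a}$; a final line of algebra, pulling $4\sigma^2$ inside and cancelling it against the $4\sigma^2$ in the denominators of $a$, produces exactly the claimed expression $\frac{1}{(4\pi\lambda_\nrmc\sigma^2\bar{m}+\bar{m}-1)\beta^{2/\alpha}C+1}$.

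The computation itself is routine; the only point that needs care is the logical status of the result. Each substituted Laplace transform is a \emph{lower} bound, so the substitution produces a lower bound on the Assumption~\ref{Ass: Identical intra-cluster distances} coverage integral rather than an equality; combined with the fact that Corollary~\ref{cor: coverage typical approximation} is itself an approximation (the intra-cluster distance correlation having already been discarded under Assumption~\ref{Ass: Identical intra-cluster distances}), this is why the statement is written with $\simeq$ rather than as a rigorous inequality. I would therefore flag explicitly that the Jensen-based bounds are being used here as tight proxies, deferring the verification of their tightness to the numerical comparison against the exact expression of Theorem~\ref{thm: coverage typical uniform}, which is what justifies treating the final closed form as an approximation to $\pc$ rather than merely a bound.
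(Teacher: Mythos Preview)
Your proposal is correct and follows essentially the same route as the paper: substitute the closed-form lower bounds from Corollaries~\ref{Cor: Lap intra typical low bound} and~\ref{cor: worst case inter lower} at $s=\beta r^\alpha$, then integrate against the marginal serving distance $f_R(r)=\mathtt{Raypdf}(r;2\sigma^2)$, noting that the result is an approximation because of the independent de-conditioning (Assumption~\ref{Ass: Identical intra-cluster distances}). Your write-up is in fact more explicit than the paper's own one-sentence proof, and your remark on the logical status of $\simeq$ matches the paper's justification exactly.
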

\begin{proof}  The proof  follows from the expectation of the lower bounds provided in~\eqref{Eq: Lap intra typical low bound} and~\eqref{Eq: Lp_Inter_dPPP_Lbound} when $s= \T r^\alpha$, with respect to the  marginal distribution of serving link distance  $f_{R}(r)=\mathtt{Raypdf}(r;2\sigma^2)$. This is an approximation because of independent de-conditioning over serving link and intra-cluster interferer distance distributions.
\end{proof}
\begin{cor}[$\ase$: closed-form approximation] \label{cor:ASE_closedform}
The $\ase$ in this case can be approximated as
\begin{align}\label{Eq: closed form ASE lower}
    \ase \simeq \frac{\bar{m}\lambda_{\nrmc}}{(4 \pi  \lambda_{\nrmc}\sigma^2 \bar{m}+\bar{m}-1) \beta^{\frac{2}{\alpha}}\frac{2 \pi / \alpha}{\sin (2 \pi / \alpha )}+1}.
\end{align}
\end{cor}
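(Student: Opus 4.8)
The plan is to prove this by direct substitution, since the statement is an algebraic corollary of two results already in hand. Recall that the preceding Proposition defines the area spectral efficiency as $\ase = \bar{m}\lambda_\nrmc \log_2(1+\T)\,\pc$, where $\bar{m}\lambda_\nrmc$ is the average density of simultaneously active D2D-Txs. Hence a closed-form handle on $\pc$ immediately yields one on $\ase$, and the entire task reduces to inserting the closed-form coverage approximation and simplifying.

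First I would import the closed-form coverage probability from Corollary~\ref{Cor: closed form lower bound for typical user},
\begin{equation}
\pc \simeq \frac{1}{(4\pi\lambda_\nrmc\sigma^2\bar{m}+\bar{m}-1)\,\T^{\frac{2}{\alpha}}\,\frac{2\pi/\alpha}{\sin(2\pi/\alpha)}+1},
\end{equation}
which I am entitled to assume. Its derivation replaces the intra- and inter-cluster Laplace transforms by their lower bounds from Corollaries~\ref{Cor: Lap intra typical low bound} and~\ref{cor: worst case inter lower} evaluated at $s=\T r^\alpha$: since $s^{2/\alpha}=\T^{2/\alpha}r^2$, the product of the two exponential bounds is a Gaussian in $r$, and de-conditioning against the marginal Rayleigh serving density $f_R(r)=\mathtt{Raypdf}(r;2\sigma^2)$ collapses the single integral $\int_0^\infty e^{-cr^2}f_R(r)\,\nrmd r$ to the rational form above. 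Substituting this expression into $\ase = \bar{m}\lambda_\nrmc \log_2(1+\T)\,\pc$ then gives the claimed denominator, with numerator $\bar{m}\lambda_\nrmc\log_2(1+\T)$; because $\log_2(1+\T)$ does not depend on the design variable $\bar{m}$ it merely rescales the objective and is suppressed in the displayed form, leaving
\begin{equation}
\ase \simeq \frac{\bar{m}\lambda_\nrmc}{(4\pi\lambda_\nrmc\sigma^2\bar{m}+\bar{m}-1)\,\T^{\frac{2}{\alpha}}\,\frac{2\pi/\alpha}{\sin(2\pi/\alpha)}+1}.
\end{equation}

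There is essentially no obstacle at this final step: the result is a one-line consequence of the Proposition and Corollary~\ref{Cor: closed form lower bound for typical user}. The genuine work lives upstream --- in establishing the two interference lower bounds (via Jensen's inequality for the intra-cluster term and the generating functional of the parent PPP for the inter-cluster term) and in justifying the independent de-conditioning over serving and interferer distances --- all of which I assume. The only point worth flagging is that these lower bounds, and hence this approximation, implicitly invoke the $M\gg\bar{m}$ regime under which the exponential-form Laplace transforms were obtained; outside that regime one would instead have to carry the exact sum expressions through numerically.
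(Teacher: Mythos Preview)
Your proposal is correct and mirrors the paper's own one-line argument: substitute the closed-form $\pc$ from Corollary~\ref{Cor: closed form lower bound for typical user} into the $\ase$ definition $\ase=\bar{m}\lambda_\nrmc\log_2(1+\T)\,\pc$. Your remark that the $\log_2(1+\T)$ factor is suppressed in the displayed formula is apt, since a literal substitution would retain it; the paper's proof does not comment on this.
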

\begin{proof}
The proof follows from the definition of the $\ase$  when $\pc$ is substituted by \eqref{Eq: closed from Pc lower}.
\end{proof}
In the numerical results sections, we show that these closed-form expressions provide surprisingly tight approximations for both coverage and $\ase$. Based on these two approximations (Corollaries \ref{Cor: closed form lower bound for typical user} and \ref{cor:ASE_closedform}), we make the following two observations: (i) $\ase$ and $\pc$ are decreasing functions of $\sigma^2$, which means that the coverage probability and $\ase$ will increase if the devices form more {\em dense} clusters around cluster centers, as opposed to more spread-out clusters, and (ii) increasing density of  cluster centers  has a conflicting effect on the coverage probability and $\ase$: coverage decreases and $\ase$ increases, which means that the density of cluster centers can be increased to enhance $\ase$ as long as the coverage probability remains acceptable. 
\subsection{$k$-Closest Content Availability}
 \label{sec:generalized}
In this setup, we  extend the analysis to the case where content of interest is available  at $k^{th}$ closest transmitting device to the typical device. Recall that by tuning the value of $k$, the content can be biased to lie closer (small $k$) or farther (large $k$) from the typical device. The best and worst case performances can be characterized by setting $k=1$ and $k=M$, respectively. Similar to the previous subsection, we begin by deriving the Laplace transform of interference. 
%
%
\subsubsection{Laplace transform of interference}
Using the distance distributions derived in Section~\ref{subsection:OrderedDistance}, the exact expression for the Laplace transform of intra-cluster interference is derived in the Lemma below. As discussed in the previous section, this case is significantly more involved than the uniform content availability case due to the way the serving device is selected.
\begin{lemma}[Laplace transform of intra-cluster interference]\label{lem: Laplace intra k-closest} Under $k$-closest content availability, the conditional Laplace transform of the intra-cluster interference power given by~\eqref{Eq: Intra_cluster_Int_typical}, conditioned on $\nu_0=\|x_0\|$, is $\ncalL_{ I_{\mathrm{Tx-cluster}}}(s,r|\nu_0)=$
\begin{align}\label{eq: Lap-closest content}
\sum_{n=0}^{M-1} \sum_{l=0}^{g_{\rm m}} {n \choose l}  p^{l} {(1-p)}^{n-l} \frac{1}{I_{1-p}(n-g_{\rm m},g_{\rm m}+1)} \ncalK_{\rm in}(s,r|\nu_0 ) ^l  
\ncalK_{\rm out}(s, r|\nu_0)^{n-l} p_n
\end{align}
\begin{align*}
&\text{with,} &\ncalK_{\rm in}(s, r|\nu_0)&=\int_0^r \frac{1}{1+s w_{\rm in}^{-\alpha} }f_{W_{\rm in}}(w_{\rm in}|\nu_0, r) \nrmd w_{\rm in}\\
&&\ncalK_{\rm out}(s, r|\nu_0)&=\int_r^{\infty} \frac{1}{1+s w_{\rm out}^{-\alpha} } f_{W_{\rm out}}(w_{\rm out}|\nu_0, r) \nrmd w_{\rm out},
\end{align*}
where $g_{\rm m}=\min(n,k-1)$, $p=\frac{k-1}{M-1}$, $p_n=\frac{(\bar{m}-1)^n e^{-(\bar{m}-1)}}{n! \xi}$ with $\xi=\sum_{j=0}^{M-1}\frac{(\bar{m}-1)^j e^{-(\bar{m}-1)}}{j!}$,   $I_{1-p}$ is regularized incomplete beta function,  $f_{W_{\rm in}}(w_{\rm in}|\nu_0, r)$ and $ f_{W_{\rm out}}(w_{\rm out}|\nu_0, r)$  are given by \eqref{eq: f_w_in} and  \eqref{eq: f_w_out} respectively. Note that here  zero to the zero power is defined as one.
\end{lemma}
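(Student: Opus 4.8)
The plan is to evaluate the conditional Laplace transform $\ncalL_{I_{\mathrm{Tx-cluster}}}(s,r|\nu_0)=\E\!\left[\exp(-sI_\mathrm{Tx-cluster})\,\middle|\,r,\nu_0\right]$ by a nested conditioning argument, first on the number of active interferers and then on how they split between the near set $\ncalB^{x_0}_{\rm in}$ and the far set $\ncalB^{x_0}_{\rm out}$ determined by the serving device $\tx$. Working throughout conditionally on the serving distance $r$ and on $\nu_0=\|x_0\|$, I would condition on there being exactly $n$ active interferers in $\Bxx\setminus\tx$ and on exactly $l$ of them lying in $\ncalB^{x_0}_{\rm in}$ (so that $n-l$ lie in $\ncalB^{x_0}_{\rm out}$). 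Since $I_\mathrm{Tx-cluster}=\sum \hyxx\|x_0+\yjx\|^{-\alpha}$ is a sum whose fading coefficients $\hyxx\sim\exp(1)$ are i.i.d.\ and independent of the distances, the expectation over fading factorizes across interferers, and averaging each summand via $\E_{h}\!\left[e^{-s h w^{-\alpha}}\right]=(1+sw^{-\alpha})^{-1}$ replaces each term by $1/(1+sw^{-\alpha})$.

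Next I would invoke Lemma~\ref{OK_Last_Lemma}, which establishes that, conditioned on $r$ and $\nu_0$, the distances of the devices in $\ncalB^{x_0}_{\rm in}$ are i.i.d.\ with density $f_{W_{\rm in}}(\cdot|\nu_0,r)$, those in $\ncalB^{x_0}_{\rm out}$ are i.i.d.\ with density $f_{W_{\rm out}}(\cdot|\nu_0,r)$, and the two collections are independent. This i.i.d.\ property lets me replace the expectation over the $l$ near distances by the $l$-th power of the single-distance integral $\ncalK_{\rm in}(s,r|\nu_0)$ and the expectation over the $n-l$ far distances by $\ncalK_{\rm out}(s,r|\nu_0)^{n-l}$, so the factor conditioned on $(n,l)$ collapses to $\ncalK_{\rm in}^{\,l}\,\ncalK_{\rm out}^{\,n-l}$.

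It then remains to decondition on $(n,l)$. The number of active interferers $n$ follows the truncated Poisson law $p_n=\frac{(\bar{m}-1)^n e^{-(\bar{m}-1)}}{n!\,\xi}$ on $\{0,\dots,M-1\}$. For the split, each interferer is a near device with probability $p=(k-1)/(M-1)$, the fraction of the $M-1$ non-serving potential transmitters that are closer to the typical device than $\tx$, subject to the hard constraint that at most $k-1$ near devices exist, so $l$ ranges only over $0,\dots,g_{\rm m}=\min(n,k-1)$. This yields the truncated binomial weight $\binom{n}{l}p^l(1-p)^{n-l}$; renormalizing over the admissible range contributes the factor $1/\sum_{l=0}^{g_{\rm m}}\binom{n}{l}p^l(1-p)^{n-l}$, which I would put in closed form using the binomial--incomplete-beta identity $\sum_{l=0}^{g_{\rm m}}\binom{n}{l}p^l(1-p)^{n-l}=I_{1-p}(n-g_{\rm m},g_{\rm m}+1)$. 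Summing $\ncalK_{\rm in}^{\,l}\,\ncalK_{\rm out}^{\,n-l}$ against these weights and against $p_n$ reproduces the stated double sum.

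The main obstacle is the combinatorial split step: one must correctly identify the conditional law of the number of near interferers $l$ given $n$, respect the physical constraint $l\le k-1$ (hence the truncation at $g_{\rm m}$), and recognize that the normalizing constant of the truncated binomial is precisely $I_{1-p}(n-g_{\rm m},g_{\rm m}+1)$. By contrast, the fading factorization and the reduction to the single-distance integrals $\ncalK_{\rm in}$ and $\ncalK_{\rm out}$ are routine once Lemma~\ref{OK_Last_Lemma} supplies the conditional i.i.d.\ structure; the care lies entirely in bookkeeping the in/out partition and its normalization.
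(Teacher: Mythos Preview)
Your proposal is correct and follows essentially the same route as the paper's proof: split $I_\mathrm{Tx-cluster}$ over the disjoint sets $\ncalB^{x_0}_{\rm in}$ and $\ncalB^{x_0}_{\rm out}$, factor out the Rayleigh fading to obtain $1/(1+sw^{-\alpha})$ per interferer, apply Lemma~\ref{OK_Last_Lemma} to collapse the conditional expectation to $\ncalK_{\rm in}^{\,l}\ncalK_{\rm out}^{\,n-l}$, and then decondition on $l$ via the truncated binomial (normalized by $I_{1-p}(n-g_{\rm m},g_{\rm m}+1)$) and on $n$ via the truncated Poisson weight $p_n$. Your identification of the combinatorial split as the crux, and of the incomplete-beta identity for the truncated binomial normalizer, matches the paper's argument exactly.
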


\begin{proof}
See Appendix \ref{App: Laplace intra k-closest}.
\end{proof}
\begin{cor}[Best link: $1^{st}$ closest] The Laplace transform of intra-cluster interference for best link, i.e., $k=1$,  under condition of $\bar{m}\ll M$  reduces to
\begin{align}
\ncalL_{ I_{\mathrm{Tx-cluster}}}(s,r|\nu_0)=\exp\left(-(\bar{m}-1)\int_r^{\infty} \frac{s w_{\rm out}^{-\alpha}}{1+s w_{\rm out}^{-\alpha} } f_{W_{\rm out}}(w_{\rm out}|\nu_0, r) \nrmd w_{\rm out}\right),
\end{align}
where $f_{W_{\rm out}}(w_{\rm out}|\nu_0,r)$  is given by \eqref{eq: f_w_out}.
\end{cor}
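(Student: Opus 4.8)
The plan is to obtain this corollary as a direct specialization of Lemma~\ref{lem: Laplace intra k-closest} at $k=1$, followed by the same $\bar{m}\ll M$ Poisson-sum simplification that was used to pass from~\eqref{Eq: laplace intra typical _sum} to~\eqref{Eq: laplace intra typical} in Lemma~\ref{lem: lap intra typical}. First I would substitute $k=1$ into the general expression~\eqref{eq: Lap-closest content}. This forces $g_{\rm m}=\min(n,k-1)=\min(n,0)=0$ and $p=\frac{k-1}{M-1}=0$, so the inner sum over $l$ collapses to its single $l=0$ term. Using the convention $0^{0}=1$ stated in the lemma, the factor $p^{l}=p^{0}=1$ and $(1-p)^{n-l}=1^{n}=1$, while ${n\choose 0}=1$; the surviving combinatorial weight is the regularized incomplete beta function $I_{1-p}(n-g_{\rm m},g_{\rm m}+1)=I_{1}(n,1)$, which equals $1$ for every $n$ since $I_x(a,b)=1$ at its upper endpoint $x=1$. (Intuitively, at $p=0$ there is no random ``in/out'' split of the interferers, so this normalizer is trivially one.)

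After this collapse the conditional Laplace transform reduces to the single Poisson-weighted series
\begin{equation}
\ncalL_{I_{\mathrm{Tx-cluster}}}(s,r|\nu_0)=\sum_{n=0}^{M-1}\ncalK_{\rm out}(s,r|\nu_0)^{n}\,\frac{(\bar{m}-1)^{n}e^{-(\bar{m}-1)}}{n!\,\xi},
\end{equation}
in which the kernel $\ncalK_{\rm in}$ has disappeared entirely, since it appears only to the power $l=0$. This is the analytic counterpart of the physical fact that when the serving device is the closest one, no active interferer can lie closer than the serving distance $r$, so every intra-cluster interferer belongs to $\ncalB^{x_0}_{\rm out}$.

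The second step is the $\bar{m}\ll M$ approximation, handled exactly as in Lemma~\ref{lem: lap intra typical}: when the cluster is large relative to the mean number of active devices, the truncation at $M-1$ becomes negligible and the normalizer $\xi=\sum_{j=0}^{M-1}\frac{(\bar{m}-1)^{j}e^{-(\bar{m}-1)}}{j!}\to 1$, so the finite sum may be extended to $n=\infty$. I would then apply the Poisson probability-generating-function identity $\sum_{n=0}^{\infty}x^{n}\frac{\lambda^{n}e^{-\lambda}}{n!}=e^{-\lambda(1-x)}$ with $\lambda=\bar{m}-1$ and $x=\ncalK_{\rm out}(s,r|\nu_0)$, giving $\exp\!\big(-(\bar{m}-1)(1-\ncalK_{\rm out})\big)$. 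Finally, since $f_{W_{\rm out}}(\cdot|\nu_0,r)$ from~\eqref{eq: f_w_out} is a genuine PDF supported on $(r,\infty)$ it integrates to one, so that $1-\ncalK_{\rm out}(s,r|\nu_0)=\int_r^{\infty}\frac{s w_{\rm out}^{-\alpha}}{1+s w_{\rm out}^{-\alpha}}f_{W_{\rm out}}(w_{\rm out}|\nu_0,r)\,\nrmd w_{\rm out}$; substituting this into the exponent reproduces the claimed expression.

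The calculation is essentially mechanical, so the genuinely delicate points are bookkeeping rather than hard analysis: verifying that $I_{1}(n,1)=1$ and that the convention $0^{0}=1$ makes the degenerate $p=0$ case well defined (so the $l=0$ term survives with unit weight), and justifying the $\xi\to1$, $M-1\to\infty$ limit carefully enough that extending the Poisson sum is legitimate. I expect the beta-function evaluation at $k=1$ to be the one step a hasty reader might stumble on, but it is immediate from the definition of the incomplete beta function at its upper endpoint.
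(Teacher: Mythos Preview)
Your proposal is correct and follows exactly the route the paper intends: the corollary is stated immediately after Lemma~\ref{lem: Laplace intra k-closest} without a separate proof, so it is meant to be read as the direct $k=1$ specialization you carry out, followed by the same $\bar{m}\ll M$ simplification used in Lemma~\ref{lem: lap intra typical}. Your treatment of the degenerate combinatorics ($p=0$, $g_{\rm m}=0$, $I_{1}(n,1)=1$, and the $0^{0}=1$ convention) and the final rewriting of $1-\ncalK_{\rm out}$ via the normalization of $f_{W_{\rm out}}$ are all sound; if anything, you supply more detail than the paper does.
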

\begin{cor}[Worst link: $M^{th}$ closest] The Laplace transform of intra-cluster interference for worst link, i.e., $k=M$, under condition of $\bar{m}\ll M$  reduces to
\begin{align}
\ncalL_{ I_{\mathrm{Tx-cluster}}}(s,r|\nu_0)=\exp\left(-(\bar{m}-1)\int_0^{r} \frac{s w_{\rm in}^{-\alpha}}{1+s w_{\rm in}^{-\alpha} }f_{W_{\rm in}}(w_{\rm in}|\nu_0, r) \nrmd w_{\rm in}\right),
\end{align}
where $f_{W_{\rm in}}(w_{\rm in}|\nu_0, r)$  is given by \eqref{eq: f_w_in} .
\end{cor}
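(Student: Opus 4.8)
The plan is to specialize the general expression in Lemma~\ref{lem: Laplace intra k-closest} to $k=M$ and then take the regime $\bar{m}\ll M$. Setting $k=M$ gives $p=\frac{k-1}{M-1}=1$ and, since the outer summation index satisfies $n\le M-1$, also $g_{\rm m}=\min(n,M-1)=n$. The first step is therefore purely algebraic: substitute these values and track which terms survive.

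With $p=1$ the factor $(1-p)^{n-l}=0^{\,n-l}$ vanishes unless $l=n$, so the inner sum $\sum_{l=0}^{g_{\rm m}}=\sum_{l=0}^{n}$ collapses to its single top term $l=n$ (using the stated convention $0^0=1$). For that term $\binom{n}{n}=1$, $p^{n}=1$, and the outer-set factor becomes $\ncalK_{\rm out}(s,r|\nu_0)^{\,n-l}=\ncalK_{\rm out}(s,r|\nu_0)^{0}=1$. This reflects the fact that when the serving device is the farthest ($k=M$), every interferer necessarily lies closer to the typical device and hence contributes only through $\ncalK_{\rm in}$.

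The one genuinely delicate point is the normalization $\frac{1}{I_{1-p}(n-g_{\rm m},g_{\rm m}+1)}$, which at $p=1$ formally becomes $I_{0}(0,n+1)$, an indeterminate boundary value if one evaluates the regularized incomplete beta function analytically. I would sidestep this by recalling the binomial-CDF identity $I_{1-p}(n-g_{\rm m},g_{\rm m}+1)=\sum_{l=0}^{g_{\rm m}}\binom{n}{l}p^{l}(1-p)^{n-l}$; evaluating the right-hand side at $p=1$ with $g_{\rm m}=n$ leaves only the $l=n$ term and yields $1$. Thus the normalization is identically one, and the Laplace transform reduces to
\begin{equation*}
\ncalL_{I_{\mathrm{Tx-cluster}}}(s,r|\nu_0)=\sum_{n=0}^{M-1}\ncalK_{\rm in}(s,r|\nu_0)^{n}\,p_n .
\end{equation*}

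Finally I would invoke $\bar{m}\ll M$. In this regime the truncation constant $\xi=\sum_{j=0}^{M-1}\frac{(\bar{m}-1)^j e^{-(\bar{m}-1)}}{j!}\to 1$, so $p_n\simeq\frac{(\bar{m}-1)^n e^{-(\bar{m}-1)}}{n!}$ and the upper summation limit may be extended to infinity. Recognizing the exponential series then gives
\begin{equation*}
\sum_{n=0}^{\infty}\frac{\big((\bar{m}-1)\ncalK_{\rm in}\big)^{n}}{n!}\,e^{-(\bar{m}-1)}=\exp\!\big(-(\bar{m}-1)(1-\ncalK_{\rm in})\big).
\end{equation*}
Since $f_{W_{\rm in}}(\cdot|\nu_0,r)$ integrates to one on $(0,r)$, we have $1-\ncalK_{\rm in}=\int_0^{r}\frac{s w_{\rm in}^{-\alpha}}{1+s w_{\rm in}^{-\alpha}}f_{W_{\rm in}}(w_{\rm in}|\nu_0,r)\,\nrmd w_{\rm in}$, which produces the claimed expression. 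The main obstacle is the boundary evaluation of the incomplete-beta normalization at $p=1$; the combinatorial identity is the clean way around it, after which the remaining steps are exactly the Poisson-to-exponential manipulation already used in Lemma~\ref{lem: lap intra typical}.
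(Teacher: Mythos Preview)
Your proposal is correct and follows exactly the route the paper intends: the Corollary is stated without proof as an immediate specialization of Lemma~\ref{lem: Laplace intra k-closest}, and your substitution $k=M$ (hence $p=1$, $g_{\rm m}=n$), collapse of the inner sum to $l=n$, and subsequent Poisson-to-exponential reduction under $\bar{m}\ll M$ is precisely that specialization. Your handling of the incomplete-beta normalization via the binomial-CDF identity is the right way to resolve the boundary case and is consistent with how that factor arises in the proof of Lemma~\ref{lem: Laplace intra k-closest}.
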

\subsubsection{Coverage Probability Analysis}
Using the serving distance distribution derived in Lemma~\ref{lem:k-closest_serving} and the inter- and intra-cluster interference Laplace transforms derived respectively by Lemmas~\ref{Lem: Lap_Inter} and~\ref{lem: Laplace intra k-closest}, the exact expression for the coverage probability in this case is derived below.
\begin{thm}[$k$-closest content availability: coverage probability]\label{thm: coverage typical}
The coverage probability of the typical device for the $k$-closest content availability strategy is
\begin{align}\label{Eq: coverage typical_k_closet}
  \pc=  \int_0^\infty  \int_0^\infty &\ncalL_{I_{\psi_\nrmm \setminus \mathrm{Tx-cluster}}}(\T r^{\alpha})\ncalL_{ I_{\mathrm{Tx-cluster}}}(s,r|\nu_0) 
  f_R(r| \nu_0) f_{V_0}(\nu_0) \nrmd r \nrmd \nu_0,
\end{align}
\normalsize
where $f_R(r|\nu_0)$ is the serving distance distribution given by \eqref{eq: k-closet dis to typical}, $f_{V_0}(\nu_0)=\mathtt{Raypdf}(\nu_0;\sigma^2)$, and the inter- and intra-cluster interference Laplace transforms are given by Lemmas~\ref{Lem: Lap_Inter} and~\ref{lem: Laplace intra k-closest}.
\end{thm}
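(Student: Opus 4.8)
The plan is to follow the same skeleton as the proof of Theorem~\ref{thm: coverage typical uniform}, adapting each de-conditioning step to the order-statistics structure of the $k$-closest setup. Starting from the definition $\pc = \mathbb{E}_R[\mathbb{P}\{\sir(R) > \T \mid R\}]$ and writing $I_{\Psi_\nrmm} = I_{\Psi_\nrmm \setminus \mathrm{Tx-cluster}} + I_\mathrm{Tx-cluster}$, I would first invoke the Rayleigh fading assumption on the serving link. Since $\htx \sim \exp(1)$, the conditional probability of the event $\{\sir(r) > \T\}$ equals $\mathbb{E}[\exp(-\T r^\alpha I_{\Psi_\nrmm}) \mid R = r]$, which converts the coverage probability into a Laplace-transform evaluation at $s = \T r^\alpha$, exactly as in step $(a)$ of the earlier theorem.

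The second step is to factor this expectation using the independence of the intra- and inter-cluster interference, which are sums over disjoint, conditionally independent offspring processes. The inter-cluster factor is $\ncalL_{I_{\Psi_\nrmm \setminus \mathrm{Tx-cluster}}}(\T r^\alpha)$ from Lemma~\ref{Lem: Lap_Inter}; because the inter-cluster interferers are sampled uniformly at random from each foreign cluster regardless of the content-availability rule, this factor is identical to the uniform case and, importantly, carries no dependence on $\nu_0$ or on the in/out partition. The intra-cluster factor is the conditional Laplace transform $\ncalL_{I_{\mathrm{Tx-cluster}}}(s, r \mid \nu_0)$ from Lemma~\ref{lem: Laplace intra k-closest}, again evaluated at $s = \T r^\alpha$.

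The crucial point, and where this proof departs from Theorem~\ref{thm: coverage typical uniform}, is that the intra-cluster Laplace transform now depends explicitly on the serving distance $r$, not merely on $\nu_0$. This is because fixing the $k^{th}$ closest device as the server partitions the remaining interferers into the ``in'' set ${\mathcal{B}}^{x_0}_{\rm in}$ (distances below $r$) and the ``out'' set ${\mathcal{B}}^{x_0}_{\rm out}$ (distances above $r$), whose conditional laws \eqref{eq: f_w_in}--\eqref{eq: f_w_out} are truncations at $r$. Consequently the serving and intra-cluster interferer distances cannot be de-conditioned separately, as was done in the uniform case; instead, conditioned on the pair $(\nu_0, r)$, Lemma~\ref{OK_Last_Lemma} guarantees that the in- and out-set distances are i.i.d. and independent of the serving distance, so that the product $\ncalL_{I_{\mathrm{Tx-cluster}}}(s, r \mid \nu_0)\, f_R(r \mid \nu_0)$ correctly encodes the joint law. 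I would therefore integrate first over $r$ against the $k^{th}$ order-statistic density $f_R(r \mid \nu_0)$ of Lemma~\ref{lem:k-closest_serving}, keeping the coupling through $r$ intact rather than separating it.

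Finally, I would de-condition on $\nu_0 = \|x_0\|$. Because the typical device sits at the origin and its displacement from the cluster center $x_0$ is a symmetric $\nbbR^2$ Gaussian with variance $\sigma^2$, the norm $\nu_0$ is Rayleigh distributed with density $f_{V_0}(\nu_0) = \mathtt{Raypdf}(\nu_0; \sigma^2)$; integrating the preceding expression against this density yields the claimed double integral \eqref{Eq: coverage typical_k_closet}. The main obstacle is conceptual rather than computational: one must carefully verify that conditioning on the single variable $\nu_0$ together with the realized serving distance $r$ renders the serving link, the intra-cluster interference field, and the inter-cluster interference field mutually independent, so that all three factors multiply under a single pair of expectations. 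This is precisely the role played by the ordered-case distance lemmas of Section~\ref{subsection:OrderedDistance}, and it is what prevents the naive separate de-conditioning that sufficed in the uniform content availability analysis.
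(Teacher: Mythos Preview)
Your proposal is correct and follows essentially the same approach as the paper, which simply states that the proof ``follows on the same line as Theorem~\ref{thm: coverage typical uniform} and is hence skipped.'' In fact, your write-up is more explicit than the paper's about the key structural difference---namely that the intra-cluster Laplace transform here depends on the serving distance $r$ (via the in/out partition of Lemma~\ref{OK_Last_Lemma}), so de-conditioning on $r$ and $\nu_0$ must be carried out jointly rather than separately---which the paper leaves implicit.
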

\begin{proof}
The proof follows on the same line as Theorem \ref{thm: coverage typical uniform} and is hence skipped.
\end{proof}


\subsubsection{Area Spectral Efficiency}
Unlike uniform content availability, where each active receiver selects a serving device uniformly at random from its cluster, defining $\ase$ for the $k$-closest content availability strategy is more involved because different devices may have a different value of $k$ for their serving devices depending upon the scheduling strategy. Since we are not characterizing scheduling policies explicitly, the information about $k$ for each device is not known. Therefore, we derive $\ase$ for a simpler case in which all active receivers connect to the $k^{th}$ closest device, with $k$ being the same for all the devices. Two special cases of interest are (i) $k=1$: all devices connect to the closest device from the set of possible transmitting devices in their clusters, and (ii) $k=M$: all devices connect to the farthest devices from the set of possible transmitting devices in their clusters. The $\ase$ for this setup is given by the following Theorem.
\begin{prop}[$k$-closest content availability: $\ase$]
The $\ase$ for this setup is
\begin{equation}\label{eq: ASE general}
    \mathtt{ASE}=\bar{m}\lambda_\nrmc \log_2(1+\T) \pc,
\end{equation}
where $\pc$ is given by~\eqref{Eq: coverage typical_k_closet}  and $\bar{m}\lambda_\nrmc$ represents the average density of simultaneously active D2D-Txs. The maximum and minimum $\ase$ is achieved for $k=1$ and $k=M$, respectively.
\end{prop}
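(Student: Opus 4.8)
The plan is to split the statement into two independent claims: the $\ase$ identity and the optimality of the endpoints $k=1$ and $k=M$. The identity is immediate and mirrors the uniform case: by definition $\ase$ equals (average density of simultaneously active D2D-Txs) $\times \log_2(1+\T) \times \pc$. The average density of active transmitters is $\bar{m}\lambda_\nrmc$ (cluster centers at density $\lambda_\nrmc$, each carrying on average $\bar{m}$ active D2D-Txs), and the per-link coverage for the $k$-closest rule is exactly the $\pc$ of \eqref{Eq: coverage typical_k_closet}; substituting yields \eqref{eq: ASE general}. Since $k$ enters only through $\pc$ while the prefactor $\bar{m}\lambda_\nrmc \log_2(1+\T)$ is $k$-independent, maximizing (minimizing) $\ase$ over $k$ is equivalent to maximizing (minimizing) $\pc$ over $k$. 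It therefore suffices to show that $\pc$ is non-increasing in $k$, which places its maximum at $k=1$ and its minimum at $k=M$.

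For this monotonicity I would work on a common probability space. Integrating out the serving fade $\htx\sim\exp(1)$ as in the proof of Theorem~\ref{thm: coverage typical}, write $\pc(k) = \E\!\left[\exp\!\big(-\T R_k^{\alpha}\,\big(I_{\mathrm{Tx-cluster}}^{(k)} + I_{\Psi_\nrmm \setminus \mathrm{Tx-cluster}}\big)\big)\right]$, where $R_k = S_{(k)}$ is the $k$-closest serving distance of Lemma~\ref{lem:k-closest_serving}, $I_{\mathrm{Tx-cluster}}^{(k)}$ is the intra-cluster interference \eqref{Eq: Intra_cluster_Int_typical} under the $k$-closest rule, and the inter-cluster term is independent of $k$ because inter-cluster interferers are sampled uniformly at random from every cluster regardless of the content rule. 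The integrand is decreasing in both $R_k$ and the interference, so it is enough to build a coupling in which $R_k$ and $I_{\mathrm{Tx-cluster}}^{(k)}$ both grow path-wise as $k$ increases; the inter-cluster term, being common, and the serving fade, having been integrated out, play no role in the comparison.

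The coupling for the step $k\to k+1$ is as follows. Fix $\nu_0$, the $M$ device positions with order statistics $S_{(1)}\le\dots\le S_{(M)}$, and the i.i.d. interferer fades attached to the devices. Let $a$ and $b$ be the devices of ranks $k$ and $k{+}1$. The serving distance increases, $R_{k+1}=S_{(k+1)}\ge S_{(k)}=R_k$. For the interference, the only difference between the rules is that the pool of potential interferers is $\Nxo_{\rm t}\setminus\{a\}$ under rule $k$ (it contains $b$) and $\Nxo_{\rm t}\setminus\{b\}$ under rule $k{+}1$ (it contains $a$), the remaining $M{-}2$ devices being common. I couple the truncated-Poisson$(\bar{m}-1)$ active count and the uniform-without-replacement selection so that (i) the activity status of every common device agrees under both rules, and (ii) $b$ is active under rule $k$ iff $a$ is active under rule $k{+}1$ (possible since both pools have size $M{-}1$ and the same truncation level); and I set the swapped interferer's fade to a common value $h'$. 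This is consistent because $a$'s fade never appears in $I_{\mathrm{Tx-cluster}}^{(k)}$ (there $a$ is the served device, its fade integrated out) and $b$'s fade never appears in $I_{\mathrm{Tx-cluster}}^{(k+1)}$. The two sums then agree on all common devices, while the swapped contribution is $h' S_{(k)}^{-\alpha}$ under rule $k{+}1$ versus $h' S_{(k+1)}^{-\alpha}$ under rule $k$; since $S_{(k)}\le S_{(k+1)}$ this gives $I_{\mathrm{Tx-cluster}}^{(k+1)}\ge I_{\mathrm{Tx-cluster}}^{(k)}$ path-wise. Hence the coverage integrand is path-wise non-increasing from $k$ to $k{+}1$, so $\pc(k+1)\le\pc(k)$; iterating yields $\pc(1)\ge\pc(k)\ge\pc(M)$, and the endpoints therefore give the maximum and minimum $\ase$.

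The routine parts are the $\ase$ identity and the order-statistic ordering $S_{(k)}\le S_{(k+1)}$. The delicate step, which I expect to be the main obstacle, is the interference coupling: the pool of potential interferers is itself $k$-dependent, since a different device is withdrawn as the server, so a naive path-wise comparison breaks down. The resolution is the swap coupling above, whose consistency rests precisely on the fact that the served device's fade has already been integrated out and so never collides with its role as a potential interferer under the neighboring rule. One should additionally verify that the truncation $|\Bxx|\le|\Nxo_{\rm t}|$ and the conditioning on a common active count are compatible across the two equal-size pools, which they are, allowing the uniform-without-replacement selections to be matched on the $M{-}2$ shared devices.
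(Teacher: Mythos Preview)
Your argument is correct. The paper itself does not supply a proof of this proposition: the $\ase$ identity is treated as self-evident (exactly as for Proposition~1 in the uniform case), and the claim that $k=1$ and $k=M$ yield the maximum and minimum $\ase$ is simply asserted, with only the informal remark that $\pc$ is ``parameterized by $k$''. You go beyond the paper by supplying a rigorous justification of the monotonicity $\pc(k{+}1)\le\pc(k)$.

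Your coupling is the right idea and is carefully executed. The two points that could trip one up are (i) that the pool of potential intra-cluster interferers differs between rules $k$ and $k{+}1$, and (ii) that the fade attached to the swapped device must not conflict with its role as server under the neighboring rule. You handle both: the swap that exchanges $b\leftrightarrow a$ between the two equal-size pools preserves the uniform-without-replacement marginals, and assigning a single common fade $h'$ to the swapped interferer is legitimate precisely because the serving fade has already been integrated out, so $h_a$ never appears in $I^{(k)}$ and $h_b$ never appears in $I^{(k+1)}$. The path-wise conclusion $R_{k+1}^{\alpha}\big(I_{\mathrm{Tx-cluster}}^{(k+1)}+I_{\Psi_\nrmm\setminus\mathrm{Tx-cluster}}\big)\ge R_k^{\alpha}\big(I_{\mathrm{Tx-cluster}}^{(k)}+I_{\Psi_\nrmm\setminus\mathrm{Tx-cluster}}\big)$ then follows since both nonnegative factors are non-decreasing, and the exponential is monotone.

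In short: same (trivial) route as the paper for the $\ase$ formula; a genuinely more complete argument than the paper for the extremality of $k=1$ and $k=M$, where the paper relies on intuition alone.
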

Due to the dependence of $\pc$, given by~\eqref{Eq: coverage typical_k_closet}, on $k$, $\ase$ is also parameterized by $k$. As noted above, this allows us to characterize the maximum and minimum $\ase$ achievable in a network.%

\subsubsection{Bounds and approximations}
From Lemma~\ref{lem: Laplace intra k-closest}, we note that the exact expression of intra-cluster interference for general $k$ involves two summations, which further complicates the numerical evaluation of the exact coverage probability expression given in Theorem~\ref{thm: coverage typical}. Therefore, we propose two simple but tight approximations. In the first approximation, we ignore the effect of excluding the $k^{th}$ closest device (serving) from the field of possible interferers. This allows us to assume that the intra-cluster interferers are chosen uniformly at random, resulting in the same Laplace transform of intra-cluster interference that we had in the uniform content availability case given in Lemma~\ref{lem: lap intra typical}. The resulting approximation from coverage probability is given below.

\begin{cor}[$k$-closest content availability: coverage approximation]\label{thm: coverage typical general}Using Laplace transform of interference in~\eqref{Eq: laplace intra typical} and~\eqref{Eq: Lap_Inter}, the coverage probability can be approximated as 
\begin{align}\label{Eq: coverage typical general}
  \pc \simeq  \int_0^\infty  \int_0^\infty &\ncalL_{I_{\psi_\nrmm \setminus \mathrm{Tx-cluster}}}(\T r^{\alpha})\ncalL_{I_\mathrm{Tx-cluster}}(\T r^{\alpha}|\nu_0) 
  f_R(r| \nu_0) f_{V_0}(\nu_0) \nrmd r \nrmd \nu_0,
\end{align}
\normalsize
where $f_R(r| \nu_0)$  given by \eqref{eq: k-closet dis to typical} and $f_{V_0}(\nu_0)=\mathtt{Raypdf}(\nu_0;\sigma^2)$.
\end{cor}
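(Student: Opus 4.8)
The plan is to reuse the argument of Theorem~\ref{thm: coverage typical uniform} almost verbatim, changing only two ingredients: the serving-distance law and the intra-cluster interference Laplace transform. Concretely, I would start from the definition $\pc = \mathbb{E}_R[\mathbb{P}\{\sir(R) > \T \mid R\}]$ and exploit the fact that the desired-link fading $\htx \sim \exp(1)$ is exponential, so that the inner probability that $\sir(R)$ exceeds $\T$ equals the Laplace transform of the total interference evaluated at $s = \T r^\alpha$. This is the same Rayleigh-fading step $(a)$ used in the proof of Theorem~\ref{thm: coverage typical uniform}, and it turns $\pc$ into an expectation of $\mathbb{E}[\exp(-\T r^\alpha (I_{\Psi_\nrmm \setminus \mathrm{Tx-cluster}} + I_\mathrm{Tx-cluster})) \mid R]$.

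Next I would invoke the independence of the intra- and inter-cluster interference fields (the devices of the representative cluster are sampled independently of those of all other clusters) to factor this conditional expectation into the product $\ncalL_{I_{\Psi_\nrmm \setminus \mathrm{Tx-cluster}}}(\T r^\alpha)\,\ncalL_{I_\mathrm{Tx-cluster}}(\T r^\alpha \mid \nu_0)$. The inter-cluster factor is unchanged from the uniform case and is supplied directly by Lemma~\ref{Lem: Lap_Inter}, which (by the remark following it) applies to both content-availability strategies. For the intra-cluster factor, this is where the sole approximation enters: instead of the exact $k$-closest transform of Lemma~\ref{lem: Laplace intra k-closest}, I would substitute the simpler uniform-content transform \eqref{Eq: laplace intra typical} of Lemma~\ref{lem: lap intra typical}. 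This substitution amounts to treating the intra-cluster interferers as an independent uniform sample from $\Nxo_{\rm t}$ and forgetting that the $k^{th}$ closest device has been removed from the interferer pool; the interferer distances are then the i.i.d.\ Rician distances of Corollary~\ref{lem:intra_cluster_uniform_interference} rather than the truncated ``in/out'' distances of Lemma~\ref{OK_Last_Lemma}.

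Finally I would de-condition in two stages. First over the serving distance $R$ given $\nu_0$: crucially, the serving device is still the $k^{th}$ closest one, so here I would retain the order-statistic law $f_R(r \mid \nu_0)$ of Lemma~\ref{lem:k-closest_serving} (eq.~\eqref{eq: k-closet dis to typical}) — this is the single place where the $k$-dependence survives the approximation. Then over $\nu_0 = \|x_0\|$, which is Rayleigh distributed with PDF $f_{V_0}(\nu_0) = \mathtt{Raypdf}(\nu_0;\sigma^2)$ because $x_0$ is a zero-mean Gaussian in $\nbbR^2$ with per-coordinate variance $\sigma^2$. Assembling the two integrals yields exactly~\eqref{Eq: coverage typical general}.

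The only non-mechanical step is the approximation itself, and the obstacle is conceptual rather than computational: one must argue that dropping the serving device from the interferer field, and ignoring the ordering constraint on the remaining interferers, has a negligible effect on the aggregate intra-cluster interference. Heuristically this is reasonable when $\bar{m} \ll M$, so that removing one device from a large pool barely perturbs the interferer statistics and the survivors are well modeled as an unordered uniform sample — precisely the regime where Lemma~\ref{lem: lap intra typical} is tight. I would defer the quantitative justification to the numerical-results section, where the gap between the approximation~\eqref{Eq: coverage typical general} and the exact expression of Theorem~\ref{thm: coverage typical} is shown to be small.
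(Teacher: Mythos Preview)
Your proposal is correct and follows essentially the same approach as the paper: the paper's proof simply states that it follows on the same lines as Theorem~\ref{thm: coverage typical uniform}, with the approximation (explained in the text preceding the corollary) being exactly the one you identify, namely ignoring the exclusion of the $k^{th}$ closest device from the interferer pool so that the uniform-case intra-cluster Laplace transform \eqref{Eq: laplace intra typical} may be used while retaining the $k$-closest serving-distance law \eqref{eq: k-closet dis to typical}. Your justification of the approximation and deferral of its tightness to numerical results also mirrors the paper.
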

\begin{proof}
The proof follows on the same lines as the proof of Theorem \ref{thm: coverage typical uniform} and hence is skipped.
\end{proof}
In addition to the assumption made while deriving Corollary~\ref{thm: coverage typical general}, if we also ignore the correlation among distances between intra-cluster devices to the typical device (see Assumption \ref{Ass: Identical intra-cluster distances}), we can use the simpler result for the Laplace transform of intra-cluster interference given by Corollary~\ref{cor: Intra_cluster distance _typical approximation}. The resulting coverage probability approximation is stated next.
\begin{cor} [$k$-closest content availability: coverage approximation] \label{cor: coverage typical approximation k closest}
 Using the results of approximation given by \eqref{Eq:Lap intra typical app} and Laplace transform of inter-cluster interference in \eqref{Eq: Lap_Inter},  the coverage probability  can be approximated as
  \begin{align}\label{Eq: Coverage typical app k-closest}
    \pc\simeq  \int_0^{\infty} \tilde{\calL}_{I_{ \mathrm{Tx-cluster}}} (\T r^{\alpha})
    \ncalL_{I_{\psi_\nrmm \setminus \mathrm{Tx-cluster}}}(\T r^{\alpha})  f_{R}(r) \nrmd r
\end{align}
  $$\text{with, }f_{R}(r)=\frac{M!}{(k-1)!(M-k)!}{F_S(r)}^{k-1}(1-F_S(r))^{M-k} f_S(r),$$
  \normalsize
  where $f_S(r)=\mathtt{Raypdf}(r;2\sigma^2) $, and $F_S(r)=1-\exp(-\frac{r^2}{4 \sigma^2})$.
\end{cor}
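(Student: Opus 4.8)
The plan is to follow the template of the proof of Theorem~\ref{thm: coverage typical uniform} (equivalently, its Assumption-\ref{Ass: Identical intra-cluster distances} version in Corollary~\ref{cor: coverage typical approximation}), changing only the serving-distance de-conditioning so that it reflects the $k$-closest selection rule, and then invoking the two stated approximations to collapse the double integral over $(r,\nu_0)$ into the single integral over $r$ that appears in the claim.

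First I would start from $\pc = \mathbb{E}_R[\mathbb{P}\{\sir(R)>\T \mid R\}]$ and exploit the Rayleigh fading of the desired link. Since $\htx \sim \exp(1)$, the coverage event $\{\htx r^{-\alpha}/I_{\Psi_\nrmm} > \T\}$ has conditional probability $\mathbb{E}[\exp(-\T r^\alpha I_{\Psi_\nrmm})]$, i.e. the Laplace transform of the total interference evaluated at $s = \T r^\alpha$. Because the intra- and inter-cluster interference fields are independent, this factors as $\ncalL_{I_{\mathrm{Tx-cluster}}}(\T r^\alpha)\,\ncalL_{I_{\psi_\nrmm \setminus \mathrm{Tx-cluster}}}(\T r^\alpha)$, with the inter-cluster factor supplied unchanged by Lemma~\ref{Lem: Lap_Inter}.

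Next I would apply the two approximations in turn. Ignoring the exclusion of the $k$-th closest (serving) device from the interfering field lets me treat the intra-cluster interferers as though sampled uniformly at random, so their contribution is exactly the uniform-content-availability Laplace transform of Lemma~\ref{lem: lap intra typical}; invoking Assumption~\ref{Ass: Identical intra-cluster distances} then replaces this by the uncorrelated form $\tilde{\calL}_{I_{\mathrm{Tx-cluster}}}$ of Corollary~\ref{cor: Intra_cluster distance _typical approximation}, in which the dependence on $\nu_0 = \|x_0\|$ is gone. Crucially, under Assumption~\ref{Ass: Identical intra-cluster distances} the distances $\{S_i\}$ are taken to be i.i.d. $\mathtt{Raypdf}(\cdot;2\sigma^2)$ unconditionally, so the serving distance—being the $k$-th smallest of these—is genuinely the $k$-th order statistic of $M$ i.i.d. Rayleigh variables. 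The standard order-statistic formula used in Lemma~\ref{lem:k-closest_serving}, applied now with the marginal Rayleigh CDF $F_S(r) = 1 - \exp(-r^2/4\sigma^2)$ and PDF $f_S(r) = \mathtt{Raypdf}(r;2\sigma^2)$ in place of their Rician conditional counterparts, yields exactly the stated $f_R(r)$.

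Finally, since neither Laplace transform now depends on $\nu_0$, the de-conditioning over $\nu_0$ is already absorbed and only the de-conditioning over the serving distance $R$ remains, producing the single integral in the claim. The main subtlety—and the only place where care is required—is precisely this interchange: in the exact model the $S_i$ are only \emph{conditionally} i.i.d. (Rician) given $\nu_0$, so the true marginal of the $k$-th order statistic would require integrating the Rician order-statistic density over $\nu_0$ and does \emph{not} equal the order statistic of the marginal Rayleigh distribution. It is exactly Assumption~\ref{Ass: Identical intra-cluster distances} that makes the two coincide, converting what is an approximation of the real network into an exact computation under the assumed i.i.d. model; with this observation in hand, the remainder of the argument is identical to that of Corollary~\ref{cor: coverage typical approximation}.
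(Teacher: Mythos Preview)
Your proposal is correct and matches the paper's approach: the paper does not give a formal proof for this corollary but explains in the preceding prose that it is obtained by layering Assumption~\ref{Ass: Identical intra-cluster distances} on top of the approximation of Corollary~\ref{thm: coverage typical general}, exactly as you describe. Your explicit observation about the order-statistic subtlety---that the $k$-th order statistic of the marginal Rayleigh distribution is not the marginal of the $k$-th conditional Rician order statistic, and that Assumption~\ref{Ass: Identical intra-cluster distances} is what licenses treating them as equal---is a point the paper leaves implicit, so your write-up is in fact more careful on this step.
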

In the numerical results section, we will show that both these approximations are remarkably tight and can in fact be treated as proxies of the exact result if needed.

\section{Results and Discussion}
\subsection{Uniform Content Availability}
\subsubsection{Validation of results}
\label{Sub: Validation of results}
\begin{figure}[t!]
\centering{
        \includegraphics[width=.5\textwidth]{./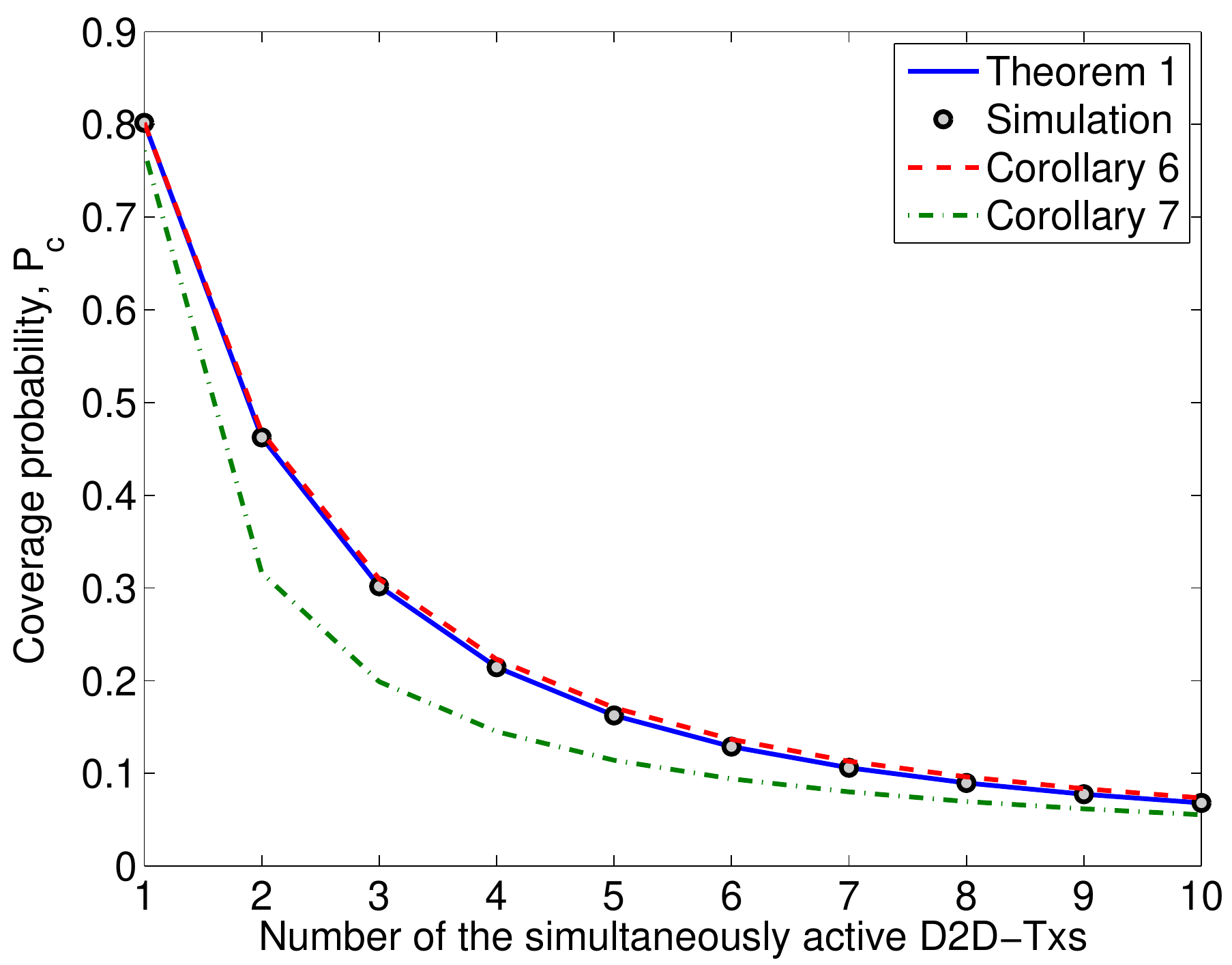}
               \caption{Coverage probability as a function of number of simultaneously active D2D-Txs in the uniform content availability case when  $\sigma=10$ and $\lambda_\nrmc=150$ clusters/km$^2$.}
                \label{Fig: validtion of result intra coverage of typical and central user}
                }
\end{figure}
We first validate the analytical results and investigate the tightness of various approximations derived for coverage probability in the uniform content availability case. In all simulations, the locations of devices are drawn from a Poisson cluster process over a $50  \times 50$ $\mathrm{km^2}$ square region. The cluster centers are spatially distributed as a PPP, and the devices are normally scattered around them. We set the $\sir$ threshold for the successful demodulation and decoding, $\beta$, as 0 dB. Comparing the analytical and the simulation results in \figref{Fig: validtion of result intra coverage of typical and central user}, we note that they are a perfect match, thereby validating our analysis. The results also show that the approximation for coverage probability given by Corollary \ref{cor: coverage typical approximation} is surprisingly tight. The closed-form approximation given by Corollary \ref{Cor: closed form lower bound for typical user} also provides a decent lower-bound.
 \subsubsection{Impact of inter- and intra-cluster interference on the
coverage probability}
We plot the coverage probability in the presence of (i) only inter-cluster interference, (ii) only intra-cluster interference, and (iii) total interference for different scattering variances in \figref{Fig: Coverage probability versus number of simultaneously active D2D-Tx for  lambdac=150} and for different densities of cluster centers in \figref{Fig: Coverage probability v.s. number of active D2D-Tx for different lambda sigma50}. It is easy to infer that the coverage probability of the typical device is strongly dictated by the intra-cluster interference.
 Interestingly, the coverage probability in the absence of inter-cluster interference is independent of the scattering variance in \figref{Fig: Coverage probability versus number of simultaneously active D2D-Tx for  lambdac=150}.
  This is because scattering variance has two counter-balancing effects that cancel each other exactly: decreasing scattering variance increases intra-cluster interference by reducing the link distances to the interfering devices, and improves the desired link quality by again reducing the distance to the serving device. For  coverage probability in the absence of the intra-cluster interference, it can be seen that the smaller scattering variance provides higher coverage probability, which is also true for the coverage probability computed in the presence of both inter- and intra-cluster interference.  
In comparison to the setup of \figref{Fig: Coverage probability versus number of simultaneously active D2D-Tx for  lambdac=150}, we decrease the cluster center density and increase the scattering variance in \figref{Fig: Coverage probability v.s. number of active D2D-Tx for different lambda sigma50}. While smaller cluster-center density reduces inter-cluster interference and hence improves coverage, increasing cluster variance reduces coverage (as discussed above in the context of Fig.~\ref{Fig: Coverage probability versus number of simultaneously active D2D-Tx for  lambdac=150}), which has a dominant effect in Fig.~\ref{Fig: Coverage probability v.s. number of active D2D-Tx for different lambda sigma50}.

\begin{figure}
\noindent \begin{minipage}{.49\textwidth}
  \includegraphics[width=1\textwidth]{./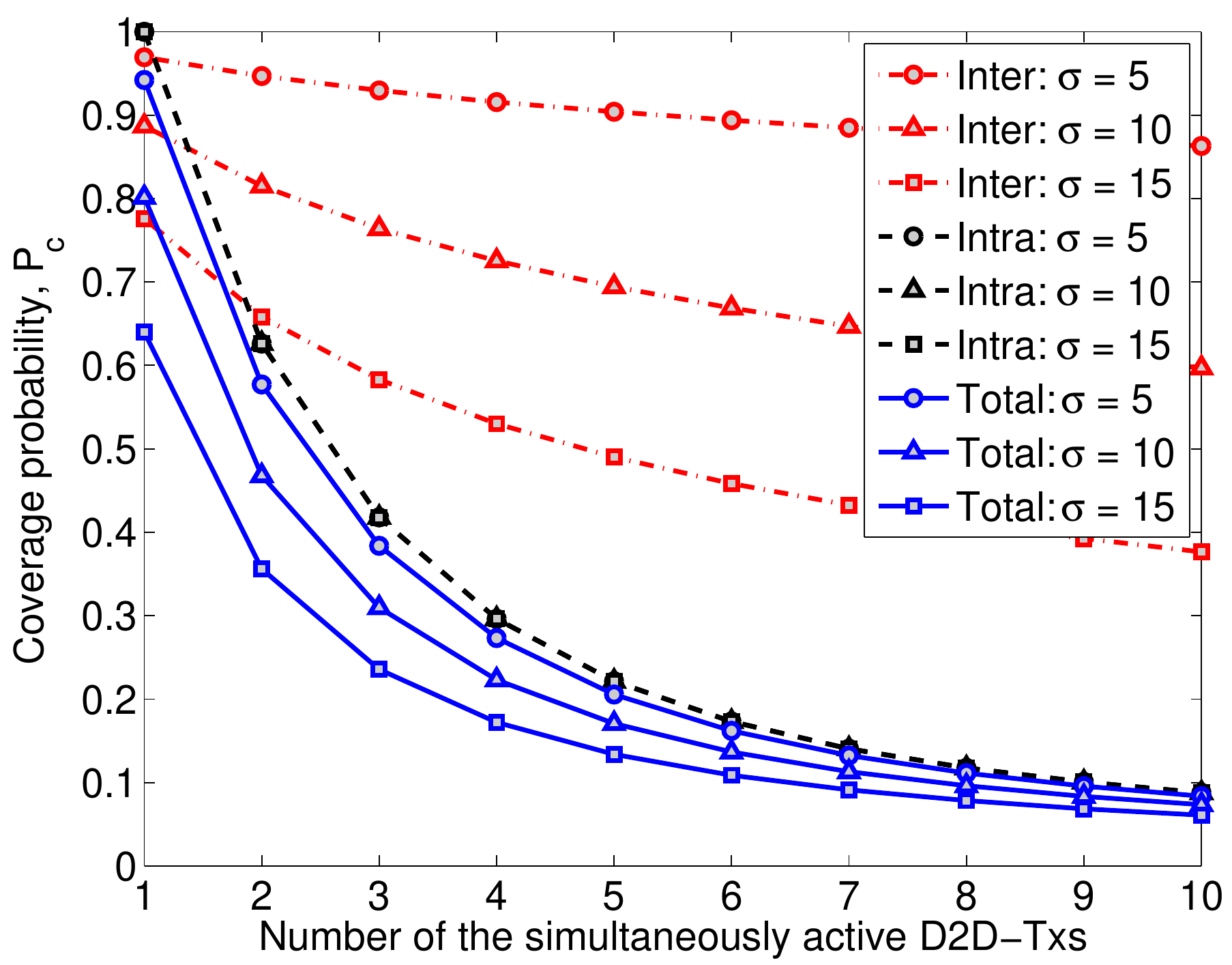}
              \caption{Coverage probability versus number of simultaneously active D2D transmitters for  $\lambda_\nrmc=150 \:\text{cluster}/\text{km}^2$ and different value of $\sigma$.}
                \label{Fig: Coverage probability versus number of simultaneously active D2D-Tx for  lambdac=150}
\end{minipage}%
\hfill
\noindent \begin{minipage}{.49\textwidth}
  \includegraphics[width=1\textwidth]{./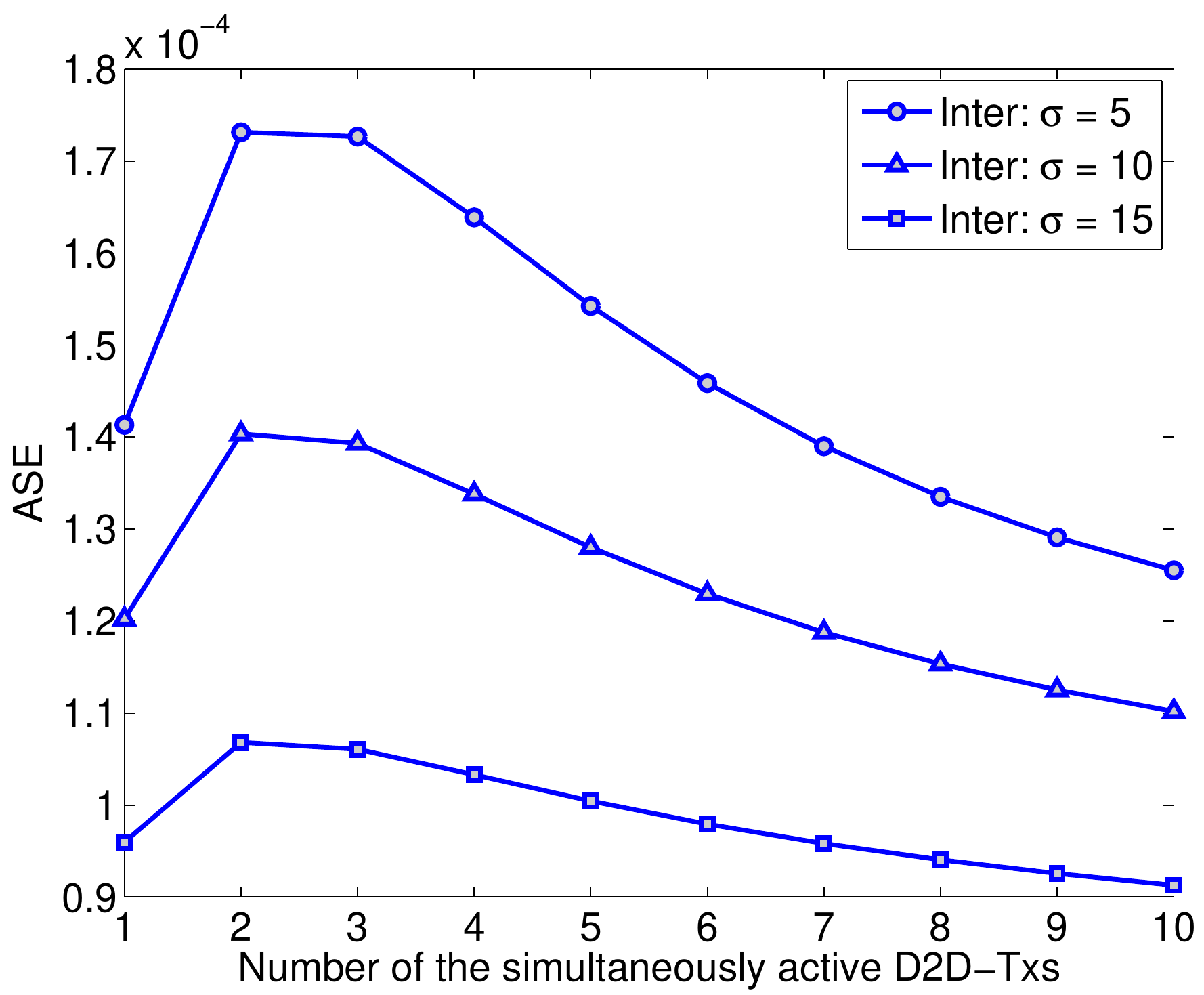}
              \caption{$\ase$  versus number of simultaneously active D2D transmitters for  $\lambda_\nrmc=150 \:\text{cluster}/\text{km}^2$ and different value of $\sigma$.}
       \label{Fig: ASE versus number of simultaneous active D2D-Tx for  lambdac=150}
\end{minipage}
\end{figure}
\begin{figure}
 \begin{minipage}{.49\textwidth}
  \includegraphics[width=1\textwidth]{./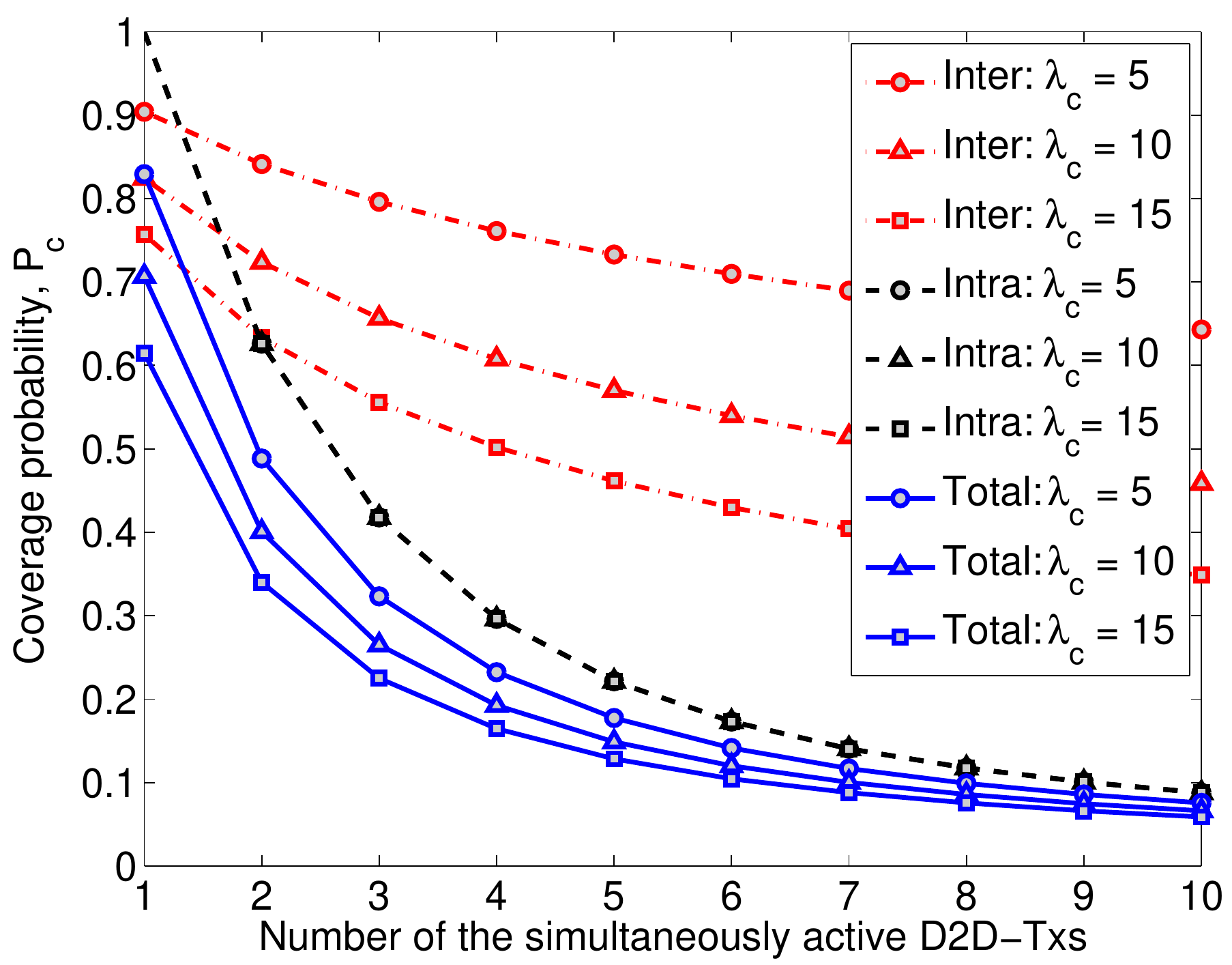}
              \caption{Coverage probability versus number of active D2D transmitters for  $\sigma=50$ and different value of $\lambda_\nrmc$.}
                \label{Fig: Coverage probability v.s. number of active D2D-Tx for different lambda sigma50}
\end{minipage}%
\hfill
 \begin{minipage}{.49\textwidth}
  \includegraphics[width=1\textwidth]{./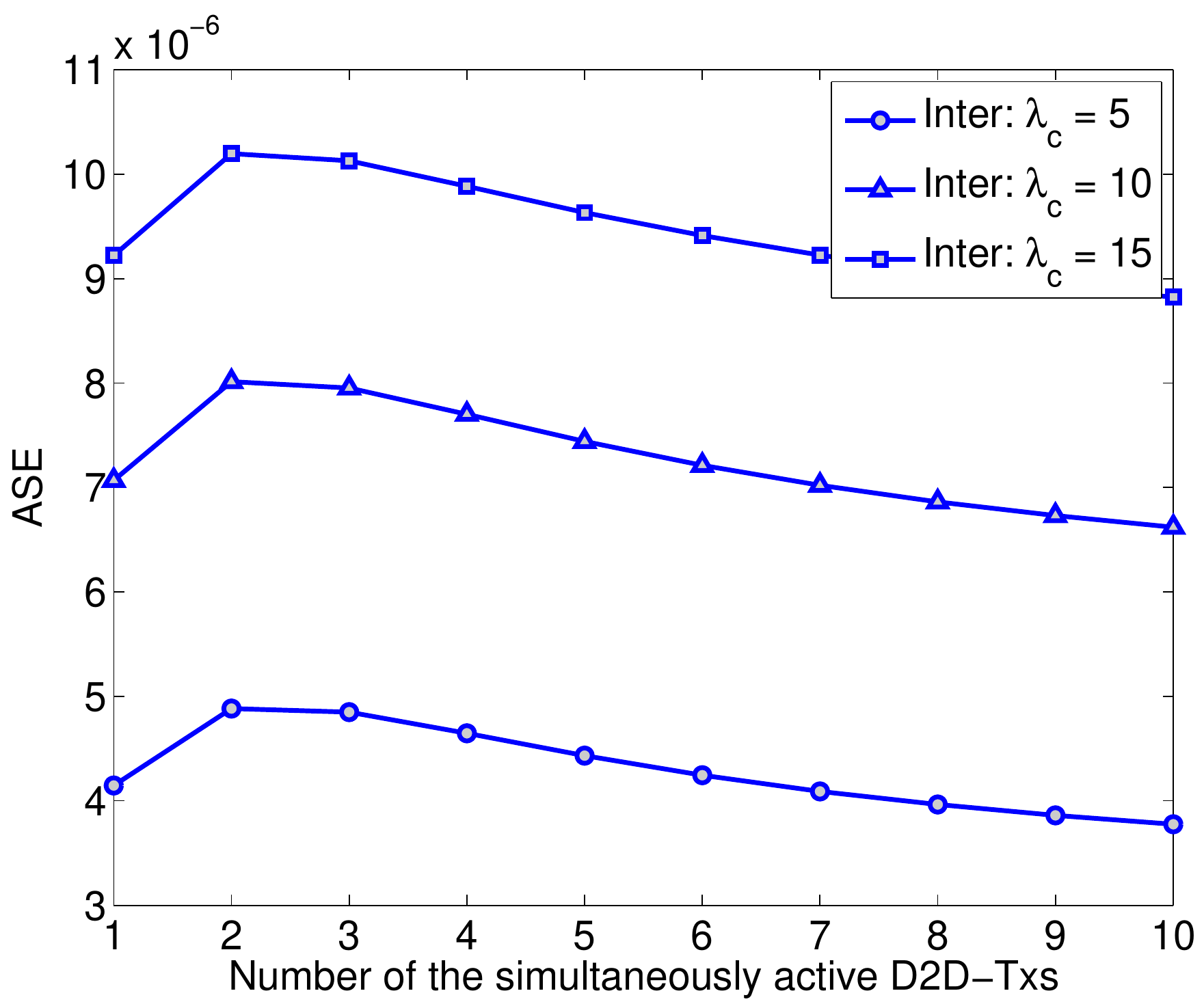}
              \caption{ $\ase$  versus number of active D2D transmitters for  $\sigma=50$ and different value of $\lambda_\nrmc$.}
                \label{Fig:  ASE v.s. number of active D2D-Tx for different lambda sigma50}
\end{minipage}
\end{figure}
\subsubsection{Optimum number of  simultaneously active links} 
\label{Sub: Optimum number of  simultaneous active link analysis}
In \figref{Fig: ASE versus number of simultaneous active D2D-Tx for  lambdac=150}, we present $\ase$  as a function of the number of simultaneously active D2D-Txs per cluster.
While more simultaneously active links may improve $\ase$, they may also increase interference significantly. Interestingly, it can be seen that the optimal value of $\ase$  is the same for a range of scattering variances.
This is because intra-cluster interference has a more dominant effect on this tradeoff  compared to the inter-cluster interference. 
Moreover,  as shown in \figref{Fig: Coverage probability versus number of simultaneously active D2D-Tx for  lambdac=150}, the coverage probability in the absence of the inter-cluster interference is independent of the scattering variance which leads to the same optimal value for a range of scattering variances. It can also be seen that smaller scattering variances result in higher $\ase$,  which highlights the importance of short range D2D communication. In 
\figref{Fig:  ASE v.s. number of active D2D-Tx for different lambda sigma50}, we present $\ase$ results for the clustered D2D network  with a large scattering variance for the different values of  cluster center densities.
By increasing the density of the  cluster centers,  both the density of the simultaneously active  transmitters inside the whole network, i.e., $\bar{m} \lambda_c$, and the inter-cluster interference are increased. 
While the former increases $\ase$, the latter decreases it. However, since inter-cluster interference doesn't have a dominant impact on the coverage probability, and hence on $\ase$, $\ase$ increases with the density of cluster centers. 
 A common observation in \figref{Fig: ASE versus number of simultaneous active D2D-Tx for  lambdac=150} and \figref{Fig:  ASE v.s. number of active D2D-Tx for different lambda sigma50} is again  the same optimal value for the average number of simultaneously active D2D transmitters per cluster.
As noted above, the intuition behind this behavior is that the intra-cluster interference, which has a significant impact on the tradeoff, is independent of both the scattering variance and the cluster center density. 

\begin{figure}
 \begin{minipage}{.49\textwidth}
  \includegraphics[width=1\textwidth]{./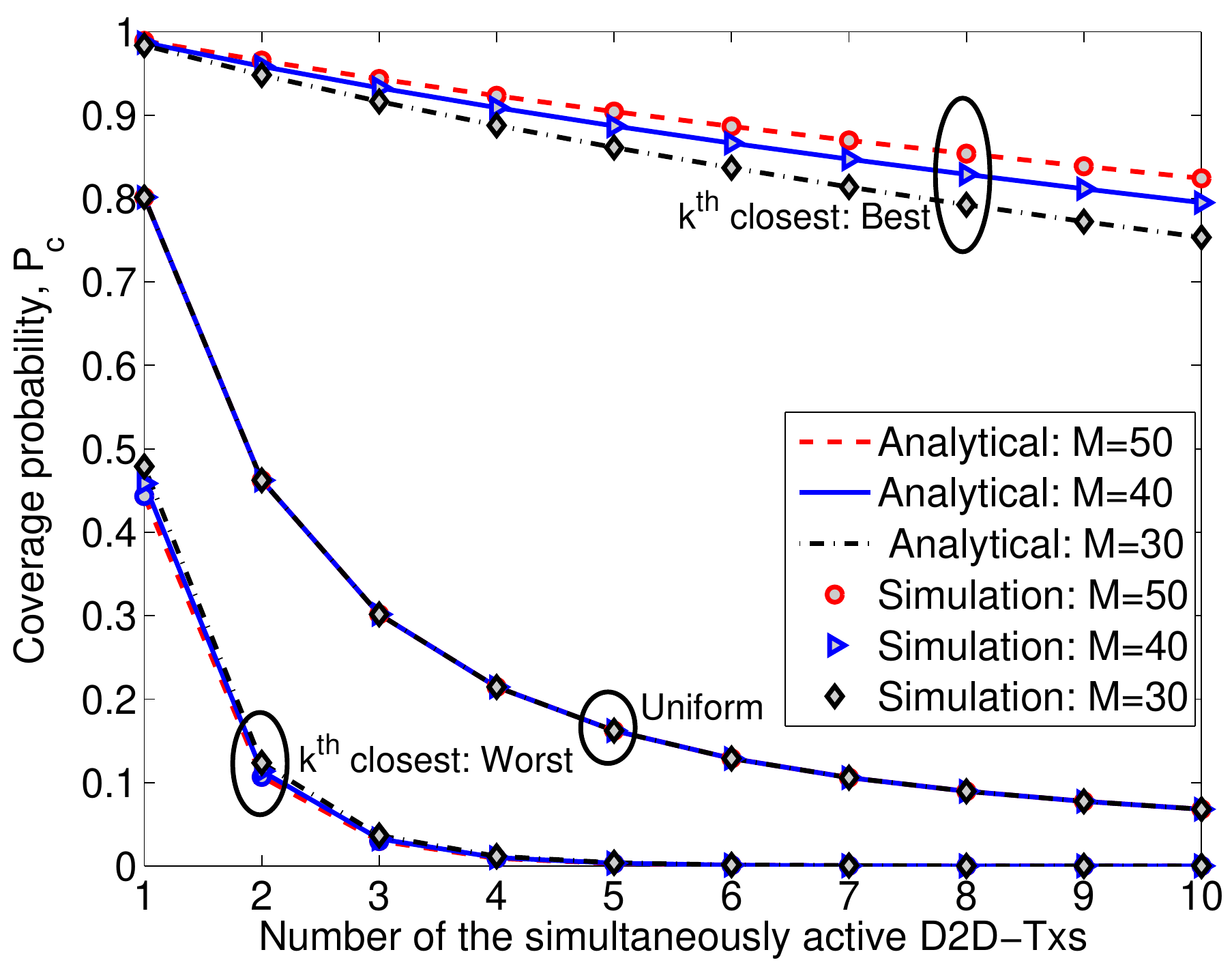}
              \caption{Coverage probability versus number of simultaneously active D2D transmitters for  $\lambda_\nrmc=150 \:\text{cluster}/\text{km}^2$ and various  values of transmitting cluster size $M$. }
                \label{Fig: Coverage versus number of simultaneous active D2D-Tx for  lambdac=150, max min random}
\end{minipage}%
\hfill
 \begin{minipage}{.525\textwidth}
  \includegraphics[width=1\textwidth]{./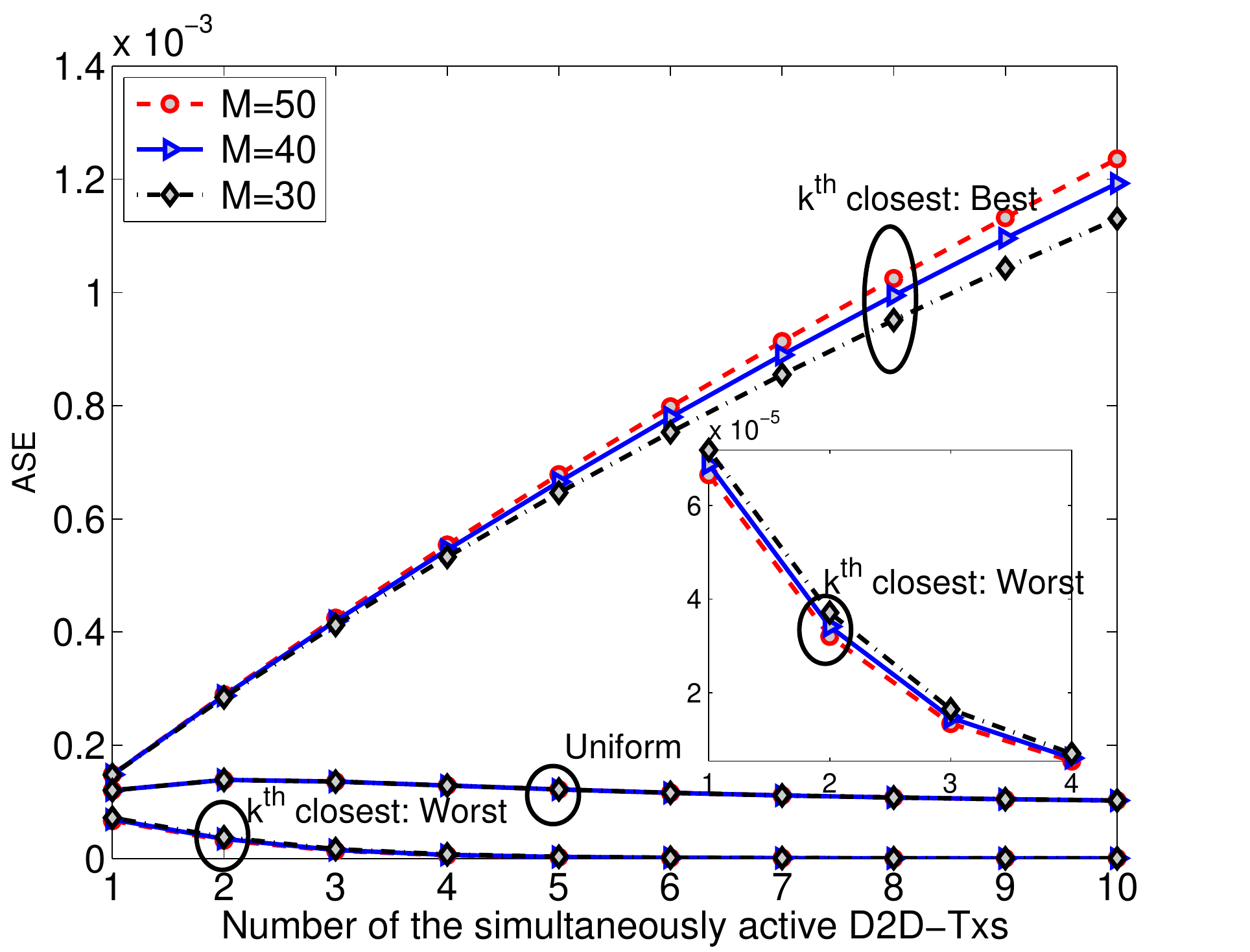}
              \caption{$\ase$  versus number of simultaneously active D2D transmitters for  $\lambda_\nrmc=150 \:\text{cluster}/\text{km}^2$and various  values of transmitting cluster size $M$.}
                \label{Fig: ASE versus number of simultaneous active D2D-Tx for  lambdac=150, max min random}
\end{minipage}
\end{figure}

\begin{figure}
\begin{minipage}{.49\textwidth}
  \includegraphics[width=1\textwidth]{./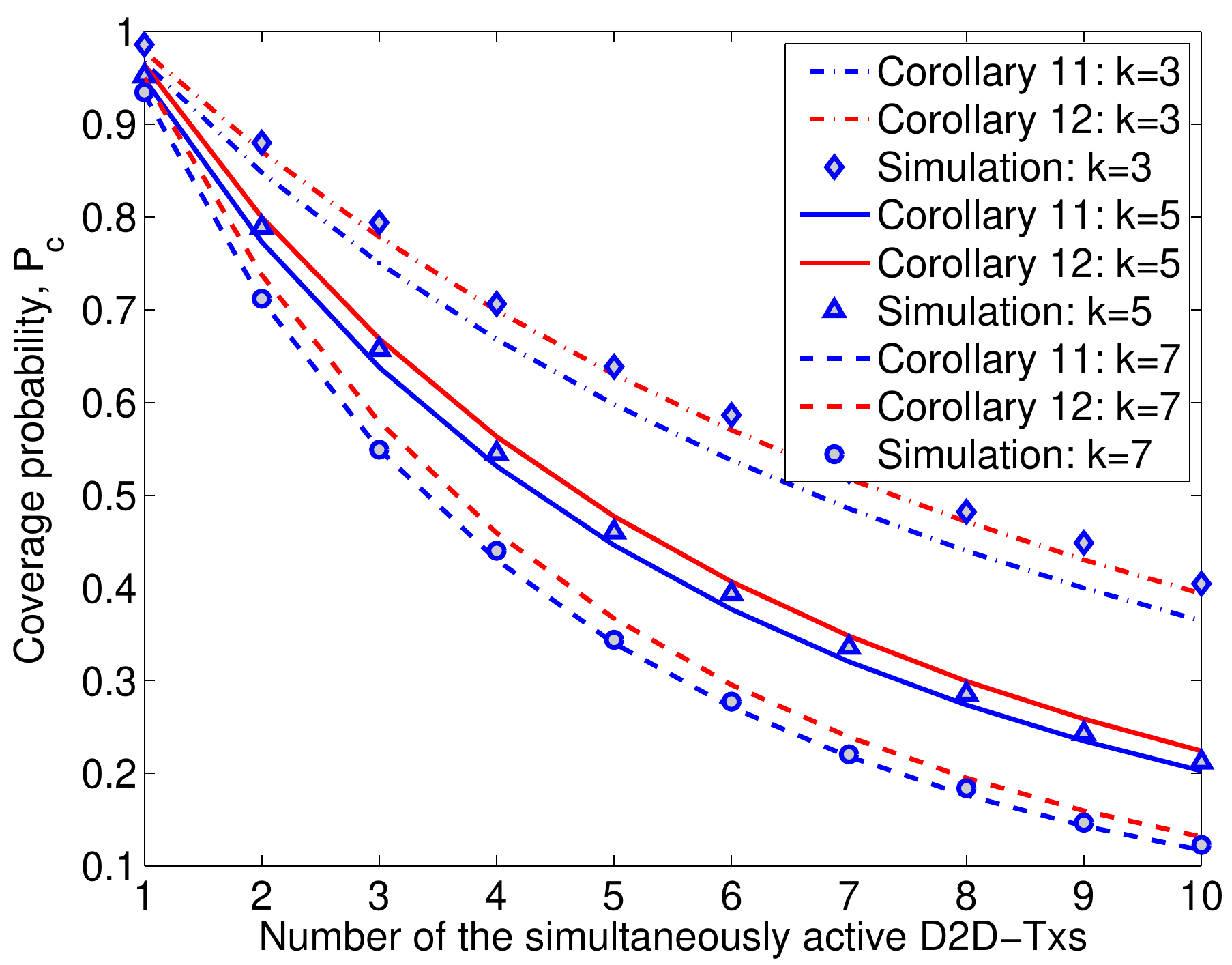}
              \caption{Coverage probability versus number of active D2D transmitters for  $\sigma=10$ and  $M=40$.}
                \label{Fig:  Coverage probability versus number of active D2D transmitters k-closest}
\end{minipage}%
\hfill
\begin{minipage}{.49\textwidth}
  \includegraphics[width=1\textwidth]{./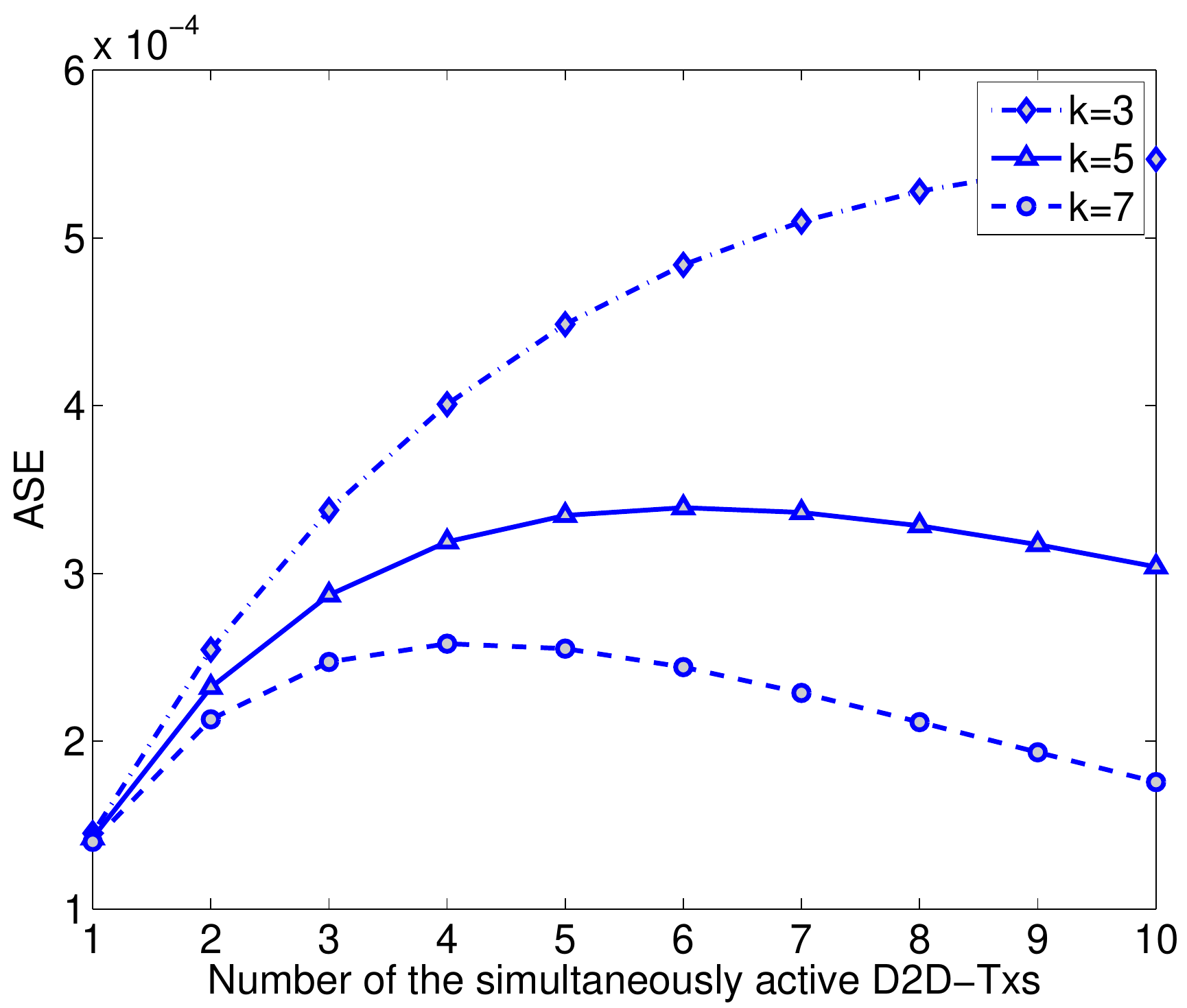}
              \caption{Analytical result for $\ase$ versus number of active D2D transmitters for  $\sigma=10$ and  $M=40$.}
                \label{Fig: ASE versus number of active D2D transmitters k-closest}
\end{minipage}
\end{figure}

\subsection{$k$-closest Content Availability}
\subsubsection{Best and worst link analyses}
After studying the performance of uniform content availability strategy in the previous subsection, we now compare it with the $k^{th}$ closest content availability strategy, in particular, its two special cases: (i) best link ($k=1$) and worst link ($k=M$) in Figs. \ref{Fig: Coverage versus number of simultaneous active D2D-Tx for  lambdac=150, max min random} and \ref{Fig: ASE versus number of simultaneous active D2D-Tx for  lambdac=150, max min random}. We first note that the analytical and simulation results match perfectly, thereby validating the analysis of the $k$-closest content availability case. Next, we notice that cluster size $M$ (number of possible transmitting devices per cluster) has a conflicting effect on the performance of best and worst link cases. This is simply because of ``order statistics'': larger cluster size $M$ decreases the minimum serving distance (best link) and increases the maximum serving distance (worst link). Another interesting observation can be made in \figref{Fig: ASE versus number of simultaneous active D2D-Tx for  lambdac=150, max min random}, where the $\ase$ in the best link case increases (in the range considered) with the number of simultaneously active links per cluster, thereby providing ``scalability'' to the D2D network.


\subsubsection{Tightness of the approximations}


Recall that the exact expression of intra-cluster interference for the $k$-closest content availability strategy derived in Lemma~\ref{lem: Laplace intra k-closest} involves two summations, which complicates the numerical evaluation of the exact coverage probability expression of Theorem~\ref{thm: coverage typical}. This motivated us to derive two approximations for the coverage probability in  Corollaries \ref{thm: coverage typical general} and \ref{cor: coverage typical approximation k closest} by approximating the Laplace transform of intra-cluster interference. These approximations are plotted along with the exact result in \figref{Fig:  Coverage probability versus number of active D2D transmitters k-closest}. We observe that both the Corollaries provide reasonably tight approximations, which get even better for higher values of $k$. The reason behind this trend is the fact that while deriving these approximations, we ignored the effect of excluding the $k^{th}$ closest device (serving) from the field of possible interferers. The effect of this approximation is more prominent at the small values of $k$.


\subsubsection{Optimum number of simultaneously active link} As shown in Figs. \ref{Fig:  Coverage probability versus number of active D2D transmitters k-closest} and \ref{Fig: ASE versus number of active D2D transmitters k-closest}, both coverage probability and $\ase$ increase significantly when the distance between the typical and serving devices is reduced, i.e., the value of $k$ is reduced. More interestingly, we note that the optimum number of simultaneously active D2D-Txs that maximizes $\ase$ also increases when the content of interest is made available closer to the typical device (smaller value of $k$). This highlights the importance of smarter content placement and scheduling in clustered D2D networks. 


\section{Conclusion}
\label{Conclusion}
In this paper, we developed the first comprehensive framework for D2D networks that is capable of capturing the fact that devices engaging in D2D may have multiple proximate devices any of which can act as a serving device. Modeling the device locations by a Poisson cluster process, we considered two content availability strategies that dictate the choice of the serving device: (i) {\em uniform content availability:} each device chooses its serving device uniformly at random from its own cluster, and (ii) {\em $k$-closest content availability:} each device connects to its $k^{th}$ closest device from its own cluster. Using tools from stochastic geometry, we derived exact expressions and several easy-to-use approximations for the coverage probability and area spectral efficiency. As a key intermediate step, we characterized the distributions of the distances from the typical device to various intra- and inter-cluster devices. To the best of our knowledge, this work is also the first to derive these distance distributions for the Thomas cluster process. 

Based on the aforementioned analyses, several interesting system design insights can be drawn. The most important amongst them is the existence of the optimal number of D2D transmitters that must be simultaneously activated in each cluster to maximize the area spectral efficiency. This can be interpreted as the classical tradeoff between higher interference power and more aggressive frequency reuse. Interestingly, we show that this optimal number of D2D transmitters is not too sensitive to the system parameters in the uniform content availability case. On the other hand, it increases significantly when the content of interest is biased to lie closer to the receivers in the $k$-closest content availability strategy. This highlights the importance of smart content placement and scheduling in these networks. Several other insights regarding the performance of this system as a function of cluster center density and cluster variance are also provided.

This work has several extensions both from the D2D network modeling and cluster point processes perspectives. For D2D network modeling, it is of interest to conduct measurement campaigns to gather spatial data for various types of user clusters at different locations, such as coffee shops, restaurants, airpots, and other hotspot zones, using which we can do data fitting to find classes of cluster processes that closely resemble real-world clusters. From the analytical perspective, an important extension would be to consider {\em in-band network} where devices use the same spectrum that is used by the cellular system. This brings forth several interesting coexistence issues, especially in the context of cluster point processes~\cite{AfsDhiC2015a}. Another interesting extension is to incorporate the effects of various content placement and content delivery strategies in the proposed model. Incorporating the effect of scheduling in these clustered networks presents another avenue of innovation. From the cluster point process perspective, it is important to derive the distribution of distances from the typical device to other devices for other point processes of interest, such as the  Mat\'ern cluster process. While the ``trends'' predicted in this work for D2D networks  are not likely to change with the choice of the cluster point process, these distance distributions will facilitate analysis of more general wireless networks with clustered nodes.
   

\appendix
\subsection{Proof of Lemma \ref{lem:Intra_cluster location _typical_general}}
\label{App: proof of Ricain}  
Denote the location of a device chosen uniformly at random in the representative cluster by $\mathbf{z}=\mathbf{x}_0+\mathbf{y} \in \nbbR^2$ and its distance from the typical device by $s = \| \mathbf{z} \| = \| \mathbf{x}_0+\mathbf{y}\| \in \nbbR_+$, where $\mathbf{x}_0=(x_1,x_2)$ and $\mathbf{z}=(z_1,z_2)$. Using the fact that $\mathbf{y}$ is a zero mean complex Gaussian random variable, it is easy to argue that $\rx=\ncalN(x_1,\sigma^2)$ and $\ry=\ncalN(x_2,\sigma^2)$ conditioned on $\mathbf{x}_0 = (x_1,x_2)$. The conditional joint distribution of $z_1$ and $z_2$ can thus be obtained as
\begin{align*}
f_\mathbf{Z}(z_1, z_2|\mathbf{x}_0) =  \frac{1}{{2 \pi} \sigma^2}\exp \left(-\frac{(\rx-x_1)^2}{2 \sigma^2} -\frac{(\ry-x_2)^2}{2 \sigma^2}\right).
\end{align*}
Since we are interested in the distribution of $S = \| \mathbf{Z} \|$, we define $z_1 = s \cos \theta$ and $z_2 = s \sin \theta$, where $\theta = \tan^{-1}\left(\frac{z_2}{z_1} \right)$. Using the usual transformation approach, we first evaluate
\begin{align*}
f_{S,\Theta} (s,\theta | \nbx_0) = f_\mathbf{Z}(z_1, z_2|\mathbf{x}_0) \times \left| \partial \left( \frac{z_1, z_2}{s,\theta} \right) \right|,
\end{align*}
where $\partial \left( \frac{z_1, z_2}{s,\theta} \right) = \left| \begin{array}{cc} 
\frac{\partial z_1}{\partial s} & \frac{\partial z_1}{\partial \theta}\\
\frac{\partial z_2}{\partial s} & \frac{\partial z_2}{\partial \theta}
\end{array} \right| = s$, using which $f_{S,\Theta} (s,\theta | \nbx_0)$ can be expressed as
\begin{align*}
f_{S,\Theta} (s,\theta | \nbx_0) &= \frac{s}{{2 \pi} \sigma^2}\exp \left(-\frac{(s \cos \theta-x_1)^2}{2 \sigma^2} -\frac{(s \sin \theta-x_2)^2}{2 \sigma^2}\right)\\
&\stackrel{(a)}{=}  \frac{s}{ \sigma^2} \exp \left( - \frac{s^2 + \nu_0^2}{2 \sigma^2} \right) \frac{1}{2 \pi} \exp \left( \frac{s x_1 \cos \theta + s x_2 \sin \theta}{\sigma^2} \right),
\end{align*}
where $\nu_0 = \sqrt{x_1^2+x_2^2}=\|\nbx_0\|$. Now the conditional marginal distribution of $S$ is $f_S(s| \nu_0) = $
\begin{align*}
\int_{0}^{2\pi} f_{S,\Theta} (s,\theta | \nbx_0) {\rm d} \theta = \frac{s}{ \sigma^2} \exp \left( - \frac{s^2 + \nu_0^2}{2 \sigma^2} \right) \underbrace{\int_{0}^{2\pi} \frac{1}{2 \pi} \exp \left( \frac{s x_1 \cos \theta + s x_2 \sin \theta}{\sigma^2} \right) {\rm d} \theta}_{I_0\left(\frac{s \nu_0}{\sigma^2}\right)},
\end{align*}
where conditioning on $\nu_0=\|\nbx_0\|$ instead of $\nbx_0$ suffices. This completes the derivation of $f_S(s|\nu_0)$. 

\subsection{Proof of Lemma \ref{OK_Last_Lemma}}
\label{App: proof of inter intra-cluster distance}
The joint distribution  of order statistics corresponding to the distances from intra-cluster devices which are closer than serving device to the typical  i.e., $\Bxx_{\rm in}$, conditioned on $s_{(k)}=r$ and $\nu_0=\|x_0\|$ is
$
f_{S_{(1)},...S_{(k-1)}|S_{(k)},\nu_0}(s_{(1)},...,s_{(k-1)}|s_{(k)},\nu_0)= \frac{f_{S_{(1)},...,S_{(k)}|\nu_0}(s_{(1)},...,s_{(k)}|\nu_0)}{f_{S_{(k)}}(s_{(k)}|\nu_0)},$
where $s_{(i)}\leq s_{(i+1)}
$. Here, the joint distribution is $f_{S_{(1)},...,S_{(k)}|\nu_0}(s_{(1)},...,s_{(k)}|\nu_0)$ 
\begin{align*}
&\stackrel{(a)}{=} \int..\int M!\prod_{i=1}^M f_{S}(s_{(i)}|\nu_0) \nrmd s_{(k+1)}... \nrmd s_{(M)}\\
&= M! \prod_{i=1}^k f_{S}(s_{(i)}|\nu_0)\int..\int \prod_{i=k+1}^M f_{S}(s_{(i)}|\nu_0) \nrmd s_{(k+1)}... \nrmd s_{(M)} \\
&\stackrel{(b)}{=} M! \prod_{i=1}^k f_{S}(s_{(i)}|\nu_0) \frac{1}{(M-k)!}\prod_{i=k+1}^M \int_{s_{(k)}}^\infty  f_{S}(s|\nu_0) \nrmd s 
\\&= \frac{M!}{(M-k)!} \prod_{i=1}^k f_{S}(s_{(i)}|\nu_0)(1-F_{S}(s_{(k)}|\nu_0))^{M-k},
\end{align*}
where (a) follows from the definition of the joint density function of $M$ order statistics \cite[eqn.~(2.10)]{ahsanullah2005order} with sampling distribution $f_S(s|\nu_0)$, and (b) follows from the symmetry of $\prod_{i=k+1}^M f_{S}(s_{(i)}|\nu_0)$ \cite[eqn.~(2.12)]{ahsanullah2005order}. Thus,
\begin{align}\notag
 f_{S_{(1)},...S_{(k-1)}|S_{(k)},\nu_0}(s_{(1)},...,s_{(k-1)}|s_{(k)},\nu_0)&= \frac{\frac{M!}{(M-k)!} \prod_{i=1}^k f_{S}(s_{(i)}|\nu_0)(1-F_{S}(s_{(k)}|\nu_0))^{M-k}}{f_{S_{(k)}}(s_{(k)}|\nu_0)}\label{eq: prod of inter distance}\\\notag
 &\stackrel{(c)}{=}(k-1)!\prod _{i=1}^{k-1}\frac{f_{S}(s_{(i)}|\nu_0)}{F_{S}(s_{(k)}|\nu_0)}, \quad s_{(i)}\leq s_{(i+1)}
\end{align}
where (c) follows by using the serving link distribution $f_{S_{(k)}}(s_{(k)}|\nu_0)$ given by \eqref{eq: k-closet dis to typical} with $r=s_{(k)}$.  
Note that there are $(k-1)!$ number of possible permutations of the ordered set $\{s_{(i)}\}_{{i=1}:{k-1}}$. Since the interfering devices are chosen uniformly at random, the permutation term $(k-1)!$  does not appear in the conditional joint distribution of the unordered set  $\{s_i\}_{i=1}^{k-1}$,
\begin{align}
 f_{S_{1},...S_{k-1}|S_{(k)},\nu_0}(s_{1},...,s_{k-1}|s_{(k)},\nu_0)=\prod _{i=1}^{k-1}\frac{f_{S}(s_{i}|\nu_0)}{F_{S}(s_{(k)}|\nu_0)}, \quad s_i<{s_{(k)}}
\end{align}
The product of distributions with the same functional form $\frac{f_{S}(s_{i}|\nu_0)}{F_{S}(s_{(k)}|\nu_0)}$ in the above expression  implies that the random variables of the unordered set $\{s_i, \forall s_i \le s_{(k)}\}$ are i.i.d.
Therefore, the sampling distribution of a set of $k-1$ i.i.d. random variables in the set $\{w_{\rm in}=\|x_0+y\| \} $ is
\begin{equation} \label{eq: f_w_in}
 f_{W_{\rm in}}(w_{\rm in}|\nu_0, r)=\left\{
 \begin{array}{cc}
 \frac{f_{S}(w_{\rm in}|\nu_0)}{F_{S}(r|\nu_0)}, & w_{\rm in}<s_{(k)}=r\\
 0, & w_{\rm in}\geq r
 \end{array}\right..
 \end{equation}
 The derivation of $f_{W_{\rm out}}(w_{\rm out}|\nu_0,r)$ follows  on the same lines 
and is hence skipped.
 
\subsection{Proof of Lemma \ref{lem: lap intra typical}}
\label{App: Proof of Lemma  lap intra typical}
The Laplace transform of  intra-cluster  interference at the typical device is
\begin{align}
\nonumber
\ncalL_\mathrm{I_{TX-cluster}}(s|\nu_0 )=&\E\Big[\exp\Big(-s\sum_{\jx\in \Bxx \setminus \tx} \hyxx\|\yjx+x_0\|^{-\alpha}\Big)\Big]\\\nonumber
=&\E_{\Bxx}\Big[\prod_{\jx\in \Bxx \setminus \tx} \E_{\hyxx}\left[\exp\left(-s \hyxx\|\yjx+x_0\|^{-\alpha}\right)\Big]\right]\\\nonumber
\stackrel{(a)}{=}&\E_{\Bxx}\Big[\prod_{\jx\in \Bxx \setminus \tx} \frac{1}{1+s \|\yjx+x_0\|^{-\alpha}}  \Big]\\\nonumber
 \stackrel{(b)} =& \sum_{k=0}^{M-1}\Big(\int_{\R^2} \frac{1}{1+s \|\yjx+x_0\|^{-\alpha}} f_Y(\yjx) \nrmd \yjx \Big)^k  \times \underbrace{\frac{(\bar{m}-1)^k e^{-(\bar{m}-1)}}{k! \xi}}_{\P(K=k|K<M-1)}
\end{align}
where (a) follows from the definition of  Laplace transform along with the fact that channel gain $\hyxx \sim \exp(1)$ (Rayleigh fading assumption), (b) follows from the fact that the locations of intra-cluster devices, conditioned on $x_0\in \Phi_\nrmc$, are independent and expectation over number of interfering devices which are Poisson distributed conditioned on total being less than $M-1$, where $\xi=\sum_{j=0}^{M-1}\frac{(\bar{m}-1)^je^{-(\bar{m}-1)}}{j!}$.
 Now under the assumption  $\bar{m} \ll M$, 
 the Laplace transform of intra-cluster interference reduces to
\begin{align}
&\exp\Big(-(\bar{m}-1)\int_{\R^2} \frac{s \|\yjx+x_0\|^{-\alpha}}{1+s \|\yjx+x_0\|^{-\alpha} } f_Y(\yjx) \nrmd y \Big)\ \label{eq: Lap intra}\\ \notag
&\stackrel{(a)}{=} \exp\Big(-(\bar{m}-1)\int_0^\infty \frac{s w^{-\alpha}}{1+s w^{-\alpha}}f_W(w|\nu_0)\nrmd w\Big),\notag
\end{align}
where (a)  follows from the change of variable $\|x_0+\yjx\| \rightarrow w$, and   converting coordinates from Cartesian to polar by using distance distribution given by Corollary \ref{lem:intra_cluster_uniform_interference}.
\subsection{Proof of Corollary \ref{Cor: Lap intra typical low bound}}
\label{App: proof of intra typical low bound}
Using conditional Laplace transform of intra-cluster interference in \eqref{eq: Lap intra}, the Laplace transform of intra-cluster interference can be independently de-conditioned as follows:
 \begin{align}\label{Eq: term2}\notag
\ncalL_\mathrm{I_{TX-cluster}}(s)=&\int_{\nbbR^2}\exp\Big(-(\bar{m}-1)\int_{\R^2} \frac{s \|\yjx+x_0\|^{-\alpha}}{1+s \|\yjx+x_0\|^{-\alpha} } f_Y(\yj)\nrmd \yj \Big)f_Y(x_0)\nrmd x_0\\\notag
  &\stackrel{(a)}{=}\int_{\nbbR^2}\exp\left(-(\bar{m}-1)\int_{\nbbR^2}\frac{s\|z\|^{-\alpha}}{1+s\|z\|^{-\alpha}} f_Y(z-x_0) \nrmd  z\right)f_Y(x_0)\nrmd x_0\\\notag
    &\stackrel{(b)}{\ge}\exp\left(-(\bar{m}-1)\int_{\nbbR^2}\int_{\nbbR^2}\frac{s\|z\|^{-\alpha}}{1+s\|z\|^{-\alpha}} f_Y(z-x_0)f_Y(x_0)\nrmd x_0 \nrmd  z\right)\\\notag
    &\stackrel{(c)}{\ge} \exp\left(-(\bar{m}-1)\int_{\nbbR^2}\frac{s\|z\|^{-\alpha}}{1+s\|z\|^{-\alpha}} \sup_{z \in \nbbR^2}(f_Y*f_Y)(z) \nrmd  z\right)\\\notag
    &\stackrel{(d)}{=}\exp\left(-\frac{\bar{m}-1}{4 \pi \sigma^2}\int_{\nbbR^2}\frac{s\|z\|^{-\alpha}}{1+s\|z\|^{-\alpha}}\nrmd  z\right)\\\notag
   & =\exp\left(- \pi \frac{\bar{m}-1}{4  \pi \sigma^2} s^{\frac{2}{\alpha}}  \frac{2 \pi / \alpha}{\sin (2 \pi / \alpha )} \right)
\end{align}
 where (a) simply follows from change of variable $x_0+\yjx \rightarrow z$, (b) follows from Jensen's inequality, (c) follows from the convolution definition, and (d) from Young's inequality \cite{folland2013real}.
\subsection{Proof of Lemma \ref{Lem: Lap_Inter}}
\label{APP:Proof of Lemma 2}
The Laplace transform of the aggregate interference  from the inter-cluster interferers at the typical device, $\ncalL_{I_\mathrm{\Psi_\nrmm\setminus Tx-cluster}} (s)     =\E\Big[\exp\Big(-s\sum_{x\in \Phi_{\nrmc}\setminus x_0}\sum_{\jx \in \Bx} \hyx\|x+\yj\|^{-\alpha}\Big)\Big]$  is equal to
\begin{align}\notag
&=\E_{\Phi_{\nrmc}} \Big[\prod_{x\in \Phi_{\nrmc}\setminus x_0}\E_{\Bx} \Big[\prod_{\jx \in \Bx}\E_{\hyx}\left[\exp(-s \hyx\|x+\yj\|^{-\alpha})\right]\Big]\Big] 
   \\\notag &\stackrel{(a)}{=}\E_{\Phi_{\nrmc}}\Big[\prod_{x\in \Phi_{\nrmc}\setminus x_0}\E_{\Bx} \Big[\prod_{\jx \in \Bx} \frac{1}{1+s\|x+\yj\|^{-\alpha}}\Big]\Big]\\\notag
 &\stackrel{(b)}=\E_{\Phi_{\nrmc}}\Big[\sum_{k=0}^{M} \Big(\int_{\R^2} \frac{1}{1+s \|\yjx+x\|^{-\alpha}} f_Y(\yjx) \nrmd \yjx \Big)^k 
 \P(K=k|K<M)\Big]\\\notag
&\stackrel{(c)} = \exp\Big(- \lambda_\nrmc\int_{\R^2} \Big(1-\sum_{k=0}^{M} \Big(\int_{\R^2} \frac{1}{1+s \|\yjx+x\|^{-\alpha}} f_Y(\yjx) \nrmd \yjx \Big)^k \frac{\bar{m}^k e^{-\bar{m}}}{k! \eta}\Big)\nrmd x\Big)\\ 
 &\stackrel{(d)} =\exp\Big(-2 \pi \lambda_\nrmc\int_0^\infty \Big(1-\sum_{k=0}^{M} \Big ( \int_0^\infty \frac{1}{1+s u^{-\alpha}}f_{U}(u|\nu)\nrmd u\Big)^k  \times \frac{\bar{m}^k e^{-\bar{m}}}{k! \eta}\Big)\nu \nrmd \nu \Big),
\end{align}     
where (a) follows from expectation over     $\hyx\sim\exp(1)$, (b) follows from the expectation over number of interfering devices per cluster, (c) follows from the probability generating functional (PGFL) of PPP \cite{stoyanstochastic}, and (d) follows by converting from Cartesian to polar coordinates. Now, under the assumption of $\bar{m}\ll M$, the Laplace transform of inter-cluster interference is:   
 \begin{align*}  
     &\ncalL_{I_\mathrm{\Psi_\nrmm\setminus Tx-cluster}} (s)  \stackrel{}{=}\exp\Big(-2 \pi \lambda_\nrmc\int_0^\infty \Big(1-\exp\Big(-\bar{m}\int_0^\infty \frac{s u^{-\alpha}}{1+s u^{-\alpha}} f_U(u|\nu)\nrmd u \Big)\Big)\nu \nrmd \nu\Big).
\end{align*}

\subsection{Proof of Corollary \ref{cor: worst case inter lower}}
\label{App: Proof of Cor1}
The lower bound can be derived as follows:
\begin{align}
    \ncalL_{I_\mathrm{\Psi_\nrmm\setminus Tx-cluster}}(s)
    &\stackrel{(a)}{ \ge}
    \exp\Big(-2 \pi \lambda_\nrmc\int_0^\infty \Big(\bar{m} \int_0^\infty \frac{s u^{-\alpha}}{1+s u^{-\alpha}} f_U(u|\nu)\nrmd u \nu \nrmd \nu\Big)\Big)\notag\\
   &\stackrel{(b)}{=}\exp\Big(-2 \pi \lambda_\nrmc \Big(\bar{m} \int_0^\infty \frac{s u^{-\alpha}}{1+s u^{-\alpha}}u\nrmd u \Big)\Big)\\\notag
          &=  \exp\left(- \pi \lambda_\nrmc \bar{m} s ^{2/\alpha} \frac{2 \pi / \alpha}{\sin (2 \pi / \alpha )} \right)\notag
\end{align}
where (a) follows from the exponential Taylor series expansion and the fact that $1-\exp(-a x)\le a, \: a\ge 0$, and (b)  follows from the Rician distribution property that $\int_0^\infty f_U(u|\nu) \nu \nrmd \nu=u$, where $f_U(u|\nu)$ is given by  Lemma \ref{Lem: Inter cluster distance}.
\subsection{Proof of Lemma \ref{lem: Laplace intra k-closest}}
\label{App: Laplace intra k-closest}
The Laplace transform of interference from disjoint sets of intra-cluster interferers $\ncalB^{x_0}_{\rm in}$ and $\ncalB^{x_0}_{\rm out}$ is $\ncalL_{ I_{\mathrm{Tx-cluster}}}(s,r|\nu_0)$
\begin{align*}
& =\E \left[ \exp\left(- s\left( \sum_{\jx\in {\mathcal{B}}^{x_0}_{\rm in} }  \hyxx\|x_0+\yjx\|^{-\alpha}+\sum_{\jx\in {\mathcal{B}}^{x_0}_{\rm out} }  \hyxx\|x_0+\yjx\|^{-\alpha}
\right)\right) \right]\\
&\stackrel{(a)}{=}\E \bigg[ \prod_{\jx\in {\mathcal{B}}^{x_0}_{\rm in} } \exp\left(- s \hyxx\|x_0+\yjx\|^{-\alpha} \right) \prod_{\jx\in {\mathcal{B}}^{x_0}_{\rm out} }  \exp \left( - s \hyxx\|x_0+\yjx\|^{-\alpha}
\right) \bigg]\\
&\stackrel{(b)}{=}\E \bigg[ \prod_{\jx\in {\mathcal{B}}^{x_0}_{\rm in} } \frac{1}{1+s \|x_0+\yjx\|^{-\alpha} } \prod_{\jx\in {\mathcal{B}}^{x_0}_{\rm out} }  \frac{1}{1+s \|x_0+\yjx\|^{-\alpha}
} \bigg]\\
&\stackrel{(c)}{=}\E\bigg[\sum_{l=0}^{g_{\rm m}} \underbrace{{n  \choose l}  \left(\frac{k-1}{M-1}\right)^{l}  \left(\frac{M-k}{M-1}\right)^{n-l} \frac{1}{I_{1-\frac{k-1}{M-1}}(n-g_{\rm m},g_{\rm m}+1)}}_{\P(l|L\le g_{\rm m})} 
\\ & \bigg(\underbrace{\int_0^r \frac{1}{1+s w_{\rm in}^{-\alpha} }f_{W_{\rm in}}(w_{\rm in}|\nu_0, r) \nrmd w_{\rm in}}_{\ncalK_{\rm in}(s, r|\nu_0)} \bigg)^l  \bigg(\underbrace{\int_r^{\infty} \frac{1}{1+s w_{\rm out}^{-\alpha} }f_{W_{\rm out}}(w_{\rm out}|\nu_0,r) \nrmd w_{\rm out}}_{\ncalK_{\rm out}(s, r|\nu_0)} \bigg)^{n-l}\bigg]\\
&\stackrel{(d)}{=}\sum_{n=0}^{M-1} \sum_{l=0}^{g_{\rm m}} \P(l|L\le g_{\rm m})   \ncalK_{\rm in}(s, r|\nu_0)^l  \ncalK_{\rm out}(s, r|\nu_0) ^{n-l}  \frac{(\bar{m}-1)^n e^{-(\bar{m}-1)}}{n! \xi},
\end{align*}
with $g_{\rm m}=\min(n,k-1)$, and $\xi=\sum_{j=0}^{M-1}\frac{(\bar{m}-1)^je^{-(\bar{m}-1)}}{j!}$, where  (a) follows from the fact that $\ncalB^{x_0}_{\rm in}$ and $\ncalB^{x_0}_{\rm out}$ are two disjoint sets, (b) follows from the definition of Laplace transform and the fact that $\hyxx\sim \exp(1)$, (c) follows from the expectation over the number of devices in $\ncalB^{x_0}_{\rm in}$ and $\ncalB^{x_0}_{\rm out}$, where the number of devices in $\ncalB^{x_0}_{\rm in}$ is truncated binomial distribution due to the fact that $l\le\min(n,k-1)$, along with the fact that distances from interfering devices to the typical device conditioned on $r$ and $\nu_0$ in each set are i.i.d., and (d) from the fact that number interfering devices is Poisson distributed with mean $\bar{m}-1$ conditioned on the total being less than $M-1$.%

\section*{Acknowledgment}
The authors would like to thank  Zeinab Yazdanshenasan and  SaiDhiraj Amuru for their helpful feedback.

\bibliographystyle{IEEEtran}
\bibliography{ref,Dhillon_MA-J1}
\end{document}